\newcommand{\id}{\mathbb{I}}
\newcommand{\ket}[1]{|#1\rangle}
\newcommand{\mc}[1]{\mathcal{#1}}
\newcommand{\ct}{^\dagger}
\newcommand{\tn}[1]{^{\otimes #1}}
\newtheorem{theorem}{Theorem}
\newcounter{notecounter}
\newcommand{\code}[1]{\texttt{#1}}
\newcommand{\boardstate}{\code{BOARDSTATE}~}
\newcommand{\sw}{$\sqrt{\mathrm{SWAP}}$~}
\newcommand{\cphase}{$\mathrm{CPHASE}$~}
\newcommand{\cnot}{$\mathrm{CNOT}$~}
\newcommand{\ns}{\mathrm{ns}}
\newcommand{\controlb}{*!<0em,.025em>-=-<.08em>{\blacklozenge}}
\newcommand{\ctrlb}[1]{\controlb \qwx[#1] \qw}
\renewcommand{\mod}[1]{~\mathrm{mod}~#1}
\newcommand{\brif}{\;\;\;\;\;\;\text{if}\;\;\;\;\;}
\crefname{algocf}{alg.}{algs.}
\Crefname{algocf}{Algorithm}{Algorithms}
\crefname{algocfline}{alg.}{algs.}
\Crefname{algocfline}{Algorithm}{Algorithms}
\begin{document}
\title{Quantum error correction in crossbar architectures}
\author{Jonas Helsen}
\affiliation{QuTech, Delft University of Technology, Lorentzweg 1, 2628 CJ Delft, The Netherlands}
\author{Mark Steudtner}
\affiliation{QuTech, Delft University of Technology, Lorentzweg 1, 2628 CJ Delft, The Netherlands}
\affiliation{Instituut-Lorentz, Universiteit Leiden, P.O. Box 9506, 2300 RA Leiden, The Netherlands}
\author{Menno Veldhorst}
\affiliation{QuTech, Delft University of Technology, Lorentzweg 1, 2628 CJ Delft, The Netherlands}
\affiliation{Kavli Institute of Nanoscience, Delft University of Technology, P.O. Box 5046, 2600 GA Delft, The Netherlands.}
\author{Stephanie Wehner}
\affiliation{QuTech, Delft University of Technology, Lorentzweg 1, 2628 CJ Delft, The Netherlands}
\date{\today}
\begin{abstract}
\noindent A central challenge for the scaling of quantum computing systems is the need to control all qubits in the system without a large overhead. A solution for this problem in classical computing comes in the form of so called crossbar architectures. Recently we made a proposal for a large scale quantum processor~[Li et al. arXiv:1711.03807 (2017)] to be implemented in silicon quantum dots. This system features a crossbar control architecture which limits parallel single qubit control, but allows the scheme to overcome control scaling issues that form a major hurdle to large scale quantum computing systems. In this work, we develop a language that makes it possible to easily map quantum circuits to crossbar systems, taking into account their architecture and control limitations. Using this language we show how to map well known quantum error correction codes such as the planar surface and color codes in this limited control setting with only a small overhead in time. We analyze the logical error behavior of this surface code mapping for estimated experimental parameters of the crossbar system and conclude that logical error suppression to a level useful for real quantum computation is feasible.
\end{abstract}
\maketitle

\section{Introduction}\label{sec:introduction}

When attempting to build a large scale quantum computing system a central problem, both from experimental and theoretical perspectives, is what might be called the interconnect problem. This problem, which also exists in classical computing, arises when computational units (e.g.~qubits in quantum computers, transistors in classical computers) are densely packed such that there is not enough room to accommodate individual control lines to every unit. A solution to this problem, which is commonplace in classical computing systems, is a so called `crossbar architecture'. In this class of computing architecture we do not draw a control line to every qubit but rather organize computational units in a grid with control lines addressing full rows and columns of this grid. Control effects then happen at the intersection of column and row lines. In this way, using $N$ control lines $O(N^2)$ computational units can be addressed. This makes it possible to scale the system to a large number of qubits. The price to pay for this is a reduced ability to perform operations on different units in the grid in parallel. For classical systems this is not a fundamental problem, but when the computational units are qubits, whose information decays over time, parallelism becomes absolutely essential. This introduces a formidable roadblock for the development of crossbar systems for quantum computing systems. Nevertheless various crossbar architectures for quantum computers have been proposed in the past~\cite{colless2014control,vandersypen2017interfacing,hill2015surface,veldhorst2017crossbar,veldhorst2016silicon}. Recently~\cite{veldhorst2017crossbar}~we proposed a quantum computing platform based on spin qubits in silicon quantum dots featuring a crossbar architecture. This architecture features compatibility with modern silicon manufacturing techniques and in combination with recent advances in controlling quantum dot qubits and the inherent long coherence times of spin qubits in silicon we expect it to be a formidable step forwards in creating large scale quantum computing devices.\\

\noindent Any realistic quantum computing device, including the one we propose in~\cite{veldhorst2017crossbar}, will suffer from noise processes that degrade quantum information. This noise can be combated by quantum error correction~\cite{gottesman1998theory,lidar2013quantum}, where quantum information is encoded redundantly in such a way that errors can be diagnosed and remedied as they happen without disturbing the encoded information. Many quantum error correction codes have been developed over the last two decades and several of them have desirable properties such as high noise tolerance, efficient decoders and reasonable implementation overhead. Of particular note are the planar surface~\cite{dennis2002topological} and color codes~\cite{bombin2006topological}, which have the nice property that they can be implemented in quantum computing systems where only nearest-neighbor two-qubit gates are available.\\

\noindent However these codes, and all other quantum error correction codes, were developed under the (often implicit) assumption that all physical qubits participating in the code can be controlled individually and in parallel. For large (read: comprising many qubits) error correction codes this introduces a tension between the needs of the error correction code and the control limitations for large systems mentioned above. While practical large-scale quantum computers most likely pose control limitations, surprisingly little work has been done in this area~\cite{versluis2017scalable}. Here we investigate the minimal amount of parallel control resources needed for quantum error correction and focus in particular on crossbar architectures. In \cref{fig:layout,fig:spurious} we summarize the layout and control limitations of the architecture in \cite{veldhorst2017crossbar}. Overcoming these limitations motivates the current work.

\subsection{Contributions}\label{subsec:contributions}

\noindent{\bf Analysis of the crossbar system}\\

\noindent We analyze the crossbar architecture we propose in~\cite{veldhorst2017crossbar}. We give a full description of the layout and control characteristics of the architecture in a manner accessible to non-experts in quantum dots. We develop a language for describing operations in the crossbar system. Of particular interest here are the regular patterns (see e.g.~\cref{subsec:configurations}) that are implied by the crossbar structure. These configurations provide an abstraction on which we build mappings of quantum error correction codes (see below) This analysis is particular to the system in~\cite{veldhorst2017crossbar} but we believe many of the considerations to hold for more general crossbar architectures.\\

\noindent {\bf An efficient algorithms for control on crossbar architectures}\\

\noindent We develop an algorithm for moving around qubits (shuttling) on crossbar architectures. We show that the task of shuttling qubits in parallel can be described using a matrix taking value in an idempotent monoid. The control algorithm then reduces to finding independent columns of this matrix, for a suitable notion of independence. This algorithm in principle allows the straightforward mapping of more complicated quantum algorithms which require long-range operations, with little operational overhead. We also expect this algorithm to be applicable to the control of more general crossbar architectures. We also sketch an algorithm for parallel two-qubit interactions in crossbar systems which produce \emph{optimal} control sequences. This algorithm is based on computing the Schmidt-normal form of matrices with entries in the principal ideal domains $\mathbb{Z}_2$ and $\mathbb{Z}_4$.\\

\noindent {\bf Mapping of surface and color codes}\\

\noindent We map the planar surface code and the $6.6.6.$ (hexagonal) and $4.8.8.$ (square-octagonal) color codes~\cite{bombin2006topological} to the crossbar architecture, taking into account its limited ability to perform parallel quantum operations. The tools we develop for describing the mapping, in particular the configurations described in \cref{subsec:configurations}, should be generalizable to other quantum error correction codes and general crossbar architectures.\\

\noindent {\bf Analysis of the surface code logical error}\\

\noindent Due to experimental limitations the mappings mentioned above might not be attainable in near term devices. Therefore we adapt the above mappings to take into account practical limitations in the architecture~\cite{veldhorst2017crossbar}. In this version of the mapping the length of an error correction cycle scale with the distance of the mapped code. This means the mapping does not allow for arbitrary logical error rate suppression. Therefore we analyze the behavior of the logical error rate with respect to estimated experimental error parameters and find that the logical error rate can in principle be suppressed to below $10^{-20}$ (an error rate comparable to the error rate of classical computers~\cite{tezzaron2004soft}), allowing for practical quantum computation to take place.\\

\noindent Our work raises several interesting theoretical questions regarding the mapping of quantum algorithms to limited control settings, see \cref{sec:conclusion}.

\begin{figure*}[t]
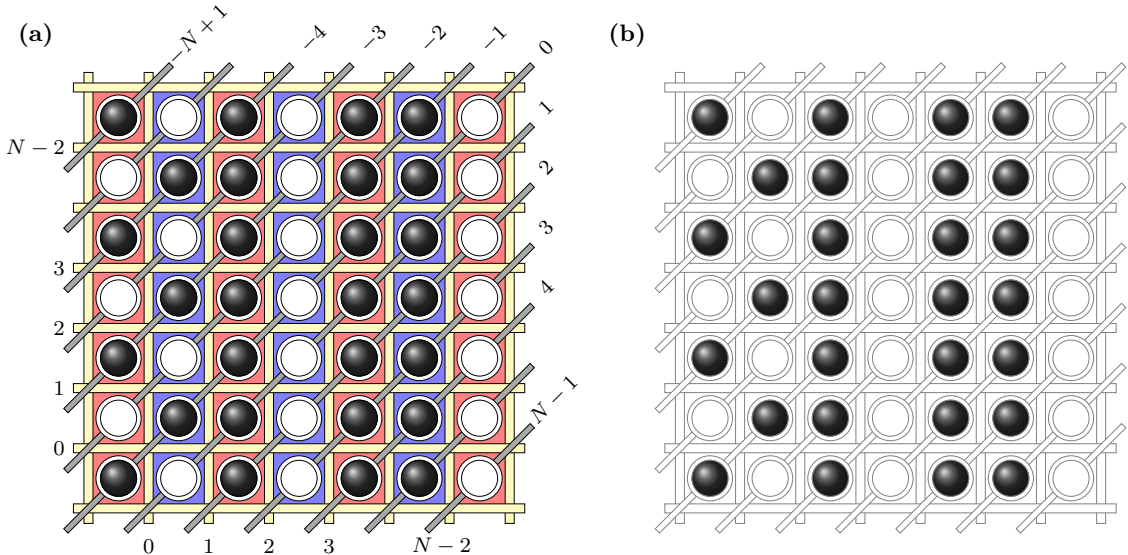

\include{layout}
\caption{ \textbf{(a)} A schematic of the Quantum Dot Processor (QDP) that we propose~\cite{veldhorst2017crossbar}, see \cref{subsec:layout} for details. The white circles correspond to quantum dots, with the black filling denoting the presence of electrons, whose spins are employed as qubits. All dots are embedded in either a red or a blue column. Single qubit gates can only be applied globally on either all qubits in all blue columns or all qubits in all red columns.  The vertical, horizontal (both yellow) and diagonal lines (gray) are a feature of this crossbar scheme.  The horizontal and vertical gate lines implement barriers that isolate the dots from each other. The diagonal lines simultaneously control the dot potentials of all dots coupled to one line. Quantum operations are effected by pulsing individuals lines. In order to perform two qubit operations on qubits in adjacent dots, one typically  needs to lower the barrier that separates them, and change the dot potentials by operating the diagonal lines. Note that two-qubit gates applied to adjacent qubits in the same column are inherently different (by nature of the QDP design) from two-qubit gates between two adjacent qubits in the same row. With the control lines, we can also move qubits from dot to dot and measure them. However, since each control line influences $O(N)$ qubits, individual qubit control, as well as parallel operation on many qubits is limited.  \textbf{(b)} Abstracted version of the QDP scheme representing the classical \boardstate matrix.  The \boardstate holds no quantum information, but encodes where qubits are located on the QDP grid.}\label{fig:layout}
\end{figure*}

\subsection{Outline}
In~\cref{sec:the quantum dot processor} we introduce the architecture we proposed in~\cite{veldhorst2017crossbar}. We forgo an explanation of the physics and focus on the abstract control aspects of the system (explaining them in a largely self-contained manner accessible to non experts in quantum dot physics). We introduce classical helper objects such as the \boardstate which will aid later developments. We discuss one- and two-qubit operations, measurements, and qubit shuttling. In section~\cref{sec:An assembly language for crossbar control} we focus on parallel operations. We discuss difficulties inherent in parallel operation in a crossbar system and develop an algorithm for dealing with them efficiently. We also introduce several \boardstate configurations which feature prominently in quantum error correction mappings and describe how to reach them efficiently by parallel shuttling. In~\cref{sec:error correction codes} we give a quick introduction to quantum error correction with a particular focus on the planar surface code and the $4.8.8.$ and $6.6.6.$ color code. In~\cref{sec:surface code implementation} we bring together all previous sections and devise a mapping of the planar surface code to the crossbar architecture. This we continue in~\cref{sec:color code implementation} for the $6.6.6.$ and $4.8.8.$ color codes. Finally in~\cref{sec:discussion} we analyze in detail the logical error probability of the surface code mapping as a function of the code distance and estimated error parameters of the crossbar system. 

\section{The quantum dot processor}\label{sec:the quantum dot processor}

In this section we will give an overview of the quantum dot processor (QDP) architecture as proposed in~\cite{veldhorst2017crossbar}. We will use this architecture as a concrete realization of the more general idea of quantum crossbar architectures. We will focus not so much on the details of the implementation but rather focus on abstract operational properties of the system as they are relevant for our purposes. The basic organization of the QDP is that for an $N\times N$ grid of qubits interspersed with control lines that effect operations on the qubits. The most notable feature of the QDP (and crossbar architectures in general) is the fact that any classical control signal sent to a control line will be applied simultaneously to all qubits adjacent to that control line. This means that every possible classical instruction applied to the QDP will affect $O(N)$ qubits (these qubits will not necessarily be physically close to each other). This has important consequences for the running of quantum algorithms on the QDP (or any crossbar architecture) that must be taken into account when compiling these algorithms to hardware level instructions. Notably it places strong restrictions on performing quantum operations in parallel on the QDP. To deal with these restrictions it is important to have a good understanding of how operations are performed on the QDP. It is for this reason that we begin our study of the QDP with an examination of its control structure at the hardware level. We describe the physical layout of the system and develop nomenclature for the fundamental control operations. This nomenclature might be called the `machine code' of the QDP. From these basic instructions we go on to construct all elementary operations that can be applied to qubits in the QDP. These are quantum operations, such as single qubit gates, nearest-neighbor two-qubit gates and qubit measurements but also a non-quantum operation called coherent shuttling which does not affect the quantum state of the QDP qubits but changes their connectivity graph (i.e.~which qubits can be entangled by two-qubit gates). All of these operations are restricted by the nature of the control architecture in a way that gives rise to interesting patterns (\cref{subsec:configurations}) and which we will more fully examine in \cref{sec:An assembly language for crossbar control}.

\subsection{Layout}\label{subsec:layout}

A schematic overview of the QDP architecture is given in~\cref{fig:layout}, where qubits (which are electrons, denoted by black balls) occupy an array of $N\times N$ quantum dots (hereafter often referred to as sites). The latter are denoted by white sites when empty, since they either are occupied by a qubit or not. We will label the dots by tuples containing row and column indices $(i, j) \in [0:N-1]^{\times 2}$ (beginning from the \emph{bottom left} corner), such that a single qubit state $\ket{\psi}$ living on the $(i,j)$'th site will be denoted by $\ket{\psi}_{(i,j)}$. We assume the qubits to be initialized in the state $\ket{0}$. For future reference we note that $\ket{0}$ corresponds to the spin-up state and $\ket{1}$ to the spin-down state of the electron constituting the qubit.\\

\noindent Typically we will work in a situation where half the sites are occupied by a qubit and half the sites are empty (as seen in~\cref{fig:layout} (a)). Because (as we discuss in \cref{subsubsec:shuttling}) the qubits can be moved around on the grid and the two-qubit gates depend on the filling of the grid, it is important to keep track of which sites contain qubits and which ones do not. This can be done efficiently in classical side-processing. To this end we introduce the \boardstate object. \boardstate consists of a binary $N\times N $ matrix with a $1$ in the $(i,j)$'th place if the $(i,j)$'th site contains an electron (qubit) and a $0$ otherwise. The \boardstate does not contain information about the qubit state $\ket{\psi}_{(i,j)}$, only about the electron occupation of the grid. A particular \boardstate is illustrated in the left panel of~\cref{fig:layout}.\\

\noindent We now turn to describing the control structures that are characteristic for this architecture. As a first feature, we would like to point out that each site is either located in a red or a blue region in~\cref{fig:layout} (left panel). The blue (red) columns correspond to regions of high (low) magnetic fields, which plays a role in the addressing of qubits for single qubit gates. We will denote the set of qubits in blue columns (identified by their row and column indices) by $\mc{B}$ and the set of qubits in red columns by $\mc{R}$.\\

\noindent Much finer groups of sites can be addressed by the control lines that run through the grid. The crossbar architecture features control lines that are connected to $O(N)$ sites. At the intersections of these control lines individual sites and qubits can be addressed. This means that using $O(N)$ control lines $O(N^2)$ qubits can be controlled. As seen in~\cref{fig:layout} the rows and columns of the QDP are interspersed with horizontal and vertical lines (yellow), as a means to control the tunnel coupling between adjacent sites. We refer to those lines as barrier gates, or barriers for short. Each line can be controlled individually, but a pulse has an effect on all $O(N)$ qubits adjacent to the line. Another layer of control lines is used to address the dots itself rather than the spaces in between them. The diagonal gate lines (gray), are used to regulate the dot potential. We label the horizontal and vertical lines by an integer running from $0$ to $N-2$ and the diagonal lines with integers running from $-(N-2)$ to $N-2$ where the $-(N-2)$'th line is the top-left line and increments move towards the bottom right (see~\cref{fig:layout}(a)). Next we describe how these control lines can be used to effect operations on the qubits occupying the QDP grid.

\subsection{Control and addressing}\label{subsec:control and operations}
As described above, the QDP consists of quantum dots interspersed with barriers and connected by diagonal lines. For our purposes these can be thought of as abstract control knobs that apply certain operations to the qubits. In this section we will describe what type of gates operations are possible on the QDP. We will not concern ourselves with the details of parallel operation until \cref{sec:An assembly language for crossbar control}.\\

\noindent There are three fundamental operations on the QDP which we will call the ``grid operations". These operations are ``lower vertical barrier" (\code{V}), ``lower horizontal barrier" (\code{H}) and ``set diagonal line" (\code{D}). The first two operations are essentially binary (on-off) but the last one (\code{D}) can be set to a value $t\in [0:T]$ where $T$ is a device parameter. (At the physical level this corresponds to how many clearly distinct voltages we can set the quantum dot plunger gates~\cite{veldhorst2017crossbar}). Although the actual pulses on those gates differ by amplitude and duration between the different gates and operations, this notation gives us a clear idea which lines are utilized. This can be done because realistically one will not interleave processes in which pulses have such different shapes. 
We can label the grid operations by mnemonics (which in a classical analogy we will call OPCODES) as seen in~\cref{fig:grid operations}. These OPCODES are indexed by an integer parameter that indicates which control line it applies to. We count horizontal and vertical lines starting at zero from the lower left corner of the grid (see \cref{fig:layout}). Note that the lines at the boundary of the grid are never adressed in our model and are thus not counted.\\

We indicate parallel operation of a collection of OPCODES by ampersands, e.g.~\code{D[1]\&H[2]\&D[5]}.
We also define inherently parallel versions (in \cref{fig:parallel grid operations}) of the basic OPCODES that take as input a binary vector \code{V} of length $N$ (for the diagonal line this is a $T$-valued vector of length $N$)
\begin{figure}[ht]\label{fig:grid operations}
\centering
\begin{tabular}{|c|c|}
\hline
OPCODE & Effect \\
\hline
\code{V[i]} & Lower vertical barrier at index \code{i} \\
\code{H[i]} & Lower horizontal barrier at index \code{i}\\
\code{D[i][t]} & Set diagonal line at index \code{i} to value \code{t}\\
\hline
\end{tabular}
\end{figure}
\begin{figure}[ht]\label{fig:parallel grid operations}
\centering
\begin{tabular}{|c|c|}
\hline
OPCODE &Effect \\
\hline
\code{V[V]} & Set vertical barrier to \code{V}$(i)$, $\forall i\!\in \! [0 \!:\!  N \!-  \!2 ]$\\
\code{H[V]} & Set horizontal barrier to \code{V}$(i)$, $\forall i\!\in\! [0\!:\!\!N\!-\!2]$ \\
\code{D[V]} & Set diagonal at height \code{V}$(i)$, $\forall i\!\in \![-\!N \!\!+\! \!2\!:\!\!N\!\!-\!\!2]$ \\
\hline
\end{tabular}
\end{figure}
These grid operations can be used to induce some elementary quantum gates and operations on the qubits in the QDP. Below we describe these operations.

\subsection{Elementary operations}\label{subsec:elementary operations}
Here we give a short overview of the elementary operations available in the QDP. We will describe basic single qubit gates, two-qubit gates, the ability to move qubits around by coherent shuttling~\cite{fujita2017coherent} and a measurement process through Pauli Spin Blockade (PSB)~\cite{hanson2007spins}. All of these operations are implemented by a combination of the grid operations defined in \cref{fig:grid operations}, and always have a dependence on the \boardstate.

\subsubsection{Coherent qubit shuttling}\label{subsubsec:shuttling}
An elementary operation of the QDP is the coherent qubit shuttling~\cite{fujita2017coherent,taylor2005fault}, of one qubit to an adjacent, empty site. That means that an electron (qubit) is physically moved to the other dot (site) utilizing at least one diagonal line and the barrier between the two sites. It thereby does not play a role whether the shuttling is in horizontal (from a red to a blue column or the other way around) or vertical direction (inside the same column). However, the shuttling in between columns results in a $Z$ rotation, that must be compensated by timing operations correctly, see~\cite{veldhorst2017crossbar} for details. This $Z$ rotation can also by used as a local single qubit gate, see~\cref{subsubsec:single qubit rotations}. The operation is dependent on the \boardstate by the prerequisite that the site adjacent to the qubit to must be empty. Collisions of qubits are to be avoided, as those will lead to a collapse of the quantum state (see however the measurement process in~\cref{subsubsec:measurement}). We now describe the coherent shuttling as the combination of grid operations. \\

\noindent We lower the vertical (or horizontal) barrier in between the two sites and instigate a `gradient' of the on-site potentials of the two dots. That is, the diagonal line of the site containing the qubit must be operated at $t\in [0:T]$ while the line overhead the empty site must have the potential $\hat{t}\in [0:T]$ with $\hat{t}=t-1$. Note that this implies it might not be operated at all (if it is already at the right level). We will subsequently refer to the combination of a lowered barrier and such a gradient as a ``flow". A flow will in general be into one of the four directions on the grid. We define the commands \code{VS[i,j,k]} (vertical shuttling) and \code{HS[i,j,k]} (horizontal shuttling). The command \code{VS[i,j,k]} shuttles a qubit at location $(i,j)$ to $(i+1,j)$ for $k=1$ (upward flow) and shuttles a qubit at location $(i+1,j)$ to $(i,j)$ for $k=-1$ (downward flow). Similarly, the command \code{HS[i,j,k]} shuttles a qubit at location $(i,j)$ to $(i,j+1)$ for $k=1$ (rightward flow) and shuttles a qubit at location $(i,j+1)$ to $(i,j)$ for $k=-1$ (leftward flow). See \cref{tab:operation opcodes} for a summary of these OPCODES.\\

\noindent Using only these control lines, we can individually select a single qubit to be shuttled. However, when attempting to shuttle in a parallel manner, we have to be carefully take into account the effect that the activation of several of those lines has on other locations. We will deal with this in more detail in \cref{subsec:parallel shuttle operations}.

\begin{figure*}[t]
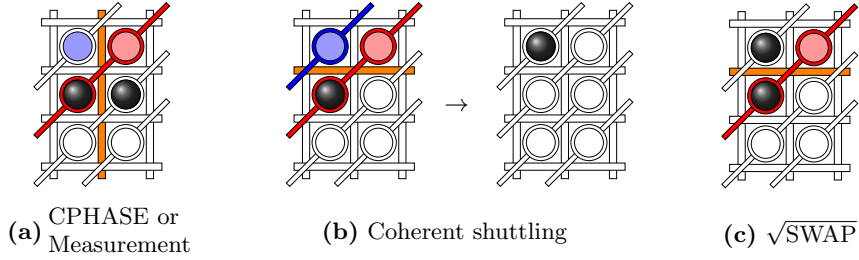

\include{elementary}
\caption{Schematic representation of the use of control lines for the native operations in the QDP.  Qubits are represented by black balls on the grid. Red or blue colored dots are empty, but their dot potentials change due to an operation of the diagonal line they are coupled to.  Empty dots unaffected by  grid operations are white.  \textbf{(a)}  Grid operations necessary to perform a measurement or a two-qubit effective \cphase gate between the two qubits. The orange barrier between the two qubits is lowered, and the dot potentials along the red diagonal line is raised by pulsing the latter. Note that the empty, red colored dot is also effected by that action, and its barrier to the adjacent dot is lowered. If the two dots in the upper row were not empty, side effects would occur. See \cref{subsubsec:multi-qubit rotations} for more information on the nature of the two-qubit gates.  Note also that the readout procedure of the measurement requires us to have the upper dot (light blue) empty, if the barrier gate between them is used for readout. \textbf{(b)} Vertical shuttling of a qubit (to the top dot) requires to lower the orange barrier.  One can than either raise the dot potentials on the red diagonal line, or lower the potential on the blue dot  by addressing  the blue diagonal.   \textbf{(c)} Schematic representation of the control lines used for performing two-qubit \sw gate between the two qubits on that grid. The orange barrier is lowered and the red diagonal line is utilized to detune dot potentials.  }
\label{fig:basic operations}
\end{figure*}

\subsubsection{Measurement and readout}\label{subsubsec:measurement}
The QDP allows for local single qubit measurements in the computational basis $\ket{0},\ket{1}$. We can measure a qubit by attempting to shuttle it to a horizontally adjacent site that is already occupied by an ancilla qubit and then detecting whether the shuttling was successful. This process is called Pauli Spin Blockade (PSB) measurement~\cite{hanson2007spins,veldhorst2017crossbar}. However, the QDP's ability to perform this type of qubit measurements is limited by three factors.\\

\noindent Firstly, the measurement requires an ancilla qubit horizontally adjacent to the qubit to be measured. This ancilla qubit must be in a known computational basis state. Moreover, if the ancilla qubit is in the state $\ket{0}$ the ancilla qubit must be in the set $\mc{B}$ (blue columns in \cref{fig:layout}) while the qubit to be measured must be in the set $\mc{R}$ (red columns in \cref{fig:layout}). On the other hand, if the ancilla qubit is in the state $\ket{1}$ the ancilla qubit must be in the set $\mc{R}$ while the qubit to be measured is in the set $\mc{B}$. This means that when an qubit-ancilla pair is in the wrong configuration we must first shuttle both qubits one step to the left (or both the the right). Note that this takes two additional shuttling operations, which means it is important to keep track at all times where on the \boardstate the qubit and its ancilla are or else incur a shuttling overhead (which might become significant when dealing with large systems and many simultaneous measurements). We will deal with this problem of qubit-ancilla pair placement in more detail in \cref{subsec:parallel measurements}. \\

\noindent Secondly, assuming that the qubit-ancilla pair is in the right configuration to perform the PSB process one still needs to perform a shuttling-like operation to actually perform the measurement. On the technical level, the operation is different from coherent shuttling, but the use of the lines is similar with the difference that after the readout, the shuttling-like operation is undone by the use of the same lines as before - which are not necessarily the lines one would use to reverse a coherent shuttling operation.  However, scheduling measurement events on the QDP is at least as hard as the scheduling of shuttle operations discussed above. Depending on the state the qubit is in, it will now assume one of two possible states that can be distinguished by their charge distribution.\\

\noindent Thirdly, the readout process requires to have a barrier line that borders to the qubit pair, with an empty dot is across the spot of the qubit to be measured. This is a consequence of the readout procedure.

In \cref{tab:operation opcodes} we introduce the measurement OPCODE \code{M[i,j,k]} with $k \in \{−1, 1\}$ to denote a measurement of a qubit at location $(i,j)$ with an ancilla located to the left $(k=−1)$ or to the right $(k=1)$.

\begin{table*}[t]
\centering
\begin{tabular}{|c|c|c|}
\hline
OPCODE & Control OPCODES & Effect\\
\hline
\multirow{2}{*}{\code{HS[i,j,k]}} & \code{V[i]\&D[i-j][t-1/2-k/2]} & $(k=1)$: ~~Shuttle from $(i,j)$ to $(i,j+1)$\\
& $\;\;\;\;\;\;\;\;\;$\code{\&D[i-j+1][t-1/2+k/2]} &$(k=-1)$: Shuttle from $(i,j+1)$ to $(i,j)$ \\
																														\hline
\multirow{2}{*}{\code{VS[i,j,k]}} & \code{H[j]\&D[i-j][t-1/2-k/2]} & $(k=1)$: ~~Shuttle from $(i,j)$ to $(i+1,j)$\\
& $\;\;\;\;\;\;\;\;\;$\code{\&D[i-j-1][t-1/2+k/2]}&$(k=-1)$: Shuttle from $(i+1,j)$ to $(i,j)$\\ 
\hline
\code{M[i,j,k]} & \code{HS[i,j+1/2+k/2,-k]}& Measurement of qubit at $(i,j)$ using the ancilla at $(i,j+k)$ \\
\hline
% \end{tabular}
\end{tabular}
\caption{OPCODES for horizontal and vertical shuttling and measurement together with the control OPCODES required to implement these operations on the QDP. }\label{tab:operation opcodes}
\end{table*}

\subsubsection{Single-qubit rotations}\label{subsubsec:single qubit rotations}
There are two ways in which single qubit rotations can be performed on the QDP, both with drawbacks and advantages. The first method, which we call the semi-global qubit rotation, relies on electron-spin-resonance~\cite{veldhorst2014addressable}. Its implementation in the QDP allows for any rotation in the single qubit special unitary group $SU(2)$~\cite{nielsen2002quantum} to be performed but we do not have parallel control of individual qubits. The control architecture of the QDP is such that we can merely apply the same single qubit unitary rotation on all qubits in either $\mc{R}$ or $\mc{B}$ (even or odd numbered columns). Concretely we can perform in parallel the single qubit unitaries
\begin{align}\label{eq:single qubit rotations}
U_\mc{R} &= \bigotimes_{(i,j) \in \mc{R}} U_{i,j}\hspace{5mm} U\in SU(2)\\
U_\mc{B} &= \bigotimes_{(i,j) \in \mc{B}} U_{i,j}\hspace{5mm} U\in SU(2),
\end{align}
where $U_{i,j}$ means applying the same unitary $U$ to the state carried by the qubit at location $(i,j)$. In general the only way to apply an arbitrary single qubit unitary on a single qubit in $\mc{B}$ (or $\mc{R}$) is by applying the unitary to all qubits in $\mc{B}$ ($\mc{R})$, moving the desired qubit into an adjacent column, i.e.~from $\mc{B}$ to $\mc{R}$ ($\mc{R}$ to $\mc{B}$) and then applying the inverse of the target unitary to $\mc{R}$ ($\mc{B}$). This restores all qubits except for the target qubit to their original states and leaves the target qubit with the required unitary applied. The target qubit can then be shuttled to its original location. A graphical depiction of the \boardstate associated with this manoeuvre can be found in \cref{fig:single unitary boardstate}. This means applying a single unitary to a single qubit takes a constant amount of grid operations regardless of grid size.\\

\noindent The second method does allow for individual single qubit rotations but is limited to performing single qubit rotations of the form
\begin{equation}\label{eq:rotation by waiting}
U(\phi) = e^{i\phi Z},\hspace{10mm}Z = \begin{pmatrix} 1 & 0\\ 0 &-1\end{pmatrix},\;\; \phi \in [0,2\pi)
\end{equation}
This operation can be performed on a given qubit $\ket{\psi}_{(i,j)}$ by shuttling it from $(i,j)$ to $(i,j\pm 1)$. When the qubit leaves the column it was originally defined ($\mc{B}$ to $\mc{R}$ or vice versa) it will effectively start precessing about its $Z$ axis~\cite{veldhorst2017crossbar}. This effect is always present but it can be mitigated by timing subsequent operations such that a full rotation happens between every operation (effectively performing the identity transformation, see \cref{subsubsec:shuttling}). By changing the timing between subsequent operations any rotation of the form \cref{eq:rotation by waiting} can be effected. This technique will often be used to perform the $Z$ gate (defined above) and the $S=\sqrt{Z}$ phase gate in error correction sequences.

\begin{figure*}[t]
\include{single}
\caption{\boardstate schematic for applying the unitary $U$ to a single qubit (red).  Time flows from left to right in the schematic. This process illustrates both, the possibility to retain single qubit control by using coherent shuttling, and the overhead that comes with it.  \textbf{(a)} we firstly apply the unitary $U$ (blue bars) to all qubits in $\mc{R}$ ($\mc{B}$). We then move the qubit  to the adjacent column. Note that this takes two operations because we do not want any other qubits transitioning with it. In \textbf{(b)},  we apply the inverse unitary $U^\dagger$ to all qubits in $\mc{R}$ ($\mc{B}$). In the last step we move the red qubit back, such that it is in its original position in \textbf{(c)}.}\label{fig:single unitary boardstate}
\end{figure*}

\subsubsection{Two-qubit gates}\label{subsubsec:multi-qubit rotations}
As the last elementary tool, we have the ability to apply entangling two-qubit gates on adjacent qubits. For this case, we need to address the barrier between them and at least one diagonal line. The actual gate that is applied when using those lines now differs for horizontal and vertical two-qubit gates. Inside one column, so for gates between qubits $(i,j)$ and $(i\pm 1, j)$, a square-root of SWAP (\sw) can be realized~\cite{petta2005coherent}, which is defined as 

\begin{align}
\sqrt{\mathrm{SWAP}} =\left(
\begin{matrix}
1 \\ 
&\left( 1+i \right)/2 & \left( 1-i \right) / 2 \\
& \left( 1-i \right) / 2 & \left( 1+i \right)/ 2 \\
& & & 1 
\end{matrix}\right) \, ,
\end{align}
in the computational basis. However between horizontally adjacent qubits, e.g.~between $(i,j)\in \mc{R}$ and $(i,j \pm 1)\in \mc{B}$ the native two-qubit gate is rather an effective \cphase gate
\begin{align} 
 \mathrm{CPHASE} = \left( 
\begin{matrix}
1 \\ & e^{i\phi_1} \\ && e^{i\phi_2} \\ &&& 1
\end{matrix} 
 \right) \, ,
 \end{align}
again in the computational basis and with the two angles $\phi_1 + \phi_2 \mod 2\pi = \pi $ (demonstrated in~\cite{meunier2011efficient,watson2017programmable,veldhorst2015two}). In practice we expect the \sw gate to have significantly higher fidelity than the \cphase gate~\cite{veldhorst2017crossbar} so in any application (e.g.~error correction) the \sw gate is the preferred native two-qubit gate on the QDP. In \cref{tab:interaction opcodes} we define OPCODES for the horizontal interaction (\cphase) and the vertical interaction (\sw)

\subsubsection{CNOT subroutine}\label{subsubsec:CNOT subroutines}
Many quantum algorithms are conceived using the \cnot gate as the main two-qubit gate. However the QDP does not support the \cnot gate natively. It is easy to construct the \cnot gate from the \cphase gate by dressing the \cphase gate with single qubit Hadamard rotations as seen in \cref{fig:cnot construction} (left). It is slightly more complicated to construct a \cnot gate using the \sw but it can be done by performing two \sw gates interspersed single qubit rotations~\cite{schuch2003natural,watson2017programmable,veldhorst2015two} as seen in \cref{fig:cnot construction} (right). If the control qubit is moved from an adjacent column on the QDP (as it is in most cases we will deal with) the $Z$ and $S$ gates can be performed by the $Z$-rotation-by-waiting technique described in the last section.
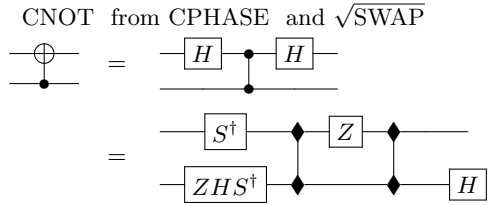
\begin{figure}
\begin{minipage}[t]{0.48\textwidth}
\centering
{{\cnot} from {\cphase} and {\sw}} \\
\begin{tikzpicture}
\draw[white] (0,0)--(10,0);
\node[right] at (1,0) {
\Qcircuit @C=1em @R=.7em{
	 & \targ 	& \qw	\\
	 & \ctrl{-1} & \qw						
}};
\node[right] at (2.3,0) {=};
\node[right] at (3,0) {\Qcircuit @C=1em @R=.7em {
	 &\gate{H} & \ctrl{0}	&\gate{H}& \qw	\\
	&\qw & \ctrl{-1} & \qw& \qw						
}};
\end{tikzpicture}
\end{minipage}
\begin{minipage}[t]{0.48\textwidth}
\centering
$\quad$ \\
\begin{tikzpicture}[baseline=0]
\draw[white] (0,0)--(10,0);
\node[right] at (2.3,0) {=}; 
\node[right] at (3,0) {\Qcircuit @C=1em @R=.7em {
	 &\gate{S\ct} & \ctrlb{0}	&\gate{Z}& \ctrlb{0} & \qw&\qw	\\
	&\gate{ZHS\ct} & \ctrlb{-1} & \qw& \ctrlb{-1} & \qw	&\gate{H}					
}};
\end{tikzpicture}
\end{minipage}
\caption{Construction of the \cnot gate out of the native \cphase and \sw gates. Note that one requires two \sw gates to construct a \cnot gate~\cite{schuch2003natural}. When performing arbitrary algorithms it would be preferable to forgo this substitution and instead compile the algorithm directly into a gateset containing the \sw gate.}\label{fig:cnot construction}
\end{figure}
For completeness we also define an OPCODE for the \cnot operation in \cref{tab:interaction opcodes}.

 \begin{table*}[t]
 \centering
 \begin{tabular}{|c|c|c|}
 \hline
 OPCODE & Effect & Parameter\\
 \hline
 \code{HI[(i,j)]} & Perform \cphase gate between sites \code{(i,j)} and \code{(i,j+1)} & \code{$(i,j) \in [0:N-2]^{\times 2}$}\\
 \code{VI[(i,j)]} & perform \sw gate between sites \code{(i,j)} and \code{(i+1,j)} & \code{$(i,j) \in [0:N-2]^{\times 2}$}\\
 \hline
 \code{HC[(i,j)]} & Perform \cnot (using \cphase) between \code{(i,j)} and \code{(i,j+1)} & \code{$(i,j) \in [0:N-2]^{\times 2}$}\\
 \code{VC[(i,j)]} & perform \cnot (using \sw) between \code{(i,j)} and \code{(i+1,j)} & \code{$(i,j) \in [0:N-2]^{\times 2}$}\\
 \hline
 \end{tabular}
 \caption{OPCODES for horizontal and vertical two-qubit operations on the QDP, respectively the \cphase and \sw gates. We also include OPCODES for the performing of \cnot gates composed of \sw or \cphase gates. }\label{tab:interaction opcodes}
 \end{table*}

\section{Parallel operation of a crossbar architecture}\label{sec:An assembly language for crossbar control}

In this section we focus on performing operations in parallel on the QDP (or more general crossbar architectures). Because of the limitations imposed by the shared control lines of the crossbar architecture, achieving as much parallelism as possible is a non-trivial task. We will discuss parallel shuttle operations, parallel two qubit gates, parallel single qubit gates and parallel measurement. As part of the focus on parallel shuttling we also include some special cases relevant to quantum error correction where full parallelism is possible.\\

\noindent Before we start our investigation however, we would like to put three issues into focus that are likely to be encountered when attempting parallel operations. Firstly, it must be understood that an operation on one location on a crossbar system can cause unwanted side effects in other locations (that might be far away). As indicated in \cref{sec:the quantum dot processor} many elementary operations on the grid in particular take place at the crossing points of control lines. This means that any parallel use of these grid operations must take into account ``spurious crossings'' which may have such unintended side effects. We can illustrate this with an example. Imagine we want to perform the vertical shuttling operations \code{VS[i,j-1,1]} and \code{VS[i+2,j-1,1]} in parallel (see 
\cref{fig:spurious} for illustration). We can do this by lowering the horizontal barriers at rows $i$ and $i+2$ (orange in illustration) and elevating the on-site potentials on the diagonal lines $i-j+1$ and $i+2-j+1$ (red in illustration). This will open upwards flows at locations $(i,j-1)$ and $(i+2,j-1)$. However it will also open an upward flow at the location $(i+2,j+1)$. This means, if a qubit is present at that location an unintended shuttling event will happen. To avoid this outcome we must either perform the operations \code{VS[i,j-1,1]} and \code{VS[i+2,j-1,1]} in sequence (taking two time-steps) or perform an operation \code{VS[i+2,j+1,-1]} to fix the mistake we made, again taking two time-steps. This is a general problem when considering parallel operations on the QDP.

\begin{figure*}[t]
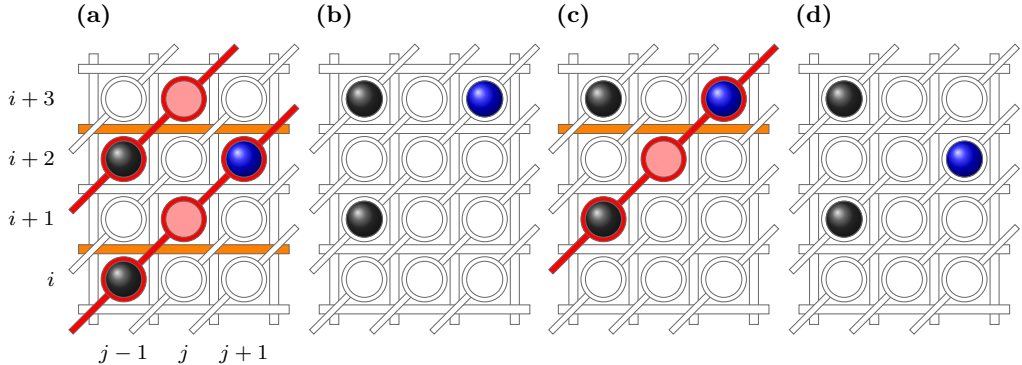

\include{spurious}
\caption{Spurious shuttle operations. Here we illustrate an example of unintended side effects that occur due to the limited control. We again denote qubits by colored balls, and color barriers and lines that are operated. Empty dots with changed potentials are colored as well, white dots are unaffected. \textbf{(a)} The black qubits are to be shuttled from  $(i,j-1)$ to $(i+1,j-1)$ and from $(i+2,j)$  to $(i+3,j)$ respectively without moving the blue qubit. For that purpose, the (orange) barriers between the two dot pairs are lowered, as well as the (red) diagonal lines through $(i,j-1)$ and $(i+2,j)$ are pulsed, such that the dot potentials on those sites are raised. \textbf{(b)} The qubit on $(i+3,j+1)$ has unintentionally moved to $(i+2,j+1)$. \textbf{(c)} To remedy this situation, we lower the barrier number $i+2$ again (orange), and also raise the potential on $(i+3,j+1)$ and all other dots that are connected by the pulsed diagonal line (red).  In \textbf{(d)}, the desired situation is achieved.  } \label{fig:spurious}
\end{figure*}

\noindent Secondly, we would like to point out that in realistic setups, we expect a trade-off between parallelism (manifested in algorithmic depth) and operation fidelity (in particular this will be the case in the QDP system). In order to understand this, we have to be aware that most operations consist of applying the correct pulses for the right amount of time. These durations however can slightly vary from site to site (due to manufacturing imperfections), so we e.g.~must be able to switch barriers back on again prematurely when accounting for a site with a shorter time required. If this is not possible (maybe because it would cause side effects) a loss in operation fidelity is a consequence of the resulting improperly timed operation. The most robust case is thus to schedule operations line-by-line. By this we mean that we attempt to perform $O(N)$ grid operations in a time-step while using every horizontal, diagonal or vertical line only once per individual grid operation. If we for instance schedule several vertical shuttle operations, we may choose to start by lowering one of horizontal barrier first and then detune the dot potentials of all qubits adjacent to that barrier, by pulsing the corresponding diagonal lines. To account for the variations, we reset the diagonal lines at slightly different times. Line-by-line operations work with either line types for every two-dot operation (measurement, shuttling and two-qubit gates). Note however that for shuttling operations individual control over one line is sufficient, whereas for measurement and two-qubit gates we would ideally like to be able to control two lines per qubit pair individually, where one line should be the barrier separating the two paired qubits. Results presented in the following take into account these constraints for quantum error correction. The parallel operation nonetheless remains one of the greatest challenges of the crossbar scheme. In this section we will assume all operations to be perfect (even when performed in parallel) but in \cref{sec:discussion} we perform a more detailed analysis of the behavior of the QDP when operational errors are taken into account.\\ 

\noindent Thirdly, from a performance perspective it is important to separate the operations that have to be done on the qubits on the crossbar grid
from operations that can be done by classical side computation (which for our purposes is essentially free). We will deal with this by including classical side computation in the OPCODES for parallel operation. This way the complexity of dealing with spurious operations is abstracted away. We devise algorithms that take in an arbitrary list of shuttling or two-qubit gate locations and work out a sequence of shuttling or two-qubit gate steps that achieve that list. We begin with discussing parallel shuttle operations.

\subsection{Parallel shuttle operations}\label{subsec:parallel shuttle operations}

\noindent We define parallel versions of the shuttling OPCODES $\code{HS[i,j,k]}$ and $\code{VS[i,j,k]}$ as 

\begin{figure}[ht]\label{fig:parallel shuttling codes}
\centering
\begin{tabular}{|c|c|}
\hline
OPCODE & Effect \\
\hline
\code{HS[L]} & Perform \code{HS[i,j,k]} for all $(i,j,k) \in \code{L}$ \\
\code{VS[L]} & Perform \code{HS[i,j,k]} for all $(i,j,k) \in \code{L}$ \\
\hline
\end{tabular}
\end{figure}
This code takes in a set (denoted as \code{L}) of tuples $(i,j,k)$ which denote `locations at which shuttling happens' $(i,j)$ and `shuttling direction' $(k)$. From these codes it is not immediately clear how many of the shuttling operations can be performed in a single grid operation, i.e.~setting the diagonal lines to some configuration and lowering several horizontal or vertical barrier. If multiple grid operations are needed (such as in the example \cref{fig:spurious}) we would like this sequence of grid operations to be as short as possible. However, given some initial \boardstate and a parallel shuttling command \code{HS[L]} it is not clear what the sequence of parallel shuttling operations actualizing this command is. Below we analyze this problem of parallel shuttling in more detail and give a classical algorithm that produces, from an input \code{HS[L]} or \code{VS[L]} a sequence of parallel grid operations that performs this command. Ideally we would like this sequence to be as short as possible. This algorithm does not perform optimally in all circumstances (i.e.~it does not produce the shortest possible sequence of parallel shuttling operations) but for many relevant cases it performs quite well.
Note that this is a technical section and the details are not needed to understand the quantum error correction results in \cref{sec:error correction codes,sec:surface code implementation,sec:color code implementation}. Readers interested only in those may skip ahead to \cref{subsec:parallel two-qubit gates}

\subsubsection{The flow matrix}
We will only consider shuttling to the left and to the right but all mechanisms introduced work equally well for shuttling in the vertical directions. As will be seen in \cref{subsec:configurations} some \boardstate configurations can be converted into each other in an amount of grid operations that is constant in the size of the grid. It can be seen  that the problem of whether two shuttles can be performed in parallel is a problem with a matrix structure, as flows can only occur at the intersection open barriers and non-trivial diagonal line gradients. To capture this matrix intuition we construct, from the initial \boardstate and the command \code{HS[L]} a matrix $F$ which we call the flow matrix. This matrix will have entries corresponding to the crossing of the gradient line between two diagonal qubit lines and the vertical barrier lines. The flow matrix is defined with respect to a specific command \code{HS[L]} and its entries correspond to the locations on the grid where we want shuttling in certain directions to happen.\\

\begin{figure}[h]
\include{flowmatrix}
\caption{Example of a \boardstate, a parallel command \code{HS[List]} and the corresponding flow matrix $F$.}\label{fig:flow matrix}
\end{figure}

\noindent From a specific command \code{HS[L]} and a specific current \boardstate we will define a flow matrix $F$. This matrix will have entries which take value in the set $\{r,l,e,re,le,*\}$. Each element of this set has a specific operational meaning. The elements $r,l,e$ correspond to specific actions that can be taken on the qubit grid. They correspond specifically to `shuttle to the right' $(r)$, `shuttle to the left' $(l)$ and `do nothing' $(e)$. Note that these actions do not necessarily act on a fixed qubit. Rather they act on a specific location on the grid (where a qubit mayb or may not be present). The other three elements do not directly correspond to a shuttling action but rather signify that at this location we have a choice of different consistent actions. We will call these elements `wildcards'. These wildcards signify the actions `shuttle to the right or do nothing' $(re)$, `shuttle to the left or do nothing' $(le)$, or `any action is allowed' $(*)$. \\

\noindent We fill in the matrix entry $F_{ij}$ with a symbol $r$ for every $(i,j,1)$ in \code{L}. This indicates that at some point in time we want to perform the operation \code{HS[i,j,1]} at that location. Similarly we fill in a symbol $l$ on every matrix entry $F_{ij}$ for every $(i,j+1,-1)$ in \code{L}. We place the symbols $re,le$ respectively on the matrix entries $F_{i(j-1)}$ and $F_{ij}$ for every occupied site $(i,j)$ in the \boardstate that has no corresponding entry in \code{L}. This indicates that we would like for no shuttle operations to happen on these crossing points (since we want the qubit to stay put) but that we do not mind a \code{HS[i,j-1,1]} happening on the crossing point to the left of the qubit at $(i,j)$ (since it will not affect the qubit) or mind a \code{HS[i,j,1]} happening to the right of the qubit at $(i,j)$. Lastly we fill in the symbol $e$ on every matrix entry $F_{ij}$ where we want no shuttling operation to happen at any time to the right of the site $(i,j)$ (for instance on the crossing point between two qubits that are in horizontally adjacent sites). In every other matrix entry $F_{ij}$ we fill in the wildcard symbol $*$ indicating that we do not care if any operation happens at this crossing point. Let's summarize the above construction by 
\begin{widetext}
\begin{equation*}
F_{ij} = \begin{cases} r \;\brif (i,j,1) \in \code{L}\\
						l \;\:\brif (i,j,1) \in \code{L}\\
						e \;\brif (\boardstate(i,j) =1\land \boardstate(i,j+1) =1)\land((i,j,k) \not\in \code{L},k\in \{1,-1\}) \\
						re \brif (\boardstate(i,j) =0\land \boardstate(i,j+1) =1)\land((i,j,k) \not\in \code{L},k\in \{1,-1\})\\
						le \brif (\boardstate(i,j) =1\land \boardstate(i,j+1) =0)\land((i,j,k) \not\in \code{L},k\in \{1,-1\})\\
						* \brif (\boardstate(i,j) =0\land \boardstate(i,j+1) =0)\land((i,j,k) \not\in \code{L},k\in \{1,-1\}).
						\end{cases}
\end{equation*}
\end{widetext}
The flow matrix $F$ takes values in the set $\{r,l,e,re,le,*\}$. In \cref{appsec:Shuttling algorithm} we discuss the mathematical structure of this set in more detail. The above construction gives us a matrix of operations we would like to apply to the initial \boardstate. You can see an example of a \boardstate and \code{HS[L]} command with corresponding flow matrix $F$ in~\cref{fig:flow matrix}.

\subsubsection{An algorithm for parallel shuttling}

 The task is now to subdivide the flow matrix $F$ into a sequence of shuttling operations that can be performed in parallel. Ideally we would like this sequence to be as short as possible. One simple way to generate a sequence of this form, as described in the beginning of the section, is to perform all operations one column at a time, i.e.~lowering the first vertical barrier, setting the required gradients to shuttle every qubit adjacent to that vertical barrier and then move on to the second vertical barrier and so on. This yields a sequence of parallel shuttling operations of depth $N$. This solution is always possible for any flow matrix $F$. However, as can be seen in \cref{subsec:configurations} for some flow matrices this is far from an optimal solution. Below we set out in detail an algorithm that finds better (shorter sequences) solutions for many flow matrices. The algorithm is based on the idea that some columns of the flow matrix $F$ can be `dependent' on each other. For instance two columns could be composed of the exact same operations (up to a shift accounting for the fact that the diagonal lines do not run along the rows but diagonally). This means we can perform the shuttle operations in the two columns simultaneously by lowering barriers corresponding to these columns and setting the required gradient. More complicated forms of dependence are also possible. We can use dependence of columns to perform operations in parallel. For instance if a command \code{HS[L]} calls for exactly the same shuttling events to happen on two columns (up to a constant vertical shift proportional to the horizontal distance of the two columns) we can perform these shuttling operations in a single time-step. \\

 This notion of (in)dependence of columns is captured by a call to an `independence subroutine'. We call these subroutines $\mathbf{CheckIndependence}(S,v)$ which takes in a set of columns $S$ of the flow matrix $F$ of and a column $v$ of the flow matrix $F$ and decides whether $v$ is independent of the elements of $S$ and $\mathbf{DependenceSet}(S,v)$ which takes in a set of columns $S$ and a column $v$ and returns a subset $A$ of $S$ containing all the columns on which $v$ depends. We will discuss various versions of these subroutines leading to more or less refined notions of independence (and thus longer of shorter shuttling sequences) in \cref{appsec:Shuttling algorithm}. We list all subroutines discussed in \cref{appsec:Shuttling algorithm} in \cref{tab:subroutines} together with their relative power and time complexity. Here we just treat the subroutines as a given and build the algorithm around it. This algorithm does not always yield optimal sequences of parallel shuttling operations, but it can be run using a polynomial amount of classical side-resources given that the subroutine can be constructed efficiently, (see \cref{thm:main algo complexity}) while we expect an algorithm that always produces optimal shuttling sequences to require exponential computational resources. Below we give a pseudo-code version of the algorithm. Not that this algorithm only produces sequences of parallel shuttling operations where the ordering of the operations does not matter. See \cref{appsec:Shuttling algorithm} for more details on how this property is guaranteed.

 \onecolumngrid 
%these two empty spaces are needed, don't delete

\begin{algorithm}[H]
\caption{Generate list of parallel shuttle operations}
\label{alg:shuttling algorithm}
\begin{algorithmic}[1]

\Require{Flow matrix $F$}
\Ensure{List of shuttle operations $L$}\\

\State // We will consistently write columns of the flow matrix $F$ as $v_i$ where $i$ indicates
\State // the column index of $v_i$ in $F$. \\

\State \textbf{Set} $S$ to an empty list\\

\State // Below we construct a set of independent columns $S$ and sets of dependence $A_i$ for the dependent columns $v_i$.\\ 

\For{$i \in [0:N-2]$}

	\State \textbf{Set} $v_i$  to the $i$'th column of $F$ \\

	\State //Check if the column $v_i$ is independent of the columns already in the set $S$. This requires a  	\State   // subroutine call to $\mathbf{CheckIndependence}$. See Appendix for the construction of this subroutine.\\

	\If{$\mathbf{CheckIndependence}(v_i,S)$ is TRUE}\\

		\State // The function $\theta$ maps the symbols $*,~re,~le$ to $e$. We must do this since we want to make an operation \State // out of $v_i$ later and the wildcard elements $*,~re,~le$ do not strictly correspond to operations. Other \State // choices are possible here but in keeping with the idea of doing a \State //  minimal amount of operations, the mapping to $e$ is a good choice.\\

		\State \textbf{Add} $\theta(v_i)$ to $S$

		\State \textbf{Set} $A_i$ to $\{v_i\}$
		
	\Else

		\State \textbf{Set} $A_i$ to  $\mathbf{DependenceSet}(S,v_i)$

	\EndIf

\EndFor\\

\State // Initialize an empty ordered set that will contain all \code{HS[L]} commands in sequence.\\

\State \textbf{Set} $L$ to an empty ordered set

\For{$v_i \in S$}\\

	\State // Initialize an empty set that will contain all tuples for a single \code{HS[L]} command.\\

	\State \textbf{Set} \code{L} to an empty set

	\For{$j \in [0:N-2]$}\\

		\State // Check if $v_i$ is in the dependence set $A_j$.\\

		\If{$v_i \in A_j$}

		\State // Loop over all components of $v_i$.

			\For{$k \in [0:\mathrm{length}(v_i)-1]$}\\

				\State // $\phi$ maps the $r,l,e$ valued column $v$ to an $1,-1,0$ valued vector as $\phi(r)=1,\phi(l)=-1,\phi(e)=0$.\\

				\If{$\phi\big[(v_i)_k\big]\neq 0$}

					\State \textbf{Add} $\big(j,k-(i-j),\phi\big[(v_i)_k\big]\big)$ to \code{L}

				\EndIf

			\EndFor

		\EndIf

	\EndFor

	\State \textbf{Add} \code{HS[L]} to $L$

\EndFor
\State\Return{$L$}
\end{algorithmic}
\end{algorithm}

\newpage
\twocolumngrid

\begin{theorem}\label{thm:main algo complexity}
The algorithm described in \Cref{alg:shuttling algorithm} has a time complexity upper bounded by 
\begin{equation}\label{eq:complexity}
\begin{aligned}
&O(N^4) + N \mkern1mu{\cdot}\mkern1mu O\big(\mathbf{CheckIndependence}(S,v_i)\big) \\ &\hspace{16mm}+ N\mkern1mu{\cdot}\mkern1mu O\big(\mathbf{DependenceSet}(S,v_i)\big),
\end{aligned}
\end{equation}
where $N$ is the number of columns in the input flow matrix $F$.\\
\noindent The subroutines $\mathbf{CheckIndependence}(S,v_i)$ and $\mathbf{DependenceSet}(S,v_i)$ both take in a set $S$ of independent columns of the flow matrix $F$ and a column $v_i$ of the flow matrix $F$ and respectively check whether $v$ is independent of the set $S$ or produce a subset $A$ of $S$ on which $v$ depends. Various versions of these subroutines are discussed in \cref{appsec:Shuttling algorithm} and their time complexities are given in \cref{tab:subroutines}.
\end{theorem}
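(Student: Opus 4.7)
The plan is to prove the bound by decomposing Algorithm 1 into its two structural parts and bounding the cost of each separately. The first part is the initial \textbf{for} loop over $i \in [0:N-2]$ which populates the independent set $S$ and the dependence sets $A_i$. The second part is the nested \textbf{for} loops that translate the chosen independent columns back into explicit $\code{HS[L]}$ parallel commands. The subroutine calls to $\mathbf{CheckIndependence}$ and $\mathbf{DependenceSet}$ occur only inside the first part, so isolating their contribution is straightforward; the $O(N^4)$ term must therefore come entirely from the second part.

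For the first loop, I would note that it executes at most $N-1$ times, and in each iteration the nontrivial work is one call to $\mathbf{CheckIndependence}(v_i,S)$ and at most one call to $\mathbf{DependenceSet}(S,v_i)$; the remaining operations (applying $\theta$ entry-wise, inserting a column of length $O(N)$ into $S$, initializing $A_i$) are $O(N)$ each. This yields a total cost of
\begin{equation*}
N\cdot O(\mathbf{CheckIndependence}(S,v_i)) + N\cdot O(\mathbf{DependenceSet}(S,v_i)) + O(N^2),
\end{equation*}
which is absorbed into the first two terms of the bound in~\eqref{eq:complexity}.

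For the second part, the outer loop iterates over $v_i\in S$, so at most $N-1$ times. Inside, the middle loop over $j\in [0:N-2]$ performs, for each $j$, a membership test $v_i\in A_j$. Since $A_j\subseteq S$ has up to $O(N)$ elements and each element is a column of length $O(N)$, a naive comparison-based membership check costs $O(N^2)$. When the test succeeds, the innermost loop over $k$ contributes a further $O(N)$ work, dominated by the membership check. Summing gives $O(N)\cdot O(N)\cdot O(N^2)=O(N^4)$, which matches the leading term. The first part's residual $O(N^2)$ is subsumed into this.

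The main obstacle is making the membership-check accounting watertight: one must be explicit about the data structure used for storing $S$ and the $A_j$, since with hashing-on-columns the cost would drop and the leading term would no longer be $N^4$. I would therefore present the bound as stated under the conservative assumption of direct column comparison, and remark that it is loose if fingerprinting of columns is allowed. Beyond this, the proof is essentially bookkeeping; no deeper combinatorial argument about the flow matrix is needed, because the theorem is a worst-case statement that does not depend on the specific subroutine realizations beyond their stated complexities, which are tabulated in Table~\ref{tab:subroutines} and analyzed in Appendix~\ref{appsec:Shuttling algorithm}.
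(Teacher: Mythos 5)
Your proof is correct and follows essentially the same route as the paper's: the same two-part decomposition into the subroutine-calling loop and the nested output-construction loops, with the same loop-by-loop counting, and the same observation that $|S|\leq N$, $|A_j|\leq N$ and $\mathrm{length}(v_i)\leq N$. The one place you differ is actually an improvement in bookkeeping: you charge the membership test $v_i\in A_j$ at $O(N^2)$ (up to $N$ stored columns, each compared entrywise in $O(N)$), which is what genuinely produces the $O(N^4)$ leading term, whereas the paper's own proof charges this test only $O(N)$ and would, taken literally, yield $O(N^3)$ --- still consistent with the stated \emph{upper} bound, but your version makes explicit the assumption (direct column comparison, no fingerprinting or indexing) under which $O(N^4)$ is the natural worst case.
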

\begin{proof}
Begin by noting that the \cref{alg:shuttling algorithm} consists of two independent $\mathbf{For}$-loops. The first $\mathbf{For}$-loop (lines 2-11) calls its body $N$ times (ignoring constant factors). Calling the $\mathbf{For}$-loop body (lines 3-10) in the worst case requires calling both $\mathbf{CheckIndependence(~)}$ and $\mathbf{DependenceSet(~)}$ plus some constant time instructions. This means the first $\mathbf{For}$ loop has a worst case complexity of $N\cdot O\big(\mathbf{CheckIndependence(~)}\big) + N\cdot O\big(\mathbf{DependenceSet(~)}\big)$.\\

\noindent The second $\mathbf{For}$-loop (lines 13-25) consists of three nested $\mathbf{For}$ loops of length $O(N)$ with an $\mathbf{If}$-clause inside the first two $\mathbf{For}$-loops (line 16) constant time operation at the bottom (line 19). The first $\mathbf{For}$-loop can be seen to be of order $O(N)$ by noting that the set of independent columns $S$ can be no bigger than $N$ in which case all columns are independent. The second $\mathbf{For}$-loop (line 15) is $O(N)$ bounded by construction. Note that the $\mathbf{If}$ clause on line 16 can take time $O(N)$ to complete since for any dependency set $A_j$ we can only say that $|A_j|\leq N$ (since $A_j$ is a subset of the set of all columns of $F$). The third loop is also $O(N)$ bounded since $\mathrm{length}(v_i)\leq N$ for all columns $v_i$ of $F$. Tallying up all contributions we arrive at \cref{eq:complexity}, which completes the argument.
\end{proof}

 \begin{table*} \small
 \begin{tabular}{|c|l|l|}
 \hline
 Name & Time Complexity & Relative power\\
 \hline
 \multirow{2}{*}{Simple} & $O\big(\mathbf{CheckIndependence(~)}\big) = O(N M)$ & \multirow{2}{*}{Shorter sequences than line-by-line.}\\
 &$O\big(\mathbf{IndependenceSet(~)}\big) = O(N M)$ & \\
 \hline
 \multirow{2}{*}{k-commutative} & $O\big(\mathbf{CheckIndependence(~)}\big) = O(N M M^k k^4)$ & Shorter sequences than `Simple'.\\
 &$O\big(\mathbf{IndependenceSet(~)}\big) = O(N M M^k k^4)$ & Shorter sequences for increasing $k$.\\
 \hline
 \multirow{2}{*}{ Greedy commutative} & $O\big(\mathbf{CheckIndependence(~)}\big) = O(N M^3)$ & Shorter sequences than `Simple'.\\
 &$O\big(\mathbf{IndependenceSet(~)}\big) = O(N M^3)$ &Relation to `k-commutative' unknown. \\
 \hline
 \end{tabular}
 \caption{Table listing the time complexity and relative power of the $\mathbf{CheckIndependence(~)}$ and $\mathbf{IndependenceSet(~)}$ for three different classes of subroutine. The parameters $N$ and $M$ are the size of the QDP grid and the size of the input set $S$ respectively. The subroutine classes `simple' and `greedy commutative' can be run in polynomial time while the class `k-commutative' is fixed-parameter-tractable, with independent parameter $k$. This subroutine yields increasingly better results (shorter shuttling sequences) for increasing $k$ but the time complexity grows rapidly with $k$. See \cref{appsec:Shuttling algorithm} for a detailed description of these subroutines. For an illustration of the advantages of these algorithms, one can consider the shuttle commands given in \cref{subsec:configurations}. A naive line-by-line approach will take $N$ timesteps while it is easy to see that the above algorithms find sequences of length one.}\label{tab:subroutines}

 \end{table*}

This concludes our discussion of parallel shuttling operations. Before we move on however, it is worth pointing out an interesting example where this shuttling can be used a subroutine to perform more complicated operations. This example will also be of use later when discussing parallel measurement in \cref{subsec:parallel measurements} and the mapping of quantum error correction codes in \cref{sec:error correction codes,sec:surface code implementation,sec:color code implementation}.

\subsubsection{Selective parallel single-qubit rotations}\label{subsubsec:selective parallel}

In this section we will discuss a particular example that illustrates the use of abstracting away the complexity of parallel shuttling. Imagine a QDP grid initialized in the so called \emph{idle} configuration. This configuration can be seen in \cref{fig:boardstate configurations}. We will focus on the qubit in the odd columns (i.e.~the set $\mc{B}$). Imagine a subset $S$ of these qubits to be in the state $\ket{1}$ and the remainder of these qubits to be in the state $\ket{0}$. The qubits on in the set $\mc{R}$ can be in some arbitrary (and possibly entangled) multiqubit state $\ket{\Psi}$. We would like to change the states states of the qubits in the set $S$ to $\ket{0}$ without changing the state of any other qubit. Due to the limited single qubit gates (see \cref{subsubsec:single qubit rotations}) available in the QDP this is a non-trivial problem for some arbitrary set $S$. However using the power of parallel shuttling we can perform this task as follows. Begin by defining the set $\hat{S}$ to be the complement of $S$ in $\mc{R}$. Now we begin by performing the parallel shuttling operation
\begin{equation}
\code{HS[L]},\;\;\;\; \code{L} = \{ (i,j,1) \;\;\|\;\; (i,j) \in \hat{S}\}.
\end{equation}
Here we abuse notation a bit by referring to $\hat{S}$ as the set of locations of the qubits in $\hat{S}$.
This operation in effect moves all qubits in $\hat{S}$ out of $\mc{R}$ (and into $\mc{B}$, note that the dots the qubits are being shuttled in are always empty because of the definition of the idle configuration). Now we can use a semi-global single qubit rotation (as discussed in \cref{subsubsec:single qubit rotations}) to perform an $X$-rotation on all qubits in $\mc{R}$, which is now just all qubits in the set $S$. This flips changes the states of the qubits in $S$ from $\ket{1}$ to $\ket{0}$ without changing the state of any other qubit. Following this we can restore the \boardstate to its original configuration by applying the parallel shuttling command
\begin{equation}
\code{HS[L]},\;\;\;\; \code{L} = \{ (i,j,-1) \;\;\|\;\; (i,j) \in \hat{S}\}.
\end{equation}
Now we have applied the required operation. Note that at no point we had to reason about the structure of the set $S$ itself. This complexity was taken care of by the classical subroutines embedded in \code{HS[L]}. Next we discuss performing parallel two-qubit gates.

\subsection{Parallel two-qubit gates}\label{subsec:parallel two-qubit gates}

Similar to parallel shuttling it is in general rather involved to perform parallel two-qubit operations in the QDP. We can again define parallel versions of the OPCODES for two-qubit operations and then analyze how to perform them as parallel as possible (again having access to classical side computation).
\begin{figure}[ht]\label{fig:parallel grid interactions}
\centering
\begin{tabular}{|c|c|}
\hline
OPCODE & Effect \\
\hline
\code{HI[L]} & Perform \code{VI[(i,j)]} for \code{$(i,j)\in$ L} \\
\code{VI[L]} & Perform \code{HI[(i,j)]} for \code{$(i,j)\in$ L} \\
\hline
\end{tabular}
\end{figure}

Given an \boardstate and a \code{HI[L]} command one could use an algorithm similar to the algorithm presented for shuttling. We can again construct a matrix $F$  such that $F_{ij}=1$ is for all tuples $(i,j)$ in \code{L} indicating the locations where we desire a two-qubit operation to happen and $F_{ij}=0$ everywhere else. Now we can use the algorithm presented above for shuttling to decompose the matrix $F$ into a series of parallel \code{HI[L]} operations. However, since we have $\mathrm{CHPASE}^2=\id$ the independence subroutine reduces to linear independence of the columns of $F$ modulo $2$. This means we can find an \emph{optimal} decomposition into parallel operations by finding the Schmidt-normal~\cite[Chapter~14]{gorodentsev2016algebra} form of the matrix $F$ (Note that we do have to `tilt' the matrix $F$ to account for the fact that as posed the diagonal lines of the matrix $F$ are its `rows'). We can make the same argument given a \boardstate and a \code{VI[L]} command but now the Schmidt-normal form must be found modulo $4$ as $(\sqrt{\mathrm{SWAP}})^4 =\id$. As both addition modulo $2$ ($\mathbb{Z}_2$) and addition modulo $4$ ($\mathbb{Z}_4$) are principal ideal domains both of the Schmidt-normal forms can be found efficiently and generate optimal sequences of parallel two-qubit interactions. The depth of the sequence of operations is now proportional to the rank of the matrix $F$ over $\mathbb{Z}_2$ (\cphase) or $\mathbb{Z}_4$ (\sw). However, as mentioned before, the parallel operation of two-qubit gates in the QDP will mean taking a hit in operation fidelity vis-a-vis the more controllable line-by-line operation~\cite{veldhorst2017crossbar}. Since this operation fidelity is typically a much larger error source than the waiting-time-induced decoherence stemming from line-by line operation we will for the remainder of the paper assume line-by-line operation of the two-qubit gates. This will have an impact when performing quantum error correction on the QDP which we will discuss in more detail in \cref{sec:discussion}.\\

\noindent For the sake of completeness we also define a parallel version of the $\mathrm{CNOT}$ OPCODE. The same considerations of parallel operation hold for the parallel use of $\mathrm{CNOT}$ gates as they hold for the \cphase and \sw gates. We continue the discussion of parallelism in the QDP by analyzing parallel measurements.

\begin{figure}[ht]\label{fig: grid cnot}
\centering
\begin{tabular}{|c|c|}
\hline
OPCODE & Effect \\
\hline
\code{VC[L]} & Perform \code{VC[(i,j)]} for every $(i,j)$ in \code{L}\\
\hline
\end{tabular}
\end{figure}

\subsection{Parallel Measurements}\label{subsec:parallel measurements}

Performing measurements on an arbitrary subset of qubits on the QDP is in general quite involved. Every qubit to be measured requires an ancilla qubit and this ancilla qubit must be in a known computational basis state, and an empty dot must be adjacent as a reference for the readout process. The qubits must then be shuttled such that they are horizontally adjacent to their respective ancilla qubits and must also be located in such a way such that they are in the right columns for the PSB process to take place (revisit \cref{subsubsec:measurement} for more information). This can be done using the algorithm for parallel shuttling presented above but in the worst case this will take a sequence of depth $O(N)$ parallel shuttle operations. On top of the required shuttling the PSB process itself (from a control perspective similar to shuttling) must be performed in a way that depends on the \boardstate and the configuration of the qubit/ancilla pairs. In general this PSB process will be performed line-by-line (for the fidelity reasons mentioned in the beginning of the section) and hence requires a sequence of depth $O(N)$ parallel grid operations (plus the amount of shuttling operations needed to attain the right measurement configuration in the first place). Due to this complexity we will not analyze parallel measurement in detail but rather focus on a particular case relevant to the mapping of the surface code. But first we define a parallel measurement OPCODE \code{M[L]} which takes in a list of tuples $(i,j,k)$ denoting locations of qubits to be measured $(i,j)$ and whether the ancilla qubit is to the left $(k=-1)$ or to the right $(k=1)$ of the qubit to be measured\\
\begin{figure}[ht]\label{fig: parallel measurement}
\centering
\begin{tabular}{|c|c|}
\hline
OPCODE & Effect \\
\hline
\code{M[L]} & Perform \code{M[$(i,j,k)$]} for every $(i,j,k)$ in \code{L}\\
\hline
\end{tabular}
\end{figure}

\subsubsection{A specific parallel measurement example}

Let us consider a specific example of a parallel measurement procedure that will be used in our discussion of error correction. We begin by imagining the \boardstate to be in the \emph{idle} configuration (\cref{fig:boardstate configurations} top left). We next perform the shuttle operations needed to change the \boardstate to the \emph{measurement} configuration. This configuration (and how to reach it by shuttling operations from the idle configuration) will be discussed \cref{subsec:configurations} and can be seen in \cref{fig:boardstate configurations} (c). Next take the qubits to be measured in the parallel measurement operation to be the red qubits in \cref{fig:boardstate configurations}. The qubits directly to the right or to the left of those qubits will be the required readout ancillas (blue in \cref{fig:boardstate configurations}). We will assume that the readout ancillas are in the $\ket{0}$ state. If some ancilla qubits are in the $\ket{1}$ state instead we can always perform the procedure given in \cref{subsubsec:single qubit rotations} to rotate them to $\ket{0}$ without changing the state of the other qubits on the grid. Note that all the ancilla qubits are in the set $\mc{B}$ whereas the qubits to be read out are in the set $\mc{R}$. This means that we can perform the PSB process by attempting to shuttle the qubit to be measured (red) into the sites occupied by the ancilla qubits (blue). In principle we could perform this operations in parallel by executing the operations

\begin{align}
&\code{VS[L]},\;\;\;\;  \code{L} = \{ (i,j,1)\;\|\;i = 0\mod 2, \notag\\
& \hspace{20mm}j=1 \mod 2, \, i+j = 1 \mod 4 \}
\end{align}

to bring the qubits to be measured (red) horizontally adjacent to the ancilla qubits (blue) and then 
\begin{align}
&\code{M[L]}, \notag \\  &\code{L} = \{ (i,j,1)\;\|\; i= 1 \mod 4,\,   j = 1\mod 4\} \label{eq:par meas 1}
\end{align}

and
\begin{align}
&\code{M[L]},\notag \\ &\code{L} = \{ (i,j,-1)\;\|\; i= 3 \mod 4,\,
  j = 3\mod 4\}. \label{eq:par meas 2}
\end{align}

All of these operations can be performed in a single time-step. However for fidelity and control reasons laid out in the beginning we would prefer to perform these operations in a line-by-line manner. In particular we would like to perform these operations one row at a time since this gives us the ability to control both diagonal and vertical lines individually for each measurement. However we must take care to avoid spurious operations. For instance when performing measurements on the qubits at locations $(1,1)$ and $(1,5)$ we must avoid also performing a measurement on the qubit at location $(5,5)$. To avoid this situation we will bring only the bottom row of qubits to be measured horizontally adjacent to the ancilla qubits, perform the PSB process and readout on that row only and then shuttle the qubits to be measured back down again. This we repeat going up in rows until we reach the end of the grid. More formally we perform the following sequence of operations. 
\small
\begin{algorithm}[H]
\caption{Loop over OPCODES to perform line-by-line measurements}
\label{alg:line by line measurement}
\begin{algorithmic}[1]
\For{$i \in [0:N-2]$}

	\If{$i= 1 \mod 4 $}
		\State $\code{VS[L]},\;\;\;\; \code{L} = \{(i-1,j,-1)\;\|\; j =1 \mod 4\}$

		\State $\code{M[L]},\;\;\;\;\;\: \code{L} = \{ (i,j,1)\;\|\; j = 1\mod 4\}$

		\State $\code{VS[L]},\;\;\;\; \code{L} = \{(i-1,j,1)\;\|\; j =1 \mod 4\}$

	\EndIf

	\If{$i= 3 \mod 4 $}

		\State $\code{VS[L]},\;\;\;\; \code{L} = \{(i-1,j,-1)\;\|\; j=3 \mod 4\}$

		\State $\code{M[L]},\;\;\;\;\;\: \code{L} = \{ (i,j,-1)\;\|\; j = 3\mod 4\}$

		\State $\code{VS[L]},\;\;\;\;\code{L} = \{(i-1,j,1)\;\|\; j =3 \mod 4\}$

	\EndIf

\EndFor
\end{algorithmic}	
\end{algorithm}
\noindent We will use this particular procedure when performing the readout step in a surface code error correction cycle in \cref{sec:surface code implementation}. This concludes our discussion of parallel operation on the QDP. We now move on to highlight some \boardstate configurations that will feature prominently in the surface and color code mappings.

\subsection{Some useful grid configurations}\label{subsec:configurations}

\begin{figure*}
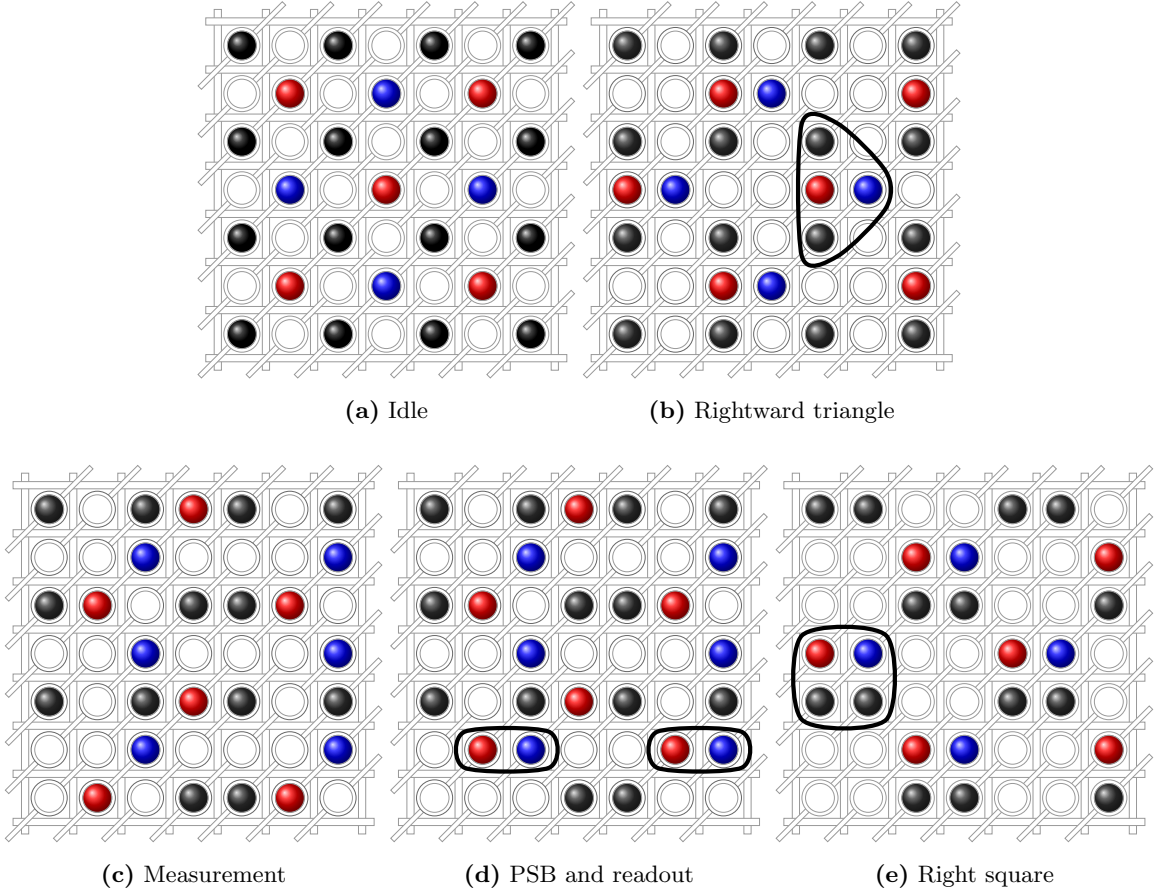

\include{configs}
\caption{Useful \boardstate configurations. We denote data qubits with black color, $X$-measurement qubits by red and $Z$-measurement qubits by blue. Those will collect the parity of the data qubits in one error correction cycle, and one is the others reference at the PSB measurement.   \textbf{(a)} The idle  configuration is a starting point of all algorithms.  All qubits are spred out and well separated. \textbf{(b)} The triangle configurations (here we have a rightward triangle, see the frame in the figure) is assumed when the proximity of measurement qubits to data qubits is required. This is the case for the parity measurements in error correction cycles.  \textbf{(c)} The measurement configuration is formed to bring $X$- and $Z$-measurement qubits close to each other, such that a row can be selected in which the measurement is performed. \textbf{(d)}  Certain measurement qubits are brought to adjacent dots in order to perform the PSB-based measurement and  readout in a  line-by-line fashion (encircled qubits). Since the rest of the grid is in the measurement configuration, individual control over the barrier lines and one potential is guaranteed without spurious measurements.    \textbf{(e)} The (right) square configuration is a mid-way point between the idle and (right) triangle configuration. Going through the square configuration keeps the shuttling algorithm managable, as not more that 2 different heights of the dot potentials are employed. One of the characteristic squares is enframed in the figure.  }\label{fig:boardstate configurations}
\end{figure*}

There are several configurations of the \boardstate that show up frequently enough (for instance in the error correction codes in \cref{sec:surface code implementation}) to merit some special attention. In this section we list these specific configurations and show how to construct them.
\subsubsection{Idle configuration}
The idle configuration is the configuration in which the QDP is initialized. As shown in \cref{fig:boardstate configurations} it has a checkerboard pattern of filled and unfilled sites. In this configuration no two-qubit gates can be applied between any qubit pair but since it minimizes unwanted crosstalk between qubits \cite{veldhorst2017crossbar}, it is good practice to bring the system back to this configuration when not performing any operations. For this reason we consider the idle configuration to be the starting point for the construction of all other configurations.
\subsubsection{Square configuration}
As seen in \cref{fig:boardstate configurations}(e) the square configurations consist of alternating filled and unfilled $2\times 2$ blocks of sites. The so-called right square configuration can be reached from the idle configuration by a shuttling operation \code{HS[L]} with the set \code{L} being 
\begin{align}
\code{L} &= \{(i,j,1) \;\|\;i=1\mod 2, \notag \\
&\hspace{12mm}j=1\mod 2 , i+j=2 \mod 4\}\notag\\
&\hspace{5mm}\cup\{(i,j,-1) \;\|\;i=0\mod 2,\notag\\
&\hspace{12mm}j=1\mod 2, i+j=3 \mod 4\}.
\end{align}
Note that this operation only takes a single time-step, the square configuration is shown in \cref{fig:boardstate configurations}(e).
The right square configuration is characterized by the red ($Z$-) ancilla being in the left corner of every square.  Another flavor of this configuration is the left square configuration, where the red ancilla is in the upper right corner, and the blue one in the left.
The left square configuration can be reached from the idle configuration by a shuttling operation \code{HS[L]} with the set \code{L} being  .

\begin{align}
\code{L} &= \{(i,j,1) \;\|\;i=0\mod 2,\notag \\
&\hspace{12mm}j=0\mod 2 ,\, i+j=2 \mod 4\} \notag \\
&\hspace{5mm}\cup\{(i,j,-1) \;\|\;i=1\mod 2, \notag\\
&\hspace{12mm}j=0\mod 2,\, i+j=1 \mod 4\}.
\end{align}

These configurations are used as an intermediate step for us to reach the triangle configurations.
\subsubsection{Measurement Configuration}
The measurement configuration can be reached from the idle configuration in three time-steps by the following sequence of parallel shuttling operations.
\begin{align}
&\code{HS[A]},\;\;\; \code{A} = \{ (i,j,-1), \;(i-1,j-1,1)\;\notag\\
&\hspace{25mm}\|\;i =1\mod 4, \;j=2 \mod 4\},\notag\\
&\code{HS[B]},\;\;\; \code{B} = \{(i-1,j-1,1)\;\notag\\
&\hspace{25mm}\|\;i =3\mod 4, \;j=1 \mod 4\},\notag\\
&\code{VS[C]},\;\;\; \code{C} = \{ (i,j,-1)\;\|\;i =0\mod 2,\notag\\
&\hspace{23mm} j=1 \mod 2,\;i+j = 1 \mod 4 \}.
\end{align}
This configuration can be seen in \cref{fig:boardstate configurations}(d) and it is an intermediate state in the measurement process of the blue qubits using the red qubits as ancillas. How this measurement protocol works in detail is described in \cref{subsec:parallel measurements}.

\subsubsection{Triangle configurations}
In order to collect the parity of the data qubits in the error correction cycles, we need to align the ancilla qubits with the data qubits, according to the two-qubit gates used. This is reflected in the use of triangle configurations. 
There are two triangle configurations that can be reached in a single parallel shuttling step from the right square configuration. The first one, seen in \cref{fig:boardstate configurations}(b), is called the rightward triangle configuration. It can be reached from the square configuration by the grid operation \code{HS[L]} with the set \code{L} being
\begin{align}
\code{L} &= \{(i,j,-1) \;\|\;0=1\mod 2, \notag\\&\hspace{10mm} \, j=1\mod 2, i+j =3\mod 4\},
\end{align}
which does as much as to shuttle the right data qubit of every square (enframed squares in \cref{fig:boardstate configurations}(e)) to the empty dot on its right. 
In this configuration, we are able to perform high-fidelity two-qubit gates between the two data qubits and the ancilla in every triangle. In order to reach the neighboring pair of data qubits with the same ancilla, we start from the left square configuration and shuttle the left data qubit to the left. Operationally, we would do \code{HS[L]} with
\begin{align}
\code{L} &= \{(i,j,1) \;\;\|\;\;i=0\mod 2, \notag\\ &\hspace{10mm} j=0\mod 2, \, i+j =2\mod 4\}.
\end{align}
Note again that these parallel shuttling operations can be performed in a single time step. From these configurations the idle configuration can also be reached in a single time step.
In the next section these configurations will feature prominently in the mapping of several quantum error correction codes to the QDP architecture.

\section{Error correction codes}\label{sec:error correction codes}
In this section we will apply the techniques we developed in the previous sections to map several quantum error correction codes to the QDP. 
\subsection{Introduction}
First we recall some basic facts about quantum error correction codes and topological stabilizer codes in particular. The focus will be on practical application, for a more in depth treatment of quantum error correction and topological error correction codes we refer to~\cite{lidar2013quantum}. Recall first the Pauli operators on a single qubit:
\begin{equation}
X = \begin{pmatrix} 0 & 1 \\ 1 & 0 \end{pmatrix},\;\;\;\;Z= \begin{pmatrix} 1 & 0 \\ 0 & -1 \end{pmatrix}.
\end{equation}
Given a system of $n$ qubits we denote by $P_i$ the Pauli operator $P\in \{X,Z\}$ acting on the $i$'th qubit. With this definition we can see write the $n$ qubit Pauli group $\mc{P}_n$ as the group generated by the operators $\{X_i,Z_j \;:\;i,j\in [1:n]\}$ under matrix multiplication. A stabilizer quantum error correction code acting on $n$ physical qubits and encoding $k$ logical qubits can then be defined as the joint positive eigenspace of an abelian subgroup $\mc{S}$ of $\mc{P}_n$ generated by $n-k$ independent commuting Pauli operators. Operationally, this code is then defined by measuring the generators of $\mc{S}$ and if necessary perform corrections to bring the state of the system back into the positive joint eigenspace of these generators. This is a very general definition and it is not guaranteed that a code defined this way yields any protection against errors happening. Below we will see some common examples of stabilizer error correction codes that do have good protection against errors. On top of that, these codes have the desirable property that their stabilizers are in some sense `local'. That is they can be implemented on qubits lying on a lattice such that the stabilizer generators can be measured by entangling a patch of qubits that is small with respect to the total lattice size. The most well known example of a code of this type is the so-called planar surface code.

\subsection{Planar surface code}
The planar surface code is probably the most well known practical quantum error correction code due to its high threshold~\cite{wang2011surface}, the availability of efficient decoding algorithms~\cite{fowler2009high}. To construct the planar surface code (in particular we will use the so-called rotated planar surface code~\cite{horsman2012surface}, as it uses less physical qubits per logical qubit) we will consider a regular $n\times n$ square lattice of degree four (every node has four connected neighbors) and we will place qubits on each node. We will define the generators of the abelian group $\mc{C}$ that defines the surface code by alternately placing $X$- and $Z$-quartets on the faces of the lattice (in \cref{fig:surface code} the red faces correspond to $X$-stabilizer quarters while the green faces correspond to $Z$ stabilizer quartets). This $X (Z)$ will indicate that we pick the generator $X\tn{4}$ ($Z\tn{4}$) on the four qubits on the corners of the $X$ ($Z$) face. Note that this means that all of the generators commute with each other since they either act on disjoint sets of qubits or act on sets that have an overlap of exactly $2$ qubits. Since $XZ = -ZX$ we have that $X\tn{2}Z\tn{2} = Z\tn{2}X\tn{2}$ which means that all generators commute. These generators (plus appropriate generators on the boundary of the lattice) define a stabilizer group which specifies a code space of dimension $2$, i.e.~a single logical qubit. We can locally measure these $X (Z)$ stabilizers by using the circuits~\cite{terhal2015quantum,gottesman1998theory,lidar2013quantum,fowler2012surface} illustrated in \cref{fig:stabilizer circuits}. This construction calls for one ancilla qubit per lattice face.

\begin{figure*}[t]
 \centering
 \begin{minipage}{.48\textwidth}
 \centering
 \includegraphics[scale=0.4]{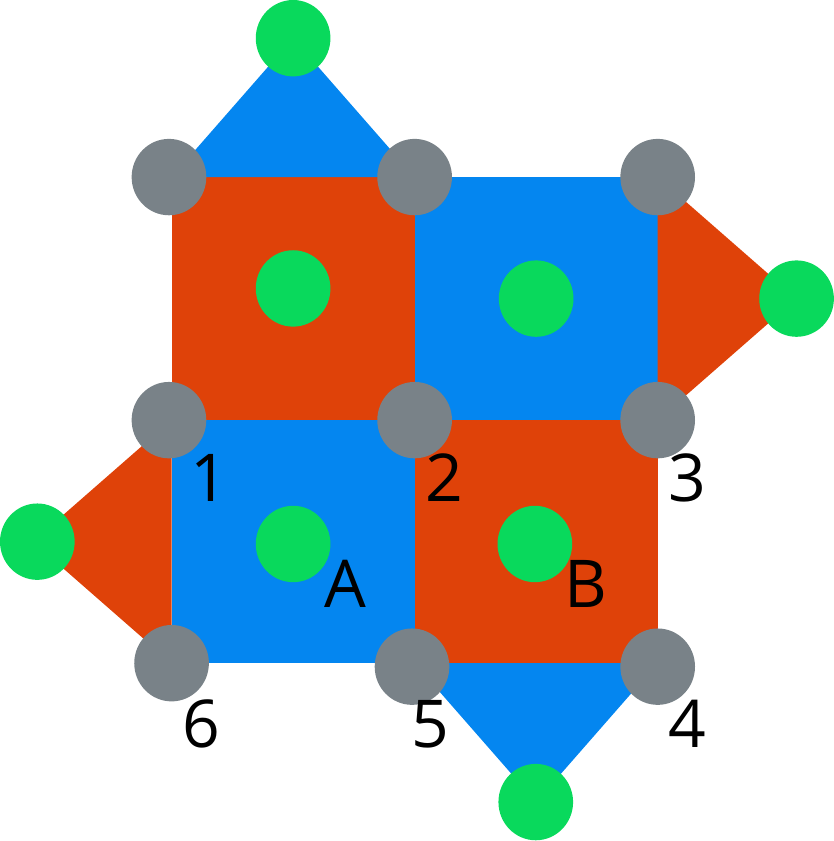}
 \caption{Schematic representation of a distance three rotated planar surface code~\cite{horsman2012surface}. The gray circles represent the data qubits supporting the code. The green circles represent ancilla qubits, which are used to perform the stabilizer measurements which define the code. These stabilizer measurements are represented by the red ( $Z$-type stabilizers) and blue faces ($X$-type stabilizers). The ancilla qubit in the middle of a face will be used to perform a stabilizer measurement of the data qubits on the corners of that face. The actual quantum circuits used to perform these stabilizer measurements are shown in \cref{fig:stabilizer circuits}.}\label{fig:surface code}
\end{minipage}%
$\;\;$
\begin{minipage}{0.48\textwidth}
\centering
{\bf $Z$ stabilizer sequence}
\begin{equation*}
\Qcircuit @C=1em @R=.7em {
	\lstick{\ket{0}_A} 	& \targ 	& \targ 	& \targ 	& \targ 	& \measuretab{M_{Z}}\\
	\lstick{\ket{\psi}_1}	& \ctrl{-1} & \qw		& \qw		& \qw 		& \qw				\\
	\lstick{\ket{\psi}_2}	& \qw 		& \ctrl{-2}	& \qw		& \qw 		& \qw				\\
	\lstick{\ket{\psi}_5}	& \qw 		& \qw		& \ctrl{-3}	& \qw 		& \qw				\\
	\lstick{\ket{\psi}_6}	& \qw 		& \qw		& \qw		& \ctrl{-4}	& \qw				
}
\end{equation*}

{\bf $X$-stabilizer sequence}\\
\begin{equation*}
\Qcircuit @C=1em @R=.5em {
	\lstick{\ket{+}_{B}} 	& \ctrl{1} 	& \ctrl{2} 	& \ctrl{3} 	& \ctrl{4} 	& \measuretab{M_{X}}\\
	\lstick{\ket{\psi}_2}	& \targ	 	& \qw		& \qw		& \qw 		& \qw				\\
	\lstick{\ket{\psi}_3}	& \qw 		& \targ		& \qw		& \qw 		& \qw				\\
	\lstick{\ket{\psi}_4}	& \qw 		& \qw		& \targ		& \qw 		& \qw				\\
	\lstick{\ket{\psi}_5}	& \qw 		& \qw		& \qw		& \targ		& \qw				
}
\end{equation*}
\caption{Quantum circuits for performing the $X$- and $Z$-stabilizer measurements of the planar surface code~\cite{terhal2015quantum,gottesman1998theory,lidar2013quantum,fowler2012surface}. The qubits $A$ and $B$ (see \cref{fig:surface code}) are ancilla qubits used to perform stabilizer measurements on the the data qubits on the corners of the faces defining the code. The data qubits associated to the face of qubit $A$ are $\{1,2,5,6\}$ and likewise $
\{2,3,4,5\}$ for qubit $B$.}\label{fig:stabilizer circuits}
\end{minipage}
\end{figure*}

\subsection{2D color codes}
Another important class of planar topological codes are the 2D color codes~\cite{bombin2006topological}. These codes are defined on 3-colorable tilings of the Euclidean plane. Two popular tilings are the so called $6.6.6.$ and $4.8.8.$ tilings corresponding to hexagonal and square-octagonal tilings respectively. To construct the code qubits are places on all vertices of the tiling and $X$- and $Z$-stabilizers are associated to every tile by applying $X$ ($Z$) to every qubit on the corner of the tile. With suitable boundary conditions this construction encodes a single logical qubit with a distance proportional to $\sqrt{n}$ with $n$ the number of physical qubits. See \cref{fig:color codes} for examples of the $6.6.6.$ and $4.8.8.$ color codes of distance five. Note that these pictures do not include ancilla qubits for measuring the stabilizers. The planar color codes have lower thresholds than the planar surface code but are more versatile when it comes to fault-tolerant gates. The planar color codes support the full Clifford group as a transversal set, making quantum computation on color codes more efficient than on the surface code. In the next section we will focus on mapping these codes to the QDP using the concepts introduced in \cref{sec:An assembly language for crossbar control}.

\begin{figure*}
\begin{minipage}{0.45\textwidth}
\centering
\includegraphics[scale=0.2]{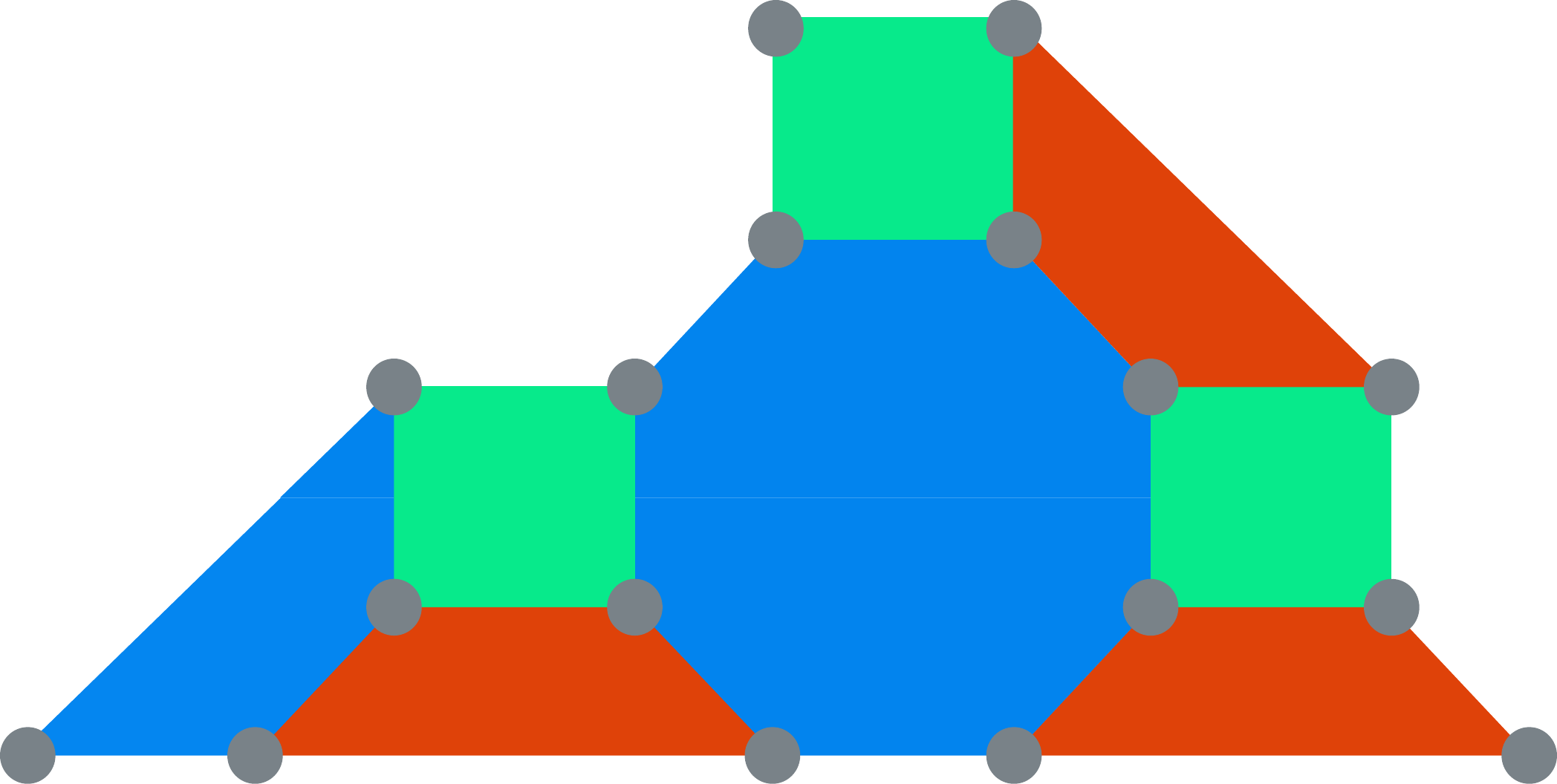}
$\vspace{6mm}$
\includegraphics[scale=0.2]{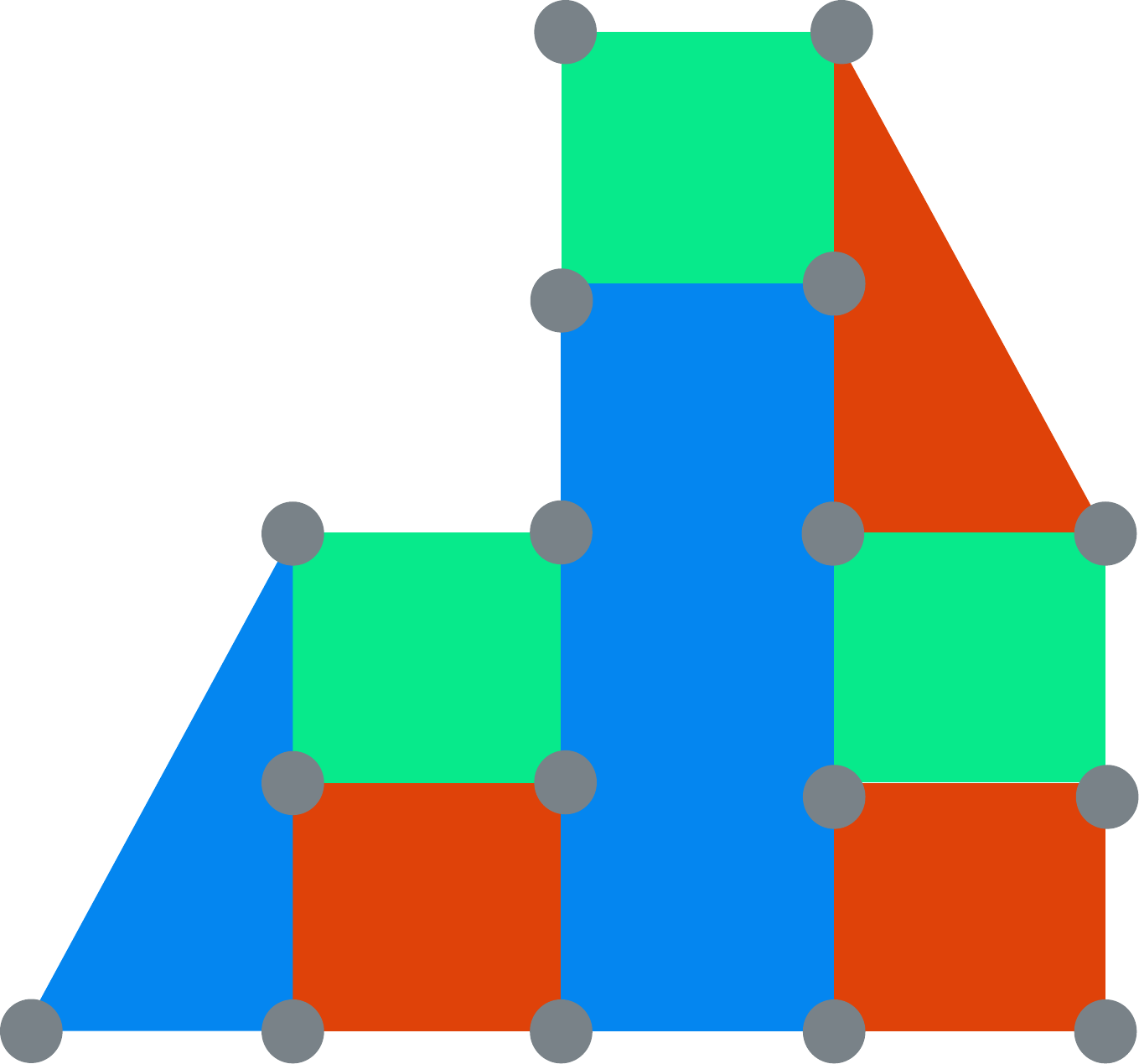}
\end{minipage}
$\;\;\;\;$
\begin{minipage}{0.45\textwidth}
\centering
\includegraphics[scale=0.2]{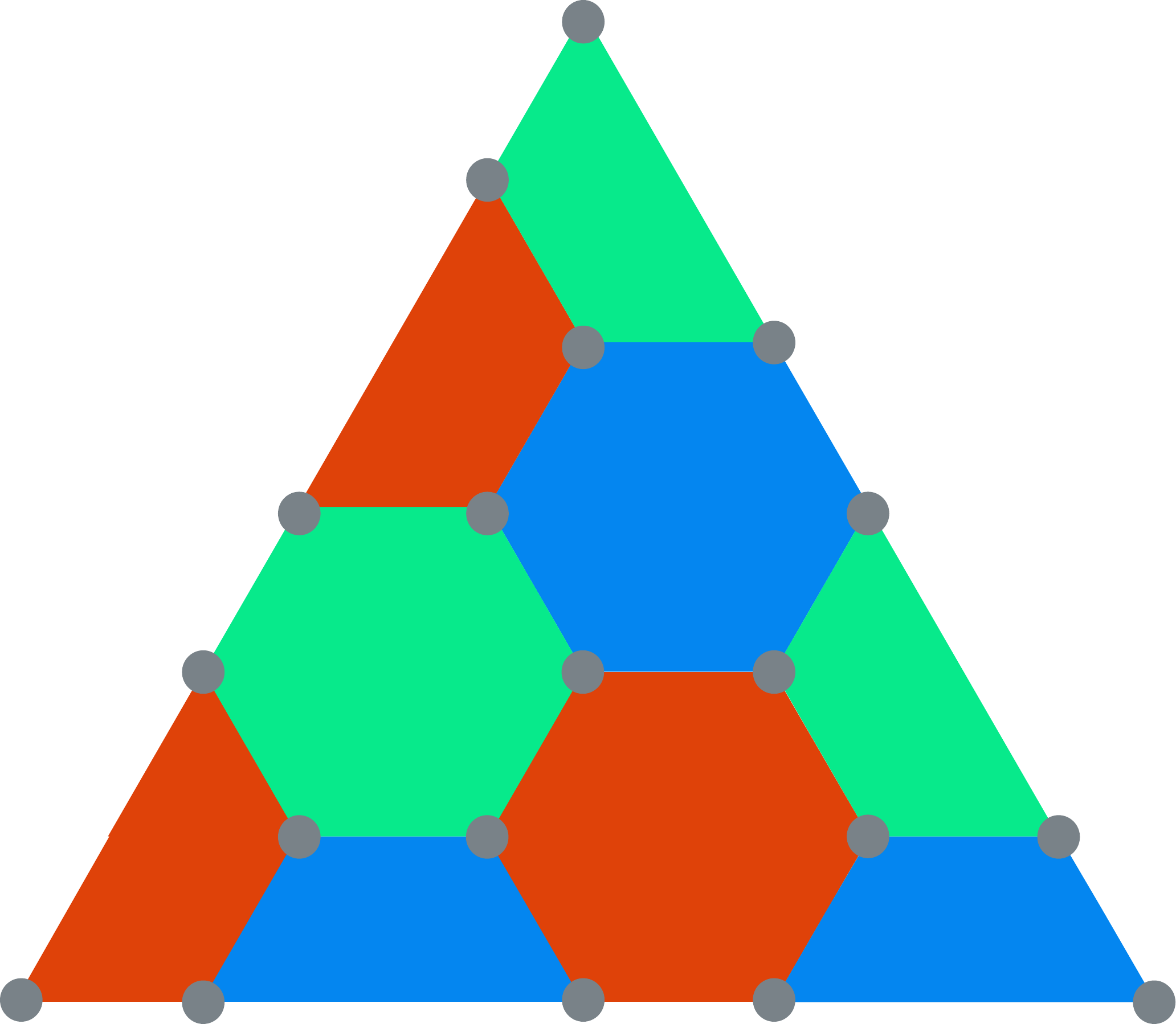}
$\vspace{6mm}$
\includegraphics[scale=0.2]{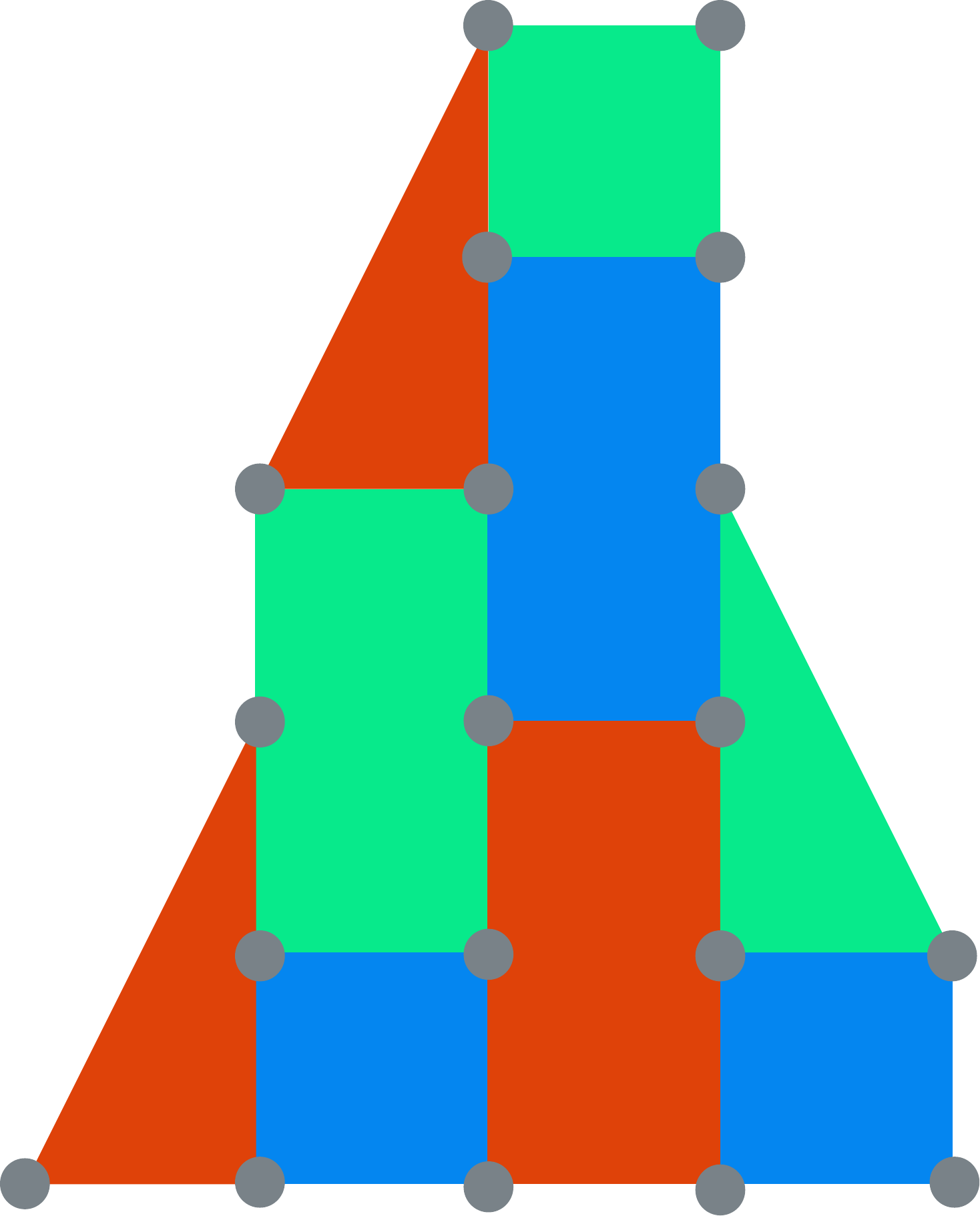}

\end{minipage}
\caption{Distance $5$ examples of the $4.8.8.$ (first from left) and $6.6.6.$ (third from left) color codes~\cite{bombin2006topological} and their deformed versions (second from left and fourth from left respectively). The vertices correspond to data qubits and every colored face corresponds to both an $X$- and a $Z$-stabilizer to be measured. These stabilizers can be measured by using weight $4, 6$ and $8$ versions of the circuits shown in \cref{fig:stabilizer circuits}. The deformation of the codes does not change the code properties at all. They are a visual guide that facilitates the mapping the the crossbar grid in \cref{sec:color code implementation}. }\label{fig:color codes}

\end{figure*}

\twocolumngrid

\subsection{Surface code mapping}\label{sec:surface code implementation}
We now describe a protocol that maps the surface code on the architecture described in \cref{sec:the quantum dot processor}. The surface code layout has a straightforward mapping that places the data qubits on the even numbered columns and the $X$- and $Z$-ancillas on the odd columns. This means we have single-qubit control over all data qubits and all ancilla qubits separately. There are two ways to perform the surface code cycle; we could use either the \sw gate or the \cphase gate as the main two-qubit gate. Since in practice the \sw gate has higher fidelity \cite{veldhorst2017crossbar} we will use this gate. We begin by changing the circuits performing the $X$- and $Z$-stabilizer measurements to work with \sw rather than $\mathrm{CNOT}$. We can emulate a \cnot gate by using two \sw gates interspersed with a $Z$-gate on the control plus some single qubit gates. As described in \cref{subsubsec:CNOT subroutines} the $Z$- and $S$-gates on the ancilla qubit can performed by waiting, which means they can be performed locally while the single qubit operations on the data qubits can be performed in parallel using the global unitary rotations described in \cref{subsubsec:single qubit rotations}. The $X$- and $Z$-circuits using \sw are shown in \cref{fig:swap circuits}. \\

We will split up the quantum error correction cycle by first performing all $X$-type stabilizers (the $X$-cycle) and then all $Z$-type stabilizers ($Z$-cycle). This means we can use the idle $Z$- ($X$-) ancilla to perform a measurement on the $X$- ($Z$-) ancilla at the end of the $X$ ($Z$) cycle. For convenience we included a depiction of the surface code $Z$-cycle unit cell in \cref{fig:unit cells} (right). The qubit labeled `A' is the ancilla used for the $Z$ stabilizer circuit. The numbered qubits are data qubits and the qubit labeled `B' is the qubit used for reading out the `A' qubit. It is also the ancilla qubit for the $X$-cycle. We now describe the steps needed to perform the $Z$-cycle in parallel on the entire surface code sheet. For convenience we ignore the surface code boundary conditions but these can be easily included. The $X$-cycle is equivalent up to different single qubit gates ($XS\ct$ instead of $ZHS\ct$ on the data qubits, $HS\ct$ instead of $S\ct$ on the ancilla) and shifting every operation $2$ steps up, e.g.~setting $i$ to $i+2$. 
\begin{figure*}
\centering
{\bf $Z$ stabilizer sequence}
\begin{equation*}
\Qcircuit @C=0.6em @R=.4em {
	\lstick{\ket{0}_A} 	&\gate{S\ct}	& \ctrlb{0} 	& \gate{Z}	& \ctrlb{0} 	&\gate{S\ct} 	& \ctrlb{0} 	& \gate{Z}	& \ctrlb{0} 	&\gate{S\ct} 	& \ctrlb{0} 	&\gate{Z} 	&\ctrlb{0} 			&\gate{S\ct}		&\ctrlb{0}	&\gate{Z}		&\ctrlb{0} 	&\gate{S\ct}	&\measuretab{Z}\\
	\lstick{\ket{q_1}}	&\gate{ZHS\ct}	& \ctrlb{-1} 	& \qw		& \ctrlb{-1}	& \qw			& \qw			& \qw 		& \qw			&\qw 			& \qw 			&\qw 		&\qw 				&\qw 				&\qw		&\qw 			&\qw 		&\gate{ZHS\ct}	&\qw	\\
	\lstick{\ket{q_2}}	&\gate{ZHS\ct}	& \qw 			& \qw 		& \qw			& \qw			& \ctrlb{-2}	& \qw 		& \ctrlb{-2}	&\qw 			& \qw 			&\qw 		&\qw 				&\qw 				&\qw		&\qw 			&\qw 		&\gate{ZHS\ct}	&\qw	\\
	\lstick{\ket{q_3}}	&\gate{ZHS\ct}	& \qw 			& \qw 		& \qw			& \qw			& \qw			& \qw 		& \qw			&\qw 			& \ctrlb{-3} 	&\qw 		&\ctrlb{-3} 		&\qw 				&\qw		&\qw 			&\qw 		&\gate{ZHS\ct}	&\qw	\\
	\lstick{\ket{q_4}}	&\gate{ZHS\ct}	& \qw 			& \qw 		& \qw			& \qw			& \qw			& \qw		& \qw			&\qw 			& \qw 			&\qw 		&\qw 				&\qw 				&\ctrlb{-4}	&\qw 	 		&\ctrlb{-4}	&\gate{ZHS\ct}	&\qw 
}
\end{equation*}
\caption{$Z$ stabilizer measurement circuit using the \sw as the main two-qubit gate. The $Z$- and $S$-rotations can be performed by the timing procedure described in \cref{subsubsec:single qubit rotations}.}\label{fig:swap circuits}
\end{figure*}
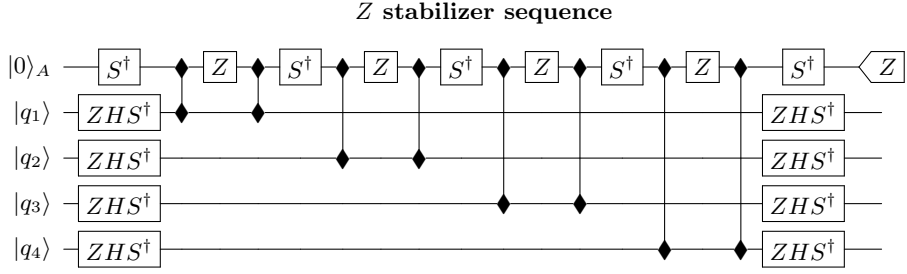

\begin{figure*}
\begin{minipage}{0.30\textwidth}
\centering
\includegraphics[scale=0.87]{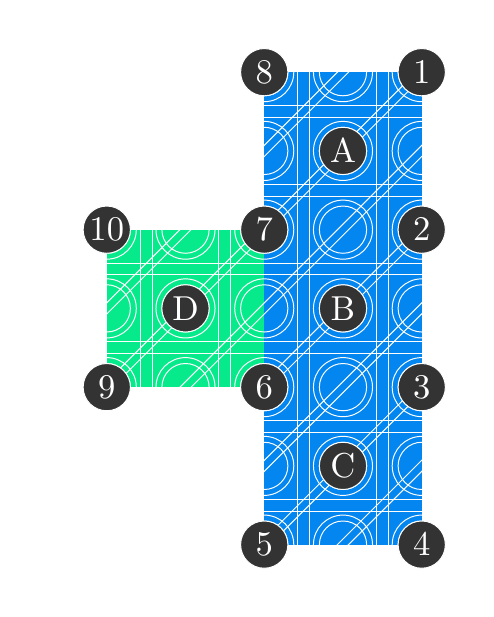}
\end{minipage}
\begin{minipage}{0.30\textwidth}
\centering
\includegraphics[scale=0.87]{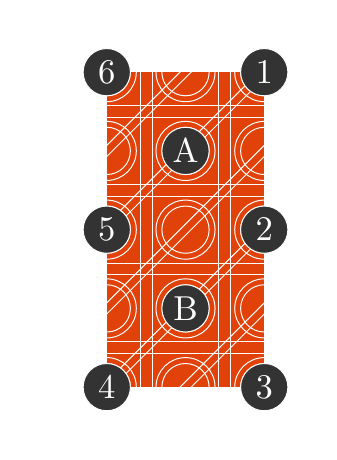}
\end{minipage}
\begin{minipage}{0.30\textwidth}
\centering
\includegraphics[scale=0.87]{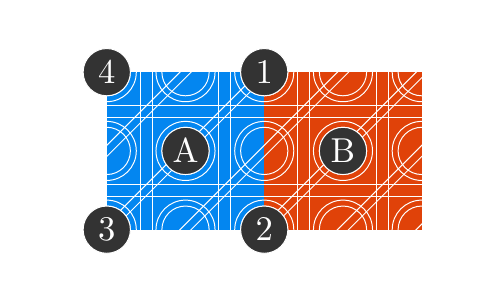}
\end{minipage}
\caption{Unit cells of the deformed $4.8.8.$ and $6.6.6.$ codes (left and middle respectively) and the unit cell of the surface code ($Z$-cycle) with the gray circle corresponding to qubits. For the $4.8.8.$ unit cell the qubit labeled `A' is the ancilla qubit for the octagon (now a rectangle) sub-cell while the qubit labeled `D' is the ancilla for the square sub-cell. The qubit labeled `B' is used to read out the qubit labeled `D' and the qubit labeled `C' is used to read out the ancilla qubit for the octagon cell directly below the square cell (not pictured). The qubits labeled by numbers are the data qubits. For the $6.6.6.$ unit cell the qubit labeled `A' is the ancilla qubit used to perform the stabilizer measurement while the qubit labeled `B' is used to read out the `A' qubit for the unit cell directly to the bottom left (not pictured). The numbered qubits are again data qubits. For the surface code unit cell the qubit labeled `A' is the ancilla used for the $Z$-cycle stabilizer measurement while the qubit labeled `B' is the qubit used to read out the `A' qubit. It is also the qubit used as the ancilla for the $X$-stabilizer cycle. The numbered qubits are again data qubits. Note that this unit cell mirrors when moving upwards. That is, the unit cell above the one pictured will have the ancilla qubit B to the right of qubit A instead of to the left as pictured. }\label{fig:unit cells}
\end{figure*}

\onecolumngrid 
%these two empty spaces are needed, don't delete

\noindent
\fbox{
\begin{minipage}[t]{\textwidth}
\centering
{\bf The surface code $Z$-cycle}
\begin{enumerate}[{\bf ~Step 1:}]
\setlength\itemsep{0.1em}
	\item Initialize in the idle configuration
	\item Apply $ZHS\ct$ to all qubits in $\mc{R}$ (data) and $S\ct$ to qubits in $\mc{B}$ (ancilla)
	\item Go to right square configuration 
	\item Go to rightward triangle configuration 
	\item Perform \cnot between qubits A and $1$ by performing \code{VC[L]} with
	\begin{equation*}
	\code{L} = \{(i,j)\hspace{2mm} \|\hspace{2mm} i=1\mod 2, j = 0 \mod 2, i+j = 3 \mod 4\} 
	\end{equation*}
	\item Perform \cnot between qubits A and $2$ by performing \code{VC[L]} with
	\begin{equation*}
	\code{L} = \{(i,j)\hspace{2mm} \|\hspace{2mm} i=0\mod 2, j = 0 \mod 2, i+j = 2 \mod 4\} 
	\end{equation*}
	\item Go to idle configuration 
	\item Go to left square configuration 
	\item Go to leftward triangle configuration 
	\item Perform \cnot between qubits A and $3$ by performing \code{VC[L]} with
	\begin{equation*}
	\code{L} = \{(i,j)\hspace{2mm} \|\hspace{2mm} i=1\mod 2, j = 0 \mod 2, i+j = 1 \mod 4\} 
	\end{equation*}
	\item Perform \cnot between qubits A and $4$ by performing \code{VC[L]} with
	\begin{equation*}
	\code{L} = \{(i,j)\hspace{2mm} \|\hspace{2mm} i=0\mod 2, j = 0 \mod 2, i+j = 0 \mod 4\} 
	\end{equation*}
	\item Go to idle configuration 
	\item Apply $ZHS\ct$ to all qubits in $\mc{R}$ (data) and $S\ct$ to qubits in $\mc{B}$ (ancilla)
	\item Apply measurement ancilla correction step for qubit B as described in \cref{subsubsec:selective parallel}
	\item Go to measurement configuration 
	\item Perform Pauli Spin Blockade measurement process as described in \cref{subsec:parallel measurements} using qubit B as ancilla to qubit A 
	\item Go to idle configuration

\end{enumerate}
	\phantom{end}
\end{minipage}
}
\\\phantom{end}\\

\twocolumngrid

\subsection{Color code mapping}\label{sec:color code implementation}
The mapping of the color codes is largely analogous to that of the surface code. We begin with the $6.6.6.$ color code as it is easiest to map. We begin by deforming the tiling on which the color code is defined such that it is more amenable to the square grid structure of the QDP. This is fairly straightforward as can be seen from the $d=5$ example in \cref{fig:color codes}. In the deformed tiling it is clear how to map the code to the crossbar grid layout. We once again place all data qubits in the even columns and all ancilla qubits in the odd columns. This places the unit `hexagon' seen in the deformed code in a $3\times5$ tile on the QDP (see \cref{fig:unit cells} (right) for this unit tile). This places all data qubits in $\mc{R}$ and $2$ extra qubits in $\mc{B}$, both of which could be used as an ancilla in the stabilizer circuit. We will always choose the top qubit (qubit `A') of these two in the hexagon unit cell as the ancilla qubit for the error correction cycles. The extra (bottom) qubit (qubit `B') in the unit cell will be used to perform the readout of the ancilla qubit of the unit hexagon to its direct left. This has the advantage of making the readout process independent of the measurement results of the previous cycles (as was the case in the surface code). Note also that the ancilla qubits are positioned along diagonal lines on the QDP grid. This makes the quantum error correction cycle very analogous to the surface code. We once again must split up the $X$- and $Z$-cycles (again due to the limited single qubit rotations possible). Below we present the steps needed to perform the $Z$-cycle (which now measures a weight $6$ operator). The $X$-cycle is identical up to differing single qubit rotations on the data qubits. \\

\onecolumngrid 
%these two empty spaces are needed, don't delete

\noindent
\fbox{
\begin{minipage}[t]{\textwidth}
\centering
{\bf The $6.6.6$ color code $Z$-cycle}
\begin{enumerate}[{\bf ~Step 1:}]
\setlength\itemsep{0.1em}
	\item Perform {\bf Steps 1 to 11} in the surface code $Z$-cycle to perform \cnot s between the ancilla (qubit A) and the data qubits $1,2,5,6$ in the unit hexagon and end in the idle configuration
	\item Go to idle configuration but with all even columns up and all odd columns down by performing \code{VS[L]} with
	\begin{align*}
	\code{L} &= \{(i,j,1)\hspace{2mm} \|\hspace{2mm} i=0\mod 2, j = 0 \mod 2\}\\
				&\hspace{10mm} \cup\{(i,j,-1)\hspace{2mm} \|\hspace{2mm} i=1\mod 2, j = 1 \mod 2\}
	\end{align*}
	\item Go to right square configuration 
	\item Go to rightward triangle configuration 
	\item Perform \cnot between qubits A and $3$ by performing \code{VC[L]} with
	\begin{equation*}
	\code{L} = \{(i,j)\hspace{2mm} \|\hspace{2mm} i=1\mod 2, j = 0 \mod 2, i+j = 1 \mod 4\} 
	\end{equation*}
	\item Go to idle configuration 
	\item Go to left square configuration 
	\item Go to leftward triangle 
	\item Perform \cnot by performing between qubits A and $4$ \code{VC[L]} with
	\begin{equation*}
	\code{L} = \{(i,j)\hspace{2mm} \|\hspace{2mm} i=0\mod 2, j = 0 \mod 2, i+j = 2 \mod 4\} 
	\end{equation*}
	\item Go to idle configuration 
	\item Invert {\bf Step 6} by performing \code{VS[L]} with
	\begin{align*}
	\code{L} &= \{(i,j,-1)\hspace{2mm} \|\hspace{2mm} i=0\mod 2, j = 0 \mod 2\}\\
				&\hspace{10mm} \cup\{(i,j,1)\hspace{2mm} \|\hspace{2mm} i=1\mod 2, j = 1 \mod 2\}
	\end{align*}
	\item Apply $ZHS\ct$ to all qubits in $\mc{R}$ (data) and $S\ct$ to qubits in $\mc{B}$ (ancilla)
	\item Go to measurement configuration 
	\item Perform Pauli Spin Blockade measurement process as described in \cref{subsec:parallel measurements} using qubit B as ancilla to read out qubit A (unit cell to the right)
	\item Go to idle configuration

\end{enumerate}
\end{minipage}
}
\\\phantom{end}\\

\twocolumngrid

Next up is the $4.8.8.$ color code. We deform the tiling on which the code is defined similarly to the $6.6.6.$ code. The deformed $4.8.8.$ code lattice can be seen in \cref{fig:unit cells} (left). We again place the data qubits in the set $\mc{R}$ the ancilla qubits in the set $\mc{B}$. See \cref{fig:unit cells} for a layout of the unit cell of the $4.8.8.$ code on the QDP. Note that there are two different types of tiles in this code. The square tile has one qubit (qubit `D' in \cref{fig:unit cells}) in $\mc{B}$, which we will use as ancilla qubit for that tile. The deformed octagon tile has three qubits in $\mc{B}$. We will use the topmost qubit (qubit `A') as the ancilla qubit for the tile while the middle one (qubit `B') serves as the readout qubit for the square tile ancilla directly to its left and the bottommost one (qubit `C') will be used to perform the readout of the octagon directly below the square tile (not pictured). Because the structure of the $4.8.8.$ code is less amenable to direct mapping the stepping process is a little more complicated. We will again only write down the $Z$-cycle with the $X$-cycle being the same up to initial and final single qubit rotations on the data qubits.\\

\onecolumngrid 
%these two empty spaces are needed, don't delete

\noindent
\fbox{
\begin{minipage}[t]{\textwidth}
\begingroup
\setlength{\abovedisplayskip}{7pt}
\setlength{\belowdisplayskip}{7pt}
\centering
{\bf The $4.8.8$ color code $Z$-cycle}

\begin{enumerate}[{\bf ~Step 1:}]
\setlength\itemsep{-0.1em}
	\item Initialize in the idle configuration
	\item Apply $ZHS\ct$ to all qubits in $\mc{R}$ (data) and $S\ct$ to qubits in $\mc{B}$ (ancilla)
	\item Go to right square configuration
	\item Go to rightward triangle configuration 
	\item Perform \cnot between qubits A and $1$ and D and $7$ by performing \code{VC[L]} with
	\begin{equation*}
	\code{L} = \{(i,j)\hspace{2mm} \|\hspace{2mm} i=1\mod 2, j = 0 \mod 2, [i+j =3\lor 7 \mod 16]\} 
	\end{equation*}
	\item Perform \cnot between qubits A and $2$ and d and $6$ by performing \code{VC[L]} with
	\begin{equation*}
	\code{L} = \{(i,j)\hspace{2mm} \|\hspace{2mm} i=0\mod 2, j = 0 \mod 2, [i+j =2\lor 6 \mod 16]\} 
	\end{equation*}
	\item Go to left square configuration
	\item Go to left triangle configuration 
	\item Perform \cnot between qubits A and $8$ and D and $9$ by performing \code{VC[L]} with
	\begin{equation*}
	\code{L} = \{(i,j)\hspace{2mm} \|\hspace{2mm} i=1\mod 2, j = 0 \mod 2, [i+j =1\lor 5 \mod 16]\} 
	\end{equation*}
	\item Perform \cnot between qubits A and $7$ and d and $10$ by performing \code{VC[L]} with
	\begin{equation*}
	\code{L} = \{(i,j)\hspace{2mm} \|\hspace{2mm} i=0\mod 2, j = 0 \mod 2, [i+j =0\lor 4 \mod 16]\} 
	\end{equation*}
	\item Go to idle configuration
	\item Go to idle configuration but with all even columns up and all odd columns down by performing \code{VS[L]} with
	\begin{align*}
	\code{L} &= \{(i,j,1)\hspace{2mm} \|\hspace{2mm} i=0\mod 2, j = 0 \mod 2\}\\
				&\hspace{10mm} \cup\{(i,j,-1)\hspace{2mm} \|\hspace{2mm} i=1\mod 2, j = 1 \mod 2\}
	\end{align*}
	\item Go to right square configuration
	\item Go to rightward triangle configuration
	\item Perform \cnot between qubits A and $3$ by performing \code{VC[L]} with
	\begin{equation*}
	\code{L} = \{(i,j)\hspace{2mm} \|\hspace{2mm} i=1\mod 2, j = 0 \mod 2, i+j = 3 \mod 16\} 
	\end{equation*}
	\item Perform \cnot between qubits A and $4$ by performing \code{VC[L]} with
	\begin{equation*}
	\code{L} = \{(i,j)\hspace{2mm} \|\hspace{2mm} i=0\mod 2, j = 0 \mod 2, i+j = 2 \mod 16\} 
	\end{equation*}
	\item Go to idle configuration
	\item Go to left square configuration
	\item Go to leftward triangle configuration
	\item Perform \cnot between qubits A and $6$ by performing \code{VC[L]} with
	\begin{equation*}
	\code{L} = \{(i,j)\hspace{2mm} \|\hspace{2mm} i=1\mod 2, j = 0 \mod 2, i+j = 1 \mod 16\} 
	\end{equation*}
	\item Perform \cnot between qubits A and $5$ by performing \code{VC[L]} with
	\begin{equation*}
	\code{L} = \{(i,j)\hspace{2mm} \|\hspace{2mm} i=0\mod 2, j = 0 \mod 2, i+j = 0 \mod 16\} 
	\end{equation*}
	\item Go to idle configuration
	\item Invert {\bf Step 6} by performing \code{VS[L]} with
	\begin{align*}
	\code{L} &= \{(i,j,-1)\hspace{2mm} \|\hspace{2mm} i=0\mod 2, j = 0 \mod 2\}\\
				&\hspace{10mm} \cup\{(i,j,1)\hspace{2mm} \|\hspace{2mm} i=1\mod 2, j = 1 \mod 2\}
	\end{align*}
	\item Repeat {\bf Steps 2-23} but setting $i \text{~to~} i+2$ and $j \text{~to~} j+1$
	\item Apply $ZHS\ct$ to all qubits in $\mc{R}$ (data) and $S\ct$ to qubits in $\mc{B}$ (ancilla)
	\item Go to measurement configuration 
	\item Perform Pauli Spin Blockade measurement process as described in \cref{subsec:parallel measurements} using qubit B (unit cell to the right) as ancilla for qubit A and using qubit C as ancilla for qubit D
	\item Go to idle configuration
\end{enumerate}
\endgroup
\end{minipage}
}
\\\phantom{end}\\

\twocolumngrid

\section{Discussion}\label{sec:discussion}
In this section we evaluate the mapping of the error corrections codes described above and argue numerically that it is possible to attain the error suppression needed for practical universal quantum computing. We will do this exercise for the planar surface code, as it is the most popular and best understood error correction code. The description given in \cref{sec:surface code implementation} assumes that all operations can be implemented perfectly in parallel. In practice though, for the reasons outlined in \cref{sec:An assembly language for crossbar control} many operations that can in principle be done in parallel will be done in a line-by-line fashion.  Note that for surface code in an array like this, the length of a quadratic grid scales linearly with the code distance as $N=2d+1$. This means that the time performing a surface code cycle and thus the number of errors affecting a logical qubit rises linearly with the code distance and hence this mapping of the surface code will not exhibit an error correction threshold. As a consequence the error probability of the encoded qubit (the logical error probability) cannot be made arbitrarily small but rather will exhibit a minimum for some particular code distance after which the logical error probability will start rising with increasing code distance. The code distance which minimizes the error will depend non-trivially on the error probability of the code qubits. This is not a very satisfactory situation from a theoretical point of view, but from the point of view of practical quantum computation we are not so much interested in asymptotic statements but rather if the logical error probability can be made small enough to allow for realistic computation~\cite{fowler2012surface}. As a target logical error probability we choose $P_L = 10^{-20}$ as at this point the computation is essentially error free (for comparison, a modern classical processor has an error probability around $10^{-19}$~\cite{tezzaron2004soft}). We will use this number as a benchmark to assess if and for what error parameters the surface code mapping in the QDP yields a ``practical" logical qubit. In order to assess this we must consider in more detail the sources of error afflicting the surface code operation on the QDP. We will begin by detailing how the surface code is likely to be implemented in practice on the QDP and afterward we will consider how this impacts the error behavior of the logical surface code qubit. We will distinguish two classes of error sources: operation induced errors and decoherence induced errors. 

\subsection{Practical implementation of the surface code}\label{subsec:practical implementation}
Here we present an mapping of the surface code based on the one presented in \cref{sec:surface code implementation} but differing in the amount of time-steps used to perform certain operations. In particular we choose to do all shuttle and two-qubit-gate operations in a line-by-line manner. This is a specific choice which we expect will work well but variations of this protocol are certainly possible. As mentioned above this will mean that the time an error correction cycler takes will scale will the code distance. This means it is important to keep careful track of the time needed to perform a cycle. We will do this while describing line-by-line operation of the surface code cycle in greater detail below.\\

\noindent In practice we will perform the protocol in \cref{sec:surface code implementation} in the following manner. We begin by performing step $1$ and $2$ for all qubits. Then we apply steps $3-7$ but only to the data and ancilla qubits in the columns $0$ and $1$. Note that after performing these steps on only the first two columns we are back in the \emph{idle} configuration. Now we repeat the previous for columns $2$ and $3$ and so forth until we reach the end of the code surface. Having done these operations we are at the end of step $7$ (go to \emph{idle} configuration) and the grid is the \emph{idle} configuration. We now repeat the same process to perform steps $8-12$ of \cref{sec:surface code implementation}. Next we perform step $13$ which can be done globally. Hereafter we perform step $14$  (ancilla correction) in standard line-by-line fashion. Note that even in an ideal implementation step $14$ has to be done line-by-line in the worst case. After this we perform step $15$ (go to \emph{measurement} configuration) in a line-by line manner and similarly for steps $16$ (PSB/readout procedure) and $17$ (go to \emph{idle} configuration).\\

\noindent Note that in this line-by-line implementation there is a slight asymmetry between the $X$- and $Z$-cycles. Due to the boundary conditions of the surface code the $X$-cycle will involve $d+1$ columns pairs whereas the $Z$-cycle will involve $d-1$ column pairs. However since $(d+1) + (d-1) = 2d $ this is mathematically equivalent to saying that the average cycle involves $d$ column pairs. With this understanding we quite in \cref{tab:gate times} how many time-steps every step in \cref{sec:surface code implementation} takes (split up by gates involved in that step) in this particular implementation of the protocol. Note that in this table we do not specify the order in which the operations happen, only to which step they are associated. We also calculate the amount of time-steps (for different gate types) needed for the full surface code error correction cycle.

\begin{table*}
\resizebox{0.8\textwidth}{!}{%
\begin{tabular}{|c|c|c|c|c|c|c|c|c|c|c|c|c|c|c|c|c|c||c|c|}
\hline
Steps				& 1 & 2 & 3& 4& 5 & 6 & 7& 8& 9 & 10 & 11& 12& 13 & 14 & 15& 16&17& $X$-$Z$ cycle average & Full cycle total \\
\hline
\hline
\sw gate 	& \phantom{$0$} & \phantom{$0$} & \phantom{$0$}& \phantom{$0$}& $2d$ & $2d$ & \phantom{$0$} & \phantom{$0$} & \phantom{$0$} & $2d$  & $2d$ & \phantom{$0$} & \phantom{$0$} &\phantom{$0$} & \phantom{$0$}  & \phantom{$0$} & \phantom{$0$} & $8d$ & $16d$\\
\hline
$Z$ rotation		& \phantom{$0$} & \phantom{$0$} & \phantom{$0$}& \phantom{$0$}& $2d$ & $2d$ & \phantom{$0$}& \phantom{$0$}& \phantom{$0$} & $2d$  & $d$ & \phantom{$0$} & \phantom{$0$} &\phantom{$0$} & \phantom{$0$}  & \phantom{$0$} & \phantom{$0$} & $7d$ & $14d$ \\
\hline
Shuttling			& \phantom{$0$} & \phantom{$0$} & $d$& $d$& \phantom{$0$} & \phantom{$0$} & $d$ & $d$ & $d$ & \phantom{$0$}  & \phantom{$0$} & $d$ & \phantom{$0$} &\phantom{$0$} & $5d$  & $2d$ & $3d$ & $16d$ & $32d$ \\
\hline
Global rotation 	& \phantom{$0$} & $1$ & \phantom{$0$}& \phantom{$0$}& \phantom{$0$} & \phantom{$0$} & \phantom{$0$}& \phantom{$0$}& \phantom{$0$} & \phantom{$0$}  & \phantom{$0$} & \phantom{$0$} & $1$ &\phantom{$0$} & $1$  & \phantom{$0$} & \phantom{$0$} & $3$ & $6$ \\
\hline
Measurement 		& \phantom{$0$} & \phantom{$0$} & \phantom{$0$}& \phantom{$0$}& \phantom{$0$} & \phantom{$0$} & \phantom{$0$}& \phantom{$0$}& \phantom{$0$} & \phantom{$0$}  & \phantom{$0$} & \phantom{$0$} & \phantom{$0$} &\phantom{$0$} & \phantom{$0$}  & $d$ & \phantom{$0$} & $d$ & $2d$\\
\hline
\end{tabular}
}
\caption{Time-step count per step in terms of different types of possible gates for the line-by-line implementation of the surface code cycle described in \cref{sec:surface code implementation}. The number of time-steps is quoted in terms of the code distance $d$. this table does not specify the exact order in which the operations happen, see \cref{subsec:practical implementation} for an explanation of the time flow. Note that the table shows the average of the time-step counts for the $X$- and $Z$-cycles. The actual time count for the individual $X$- and $Z$-cycles is slightly different due to the boundary conditions of the surface code. The exact count for the $Z$-cycle can be obtained by replacing $d$ by $d-1$ in every entry (except for the last column) whereas the exact count for the $X$-cycle is obtained by replacing $d$ with $d+1$ in every column bar the last one. Since $(d+1) + (d-1) = 2d = d+d$ this makes no difference for the full cycle count. Table cells that are left empty signify zero entries.}\label{tab:gate times}
\end{table*}

\subsection{Decoherence induced errors}
Decoherence induced errors are introduced into the computation by uncontrolled physical processes in the underlying system. The effect of these processes is called decoherence. Decoherence happens even if a qubit is not being operated upon and the amount of decoherence happening during a computation scales with the time that computation takes. Therefore, to account for decoherence induced errors during the error correction cycle we need to compute how long an error correction cycle takes. Generally any operation on the QDP takes a certain amount of time denoted by $\tau$. We distinguish again five different operations: (1) two-qubit \sw gates, (2) qubit shuttle operations, (3) single qubit $Z$ gates by waiting, (4) global single qubit operations, (5) qubit measurements. The time they take we will denote by $\tau_{sw},\tau_{sh},\tau_{z}, \tau_{gl}$ and $\tau_{m}$ respectively. In \cref{tab:gate times} we performed a count of the total time taken by the surface code error correction cycle using the mapping described in \cref{sec:surface code implementation,subsec:practical implementation}.  The table below summarizes the total number of time-steps for every gate type for a full surface code error correction cycle.
\begin{figure}[ht]\label{fig:operation times}
\centering
\begin{tabular}{|c|c|c|c|}
\hline
Symbol & Operation & time-steps per cycle\\
\hline
$\tau_{sw}$ & \sw gate &$16d$\\
$\tau_{sh}$ & Shuttling & $32d$\\
$\tau_{z}$ & $Z$ rotation by waiting & $14d$\\
$\tau_{gl}$ & Global qubit rotation &$6$\\
$\tau_{m}$ & Measurement & $2d$\\
\hline
\end{tabular}
\end{figure}
We can now say the total time $\tau_{\mathrm{total}}(d)$ as a function of the code distance $d$ is given by
\begin{equation}
\begin{aligned}
\tau_{\mathrm{total}}(d) &= 16d\tau_{sw} + 32d\tau_{sh} \\&\hspace{10mm}+ 14d\tau_{z} +6\tau_{gl} +2d\tau_{m}.
\end{aligned}
\end{equation}
This total time can be connected to an error probability by invoking the mean decoherence time of the qubits in the system, the so called $T_2$ time~\cite{nielsen2002quantum,tomita2014low} (We ignore the influence of $T_1$ in this calculation as it is typically much larger than $T_2$ in silicon spin qubits~\cite{veldhorst2017crossbar,tyryshkin2011electron}). We can find the decoherence induced error probability $P_{dec}$~\cite[Page 384]{nielsen2002quantum} as
\begin{equation}\label{eq:time to decoherence}
P_{dec}(d) = \frac{\tau_{\mathrm{total}}(d)}{2T_2}.
\end{equation}
Next we investigate operation induced errors. These will typically be larger than decoherence induced errors but will not scale with the distance of the code.

\subsection{Operation induced errors}
Operation induced errors are caused by imperfect application of quantum operations to the qubit states. There are five operations performed on qubits in the surface code cycle. These are: (1) two-qubit \sw gates, (2) qubit shuttle operations, (3) single qubit $Z$ gates by waiting, (4) global single qubit operations, (5) qubit measurements. We will denote the probability of an error afflicting these operations by $P_{sw},P_{sh},P_{z}, P_{gl}$ and $P_{m}$ respectively. In \cref{tab:gate totals} we list the total number of gates of a given type a data qubit and an ancilla qubit participate in over the course of a surface code cycle. In \cref{appsec:operation counts} we give a more detailed per-step overview of the operations performed on data qubits and ancilla qubits. For clarity we have chosen qubit $1$ in \cref{fig:unit cells} (right) as a representative of the data qubits and qubit $A$ in \cref{fig:unit cells} (right) as a representative of the ancilla qubits. Other qubits in the code might have a different ordering of operations but their gate counts will be the same, except for the qubits located at the boundary of the code which will have a strictly lower gate count (we can thus upper bound their operation induced errors by those of the representative qubits). For each gate we also calculate the average number of this gate data and ancilla qubits participate in. This average number will serve as our measure of operation induced error.

\begin{table*}
\resizebox{0.8\textwidth}{!}{%
\begin{tabular}{|c|c|c|c||c|c|c||c|}
\hline
 & \multicolumn{3}{c||}{Data qubit} & \multicolumn{3}{c||}{$Z$ ancilla qubit}& \multirow{2}{*}{Average data/ancilla}\\
 \cline{2-7}
	& $Z$-cycle & $X$-cycle & Total&  $Z$-cycle & $X$-cycle & Total& \\
\hline
\sw gate 	& $4$ & $4$ & $8$& $8$& $0$ & $8$ & $8$  \\
\hline
$Z$ rotation		& $0$ & $0$ & $0$& $7$& $0$ & $7$ & $3.5$ \\
\hline
Shuttling			& $2$ & $4$ & $6$& $10$& $4$ & $14$ & $10$ \\
\hline
Global rotation 	& $2$ & $2$ & $4$& $2$& $3$ & $5$ & $4.5$  \\
\hline
Measurement 		& $0$ & $0$ &$0$& $1$& $1$ & $2$ & $1$   \\
\hline
\end{tabular}
}
\caption{This table lists the total number of gates of a given type a data qubit and an ancilla qubit participate in over the course of a surface code cycle. In \cref{appsec:operation counts} we give a more detailed per-step overview of the operations performed on data qubits and ancilla qubits. For clarity we have chosen qubit $1$ in \cref{fig:unit cells} (right) as a representative of the data qubits and qubit $A$ in \cref{fig:unit cells} (right) as a representative of the ancilla qubits. Other qubits in the code might have a different ordering of operations but there gate counts will be the same, except for the qubits located at the boundary of the code which will have a strictly lower gate count (we can thus upper bound their operation induced errors by those of the representative qubits). }\label{tab:gate totals}
\end{table*}

\subsection{Surface code logical error probability}\label{subsec: surface code logical error probability}
By tallying up the contributions from operational and decoherence induced errors we can construct a measure for the total error probability per QEC cycle experienced by all physical qubits that make up the code. Note that this a rather crude model that disregards possible influences from inter-qubit correlated errors and time-like correlated errors. Nevertheless it serves as a useful first approximation to the performance of the surface code on the QDP. We define the average per qubit per cycle error probability $P_{\mathrm{tot}}$ as
\begin{equation}\label{eq:total error}
\begin{aligned}
P_{\mathrm{tot}}(d) &= 8P_{sw} + 3.5P_{sh} + 10P_{z} \\&\hspace{10mm}+ 4.5P_{gl} + P_{m} + P_{dec}(d).
\end{aligned}
\end{equation}
Note that this quantity depends linearly on the code distance $d$. We can plug this total per cycle error probability $P_{tot}$ into an empirical equation for the logical error probability $P_L$ derived in~\cite{fowler2012surface}. 
\begin{equation}\label{eq:empirical model}
P_L = 0.03\left(\frac{P_{tot}(d)}{8P_{th}}\right)^{\frac{d+1}{2}}
\end{equation}
where $P_{th}$ is the per-step fault-tolerance threshold of the surface code, which we take to be $P_{th} = 0.0057$ following the result in~\cite{fowler2012surface}. The factor of $8$ is inserted to account for the fact that the empirical relation derived in~\cite{fowler2012surface} is between the physical \emph{per-step} error rate and the logical \emph{per cycle} error rate and the protocol analyzed in~\cite{fowler2012surface} requires $8$ time-steps per surface code error correction cycle. This is an approximation but it will serve our purposes of getting a basic initial estimate of the logical error rate. The next step is to start plugging in experimental numbers into equation \cref{eq:total error}. In the table below we quote error probabilities and operation times for all relevant parameters. These numbers are projections from~\cite{veldhorst2017crossbar} and references therein.
\begin{figure}[ht]\label{fig: experimental numbers}
\centering
\begin{tabular}{|c|c|c|c|}
\hline
Operation & Error probability& Time\\
\hline
two-qubit \sw gate & $P_{sw} = 10^{-3}$ & $\tau_{sw} = 20\ns$\\
qubit shuttle & $P_{sh} = 10^{-3}$ & $\tau_{sh} = 10\ns$\\
$Z$ rotation by waiting& $P_{z} = 10^{-3}$ & $\tau_{z} = 100\ns$\\
global qubit rotation& $P_{gl} = 10^{-3}$ & $\tau_{gl} = 1000\ns$\\
measurement& $P_{m} = 10^{-3}$ & $\tau_{m} = 100\ns$\\
\hline
\end{tabular}
\end{figure}
To convert the operation times into decoherence induced error we use the estimated $T_2$ time of quantum dot spin qubits in $^{28}\mathrm{Si}$ quoted as $T_2 = 10^{9}\ns$~\cite{veldhorst2017crossbar,tyryshkin2011electron} and \cref{eq:time to decoherence}. 
Plugging these numbers into \cref{eq:total error} we get the following linear function of the code distance
\begin{equation}\label{eq:ptot}
P_{tot} = 2.7\times 10^{-2}+ 2.8d\times 10^{-5}
\end{equation}
which we can plug into the empirical model \cref{eq:empirical model}. In \cref{fig:logical error plot} we plot the logical error probability $P_{L}$ versus code distance. Note that for the experimental numbers provided the practical quantum computing benchmarking $\log(P_{L}) = -20$ is reached for a code distance of $d=37$. The maximal code distance for the experimental parameters is $d=155$ for which the log-logical error probability reaches $\log(P_{L}) = -41$, after which it starts increasing again. For completeness we have also plotted what would happen if we had the power to operate the QDP (with quoted device parameters) completely in parallel. We estimate the physical per cycle error rate of this situation by setting $d=1$ in \cref{eq:ptot}. Note that the difference between parallel and crossbar style operation is not that big, the parallel version reaches $P_L = 10^{-20}$ for $d= 31$. This rough model provides some quantitative justification for the implementation of planar error correction codes in the QDP even in the absence of the ability to arbitrarily suppress logical error. Note also that, due to the long coherence times~\cite{veldhorst2017crossbar,tyryshkin2011electron} of the QDP spin qubits, the dominant terms in the expression for the total error probability $P_{tot}$ are those associated with operation induced errors. This provides justification for the line-by-line application of two-qubit gates discussed in \cref{subsec:parallel two-qubit gates}, which takes a longer time to perform but improves gate quality. It also means that long coherence times and/or fast operation times are likely critical to the success of a crossbar based scheme. This concludes our discussion of the QDP mapping of the surface code. A similar exercise can be done for the $6.6.6.$ and $4.8.8.$ color codes but due to their lower thresholds~\cite{landahl2011fault}, the results will likely be less positive for current experimental parameters.
\begin{figure}[ht]
\centering
\includegraphics[scale=0.57]{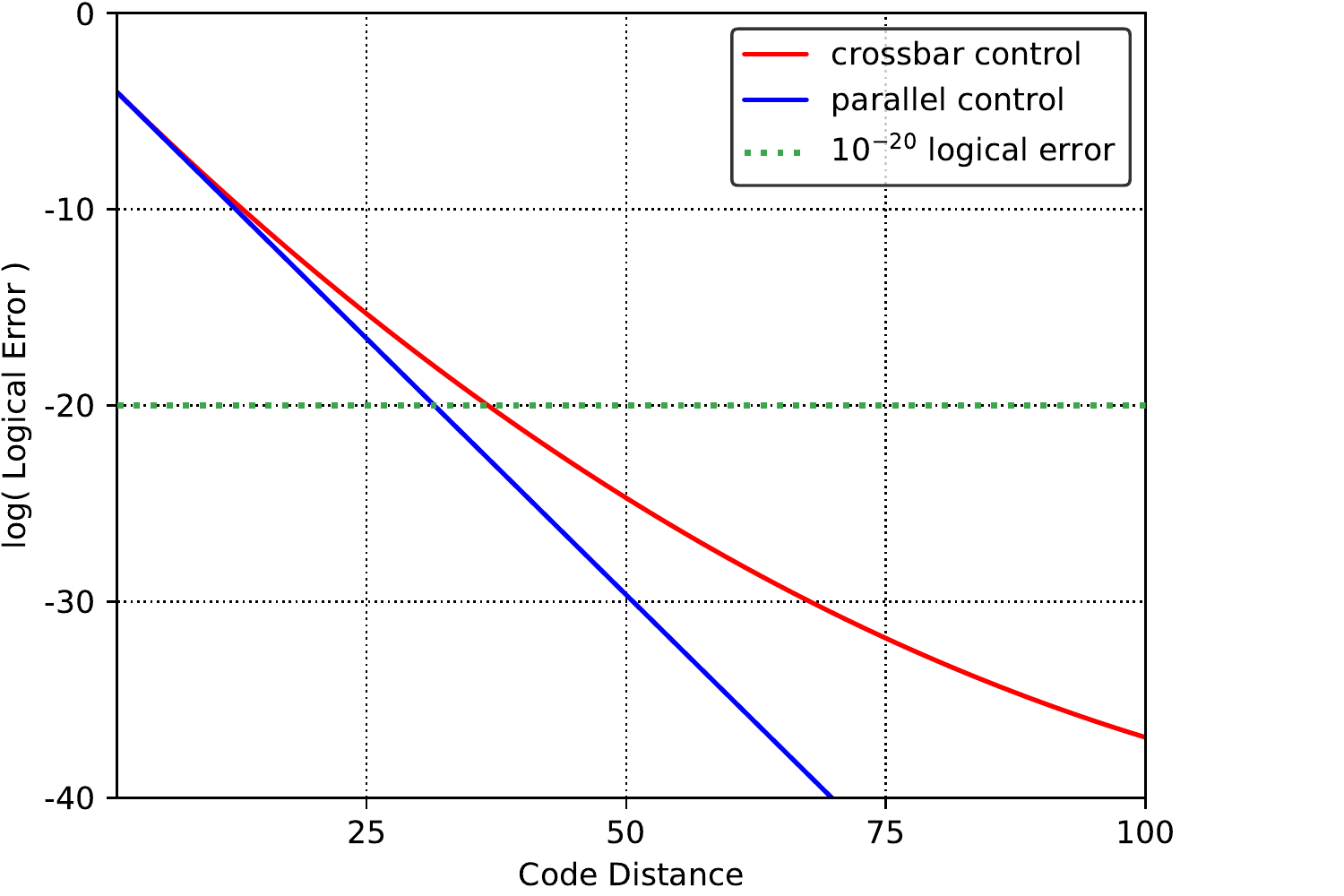}
\caption{Plot of logical error probability versus code distance for the empirical model given in \cref{eq:empirical model} with experimental parameters given in \cref{fig: experimental numbers}. Note that the logical error probability for crossbar operation goes below $P_L = 10^{-20}$ for $d=37$. This is only slightly slower that parallel operation, which reaches $P_L = 10^{-20}$ for $d=31$. Due to the scaling of crossbar operation with the code distance the logical error probability bottoms out at some point. This however does not happen until $d=155$ (not shown) for a logical error rate of $P_L = 10^{-41}$, which is not practically relevant. This rough model gives good indication it is possible to create very low logical error surface code logical qubits in the QDP.}\label{fig:logical error plot}
\end{figure}

\section{Conclusion}\label{sec:conclusion}
We analyzed the architecture presented in \cite{veldhorst2017crossbar}, focusing on its crossbar control system. Building on this analysis we presented procedures for mapping the planar surface code and the $6.6.6.$ and $4.8.8.$ color codes. Because the line-by-line operation of the crossbar architecture means the noise in a single error correction cycle scales with the distance it is not possible to arbitrarily suppress the logical error rate by increasing the code distance. Instead there will be some ``optimal" code distance for which the logical error rate is the lowest. Using numbers for~\cite{veldhorst2017crossbar} and an empirical model taken from~\cite{fowler2012surface} we analyzed the logical error behavior of the surface code mapping and found that, for current experimental numbers, it is at least in principle possible to achieve logical error probabilities below $P_{log} = 10^{-20}$, making practical quantum computation possible. However, we stress that this is a rather crude estimate and a more detailed answer would have to take into account the details of the dominant error processes in quantum dot qubits. It must also take into account that while it is possible to achieve certain low noise gates  and good coherence times in quantum dots qubits in isolation this does not necessarily mean they will be practically achievable in the current QDP design.\\

\noindent In future work we would like to use the currently developed machinery to map more exotic quantum error correction codes. Due to the possibility of qubit shuttling, codes with long distance stabilizers could in principle be implemented. Codes such as the 3D gauge color codes might be prime candidates for this kind of treatment. However, barring some special cases, parallel shuttling is currently being performed in a line-by-line manner. A general classical algorithm for generating optimal (in time) shuttling-steps from an initial to a final \boardstate would vastly simplify the task of mapping more exotic codes and also general quantum circuits. Such an algorithm would probably be useful for any crossbar quantum architecture. In this work we constructed a non-optimal but classically efficient algorithm but finding an algorithm that generates optimal shuttling sequences and analyzing its resource use is still an open problem.

\subsection*{Acknowledgments}
We would like to thank J\'er\'emy Ribeiro, David Elkouss, Kobe Vrancken, Roy Li and Lieven Vandersypen for helpful suggestions, discussions and feedback. We would like to thank C.~W.~J.~Beenakker for his support. JH and SW 
are supported by STW Netherlands, NWO VIDI and an ERC Starting Grant. MS is supported by the Netherlands Organization for
Scientific Research (NWO/OCW) and an ERC Synergy Grant. MV acknowledges support by the Netherlands Organization of Scientific Research (NWO) VIDI program.

\bibliography{crossbarbibliography.bib}

\newpage
\onecolumngrid

\appendix

\section{Shuttling algorithm}\label{appsec:Shuttling algorithm}
In this appendix we go a little deeper into the shuttling algorithm presented in \cref{subsec:parallel shuttle operations}. This algorithm takes as input a collection of desired shuttling operations in the form of a flow matrix $F$ and outputs a sequence of parallel \code{HS} operations that, when applied sequentially, achieve this desired collection of operations. The algorithm, described in~\Cref{alg:shuttling algorithm} relies centrally on a notion of independence on the columns of this flow matrix $F$. This notion of independence is actualized by calling an `independence subroutine' through the functions \code{CheckIndependence} and \code{DependenceSet}. Here we describe various independence subroutines and analyze their time complexity. Note that it is probably possible to optimize these subroutines and the time complexity estimates, hence the results in \cref{thm:simple subroutine complexity,thm:k commuting subroutine complexity,thm:greedy subroutine complexity} can best be seen as upper bounds on the worst case time complexity. Recall from \cref{subsec:parallel shuttle operations} that the flow matrix $F$ has entries in the set $e,r,l, re, le, *$ where $e$ signifies doing nothing, $r$ signifies a rightward shuttling operation, $l$ signifies a leftward shuttling operation and $re, le$ and $*$ are `wildcard' symbols that indicate the operation at the point could be either $r$ or $e$, or $l$ or $e$, or $r$ or $l$ or $e$. We begin by analyzing the mathematical structure of the shuttle operations $e,r,l$.
\subsection{The left-right monoid}
An idempotent monoid $\{M,\circ\}$ is a set $M$ with a binary operation $\circ:M\times M \rightarrow M$ such that the following axioms hold
\begin{align}
&\forall a,b,c \in M:\;\;\;\;\;\;\;\;\;\;\;\;\;\:(a\circ b)\circ c = a\circ(b\circ c), \tag{Associativity}\\
&\exists e \in M, \forall a \in M:\;\;\;\;\;\;\; e\circ a = a\circ e = a, \tag{Identity element}\\
&\forall a \in M:\;\;\;\;\;\;\;\;\;\;\;\;\;\;\;\;\;\;\;\; a\circ a = a. \tag{Idempotence}
\end{align}
Note that monoids are strict generalizations of groups (as the elements lack an inverse). We will argue that the set of shuttle operations $r,l,e$ together with the binary operation `composition of shuttle operations' is an idempotent monoid. Imagine for simplicity a \boardstate with only one row and two columns. The shuttle operations that can be applied to this system are `shuttle to the right' (r), shuttle to the left (l) and `do nothing' (e). These shuttle operations can also be applied sequentially. We will denote the sequential application of operations $a_1,a_2$ as $a_1\circ a_2$. This is read from the right, so we apply first $a_2$ and then $a_1$. For example shuttling to the right at a location and subsequently doing nothing at that location will be written as $e\circ r$. Note that this is equivalent to just shuttling to the right at that location so we have $e\circ r = r$. Other examples are more interesting.For instance shuttling to the right at a fixed location followed by shuttling to the left at that same location is equivalent to just shuttling to the left at that location. However shuttling to the left first and then shuttling to the right is equivalent to shuttling to the right. this means we have $r\circ l = r \neq l = l \circ r$ which means the operation $\circ$ is not commutative. Not also that shuttling to the right at a fixed location and then shuttling to the right again at that location is equivalent to shuttling to the right a single time at that location. Hence we have$r\circ r=r$. In general we have the following rules for the composition of the operations $e,r,l$
\begin{gather}\label{eq:monoid rules}
r\circ r = r,\hspace{10mm} l\circ l = l,\\
r\circ l = r,\hspace{10mm}l\circ r =r, \\
r\circ e = r,\hspace{10mm} l\circ e =l,\\
e\circ r = r,\hspace{10mm} e \circ l =e.
\end{gather}
Note that these rules imply that the composition $\circ$ is also associative. This makes the set $S = \{e,r,l\}$ with the composition $\circ$ an idempotent monoid with $e$ the identity element. Note also that neither $r$ nor $l$ is has an inverse. We call the idempotent monoid $\{S,\circ\}$ the left-right monoid. The non-commutativity of $\circ$ and the fact that neither $r$ nor $l$ have inverses makes finding a sequence of parallel shuttling operations that apply the shuttling operations encoded in the flow matrix $F$ difficult in general as the order in which the operations are applied matters and no operation can be truly inverted. 

\subsection{Comparing up to wildcards}
Recall that the entries of the flow matrix $F$ take value in the set $\{r,l,e,re,le,*\}$. Of these elements only the first three ($\{r,l,e\}$) correspond to real actions. The other elements $\{re,le,*\}$ are called wildcard elements. If an entry in the flow matrix is wildcard valued it means we have some freedom which of the operations $\{r,l,e\}$ we apply there. When constructing the independence subroutines we need a way to compare elements of the flow matrix that takes this freedom into account. To this end we introduce the relation `equality up to wildcards', signified by the symbol $=_w$. This relation formalizes the the intuitive notion that~e.g. the elements $re$ and $r$ but also the elements $re$ and $e$ should be equal since if an entry of the flow matrix $F$ takes the value $re$ we can perform either the operation $r$ or $e$ without performing an illegal move. Below we list all elements that are equal up to wildcards (up to symmetry and reflexivity)
\begin{align}
&* =_w r\hspace{10mm} re=_w r \hspace{10mm} le =_w l\\
&* =_w l\hspace{10mm} re=_w e \hspace{10mm} le =_w e\\
&* =_w e\\
&* =_w re\\
&* =_w le.
\end{align}
\subsection{Comparing columns of the flow matrix}
Checking an equality up to wildcards computationally takes at most five list comparisons.
We would also like to be able to compare columns of the flow matrix $F$. This is so because repeated patterns in the flow matrix columns can be performed in a single parallel shuttle operation. To see why this is the case consider the following example flow matrix
\begin{equation}\label{eq: equal columns}
F_{ex} = \begin{pmatrix} *&*&*&* \\ *& re & r&* \\ * & re & r&* \\ r&*&*&*\\ r & * & * & l \end{pmatrix} \;\;\;\;\;\;\;\;\;\;\;\raisebox{-0.125\textwidth}{\includegraphics[scale=1]{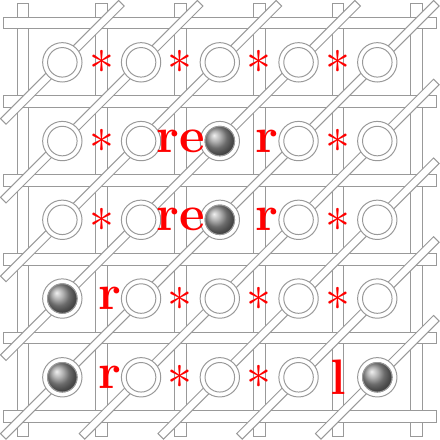}}
\end{equation}
For illustration we also included a \boardstate configuration that could have led to this particular flow matrix. Here it is clear how all elements of the flow matrix correspond to a crossing point on the grid. Notice that columns $0$ and $2$ of $F_{ex}$ are very similar. In fact they are the same up to a constant shift. Note also that we can apply all $r$ operations encoded in the flow matrix by performing the operation \code{HS[\{(0,0,1), (1,0,1,),(2,2,1),(2,3,1)\}]}. This operation can be done in a single time-step by the parallel control OPCODE \code{V[0]\&V[1]\&D[0][0]\&D[1][1]\&D[3][2]}. This can be done because the two columns are equal up to a constant vertical shift. Because the diagonal lines run in a $45$ degree angle over the QDP this shift is exactly equal to the difference between the column indices of the two columns. In our example the shift is $2$ since we are comparing the columns $0$ and $2$ but in general when checking if column $i$ and $j$ can be performed in parallel we must check if they are equal (up to wildcards, see above) up to a constant vertical shift of size $i-j$. The facilitate this process we define the padding function $p$:
\begin{align}
p:\{r,l,e,re,le,*\}^{\times(N-1)}\times[1:N] \longrightarrow& \{r,l,e,re,le,*\}^{\times(2N-2)}:\\
 &p(v,i) \longrightarrow (*,\ldots,\underset{i-1}{*},v,\underset{N+i}{*}, \ldots, *).
\end{align}
Now because, up to wildcards, the $*$ symbol is equal to all symbols other we can check if two columns $v_i,v_j$ of $F$ are equal up to wildcards and a constant vertical shift by checking if the padded columns $p(v_i,i),p(v_j,j)$ are equal up to wildcards. This means checking $p(v_i,i)_t=_wp(v_j,j)_t$ for all $t\in [1:2N]$ where $p(v_i,i)_t$ is the $t$'th component of the padded column $p(v_i,i)$.

\subsection{Composition of columns in the flow matrix}
Sometimes it happens that columns in the flow matrix can be written as the composition of other columns. This is the case in the following example
\begin{equation}\label{eq:comm columns}
F_{ex} = \begin{pmatrix} * & * & l \\ * & * & r \\ l & r & * \end{pmatrix}.
\end{equation}
Note that we have $p(v_0,0)\circ p(v_1,1) =_w p(v_2,2)$ where $v_i$ is the $i$'th column of $F_{ex}$ and the composition and equality up to wildcards are taken element-wise. This means that in a sense $v_2$ is `dependent' on $v_0$ and $v_1$. This means we can apply the operations encoded by the flow matrix $F$ in two sequential parallel shuttling steps by duplicating the operations encoded in columns $0$ and $1$ to also cover column $3$ (up to the correct constant vertical shift). The shuttle operations that need to be performed to apply the shuttle operations encoded in the flow matrix are
\begin{gather}
\code{HS[(0,0,-1), (2,2,-1)]}\\
\code{HS[(0,1,1), (1,2,1)]}.
\end{gather}
Note that in this particular case we have $p(v_0,0)\circ p(v_1,1) =_w p(v_1,1)\circ p(v_0,0)$, i.e.~$p(v_0,0)$ and $p(v_1,1)$ commute with respect to the composition `$\circ$', which means we can apply the operations above in any order. This does not need to be the case. Consider for instance in the following example:
\begin{equation}\label{eq:noncomm columns}
F_{ex} = \begin{pmatrix} * & * & l \\ * & r & r \\ l & r & * \end{pmatrix}.
\end{equation}

Note that we now have $p(v_0,0)\circ p(v_1,1) =_w p(v_2,2) $ but not $p(v_1,1)\circ p(v_0,0) =_w p(v_2,2)$! This means that in this case we can still apply all shuttling operation encoded in the flow matrix $F_{ex}$ by extending the operations that apply the shuttle operations encoded in $v_0$ and $v_1$ to also include the operations encoded in $v_2$ but now the order in which we perform the operations matters. We have to first apply the operation $\code{HS[(0,1,1),(1,1,1),(1,2,1),(2,2,1)]}$ and then $\code{HS[(0,0,-1), (2,2,-1)]}$ in order to apply the operations encoded in the flow matrix $F_{ex}$. The fact that the ordering of the operations matter is a difficulty we have not fully overcome. Therefore we will restrict ourselves only to compositions of columns that explicitly commute. Note that this means for two columns $v_i,v_j$ and $k\in [1:2N-2]$ that if $p(v_i,i)_k =_w r$ we must have $p(v_j,j)_k =_w e$ or $p(v_j,j)_k =_w r$ or vice versa. A similar rule holds if $p(v_j,j)_k =_w l$. Using the above analysis we now present two simple subroutine algorithms that decide whether a column is dependent, i.e.~can be written as a restricted composition of other columns in $F$ (up to wildcards and vertical shifting). 

\subsection{Simple subroutine}
This subroutine simply detects whether the columns of a flow matrix $F$ have duplicates up to wildcard symbols. This means the restriction of the previous section is basically the strictest possible, columns can only be written as compositions of something exactly equal. See \cref{eq: equal columns} for an example of a flow matrix with this property. We have two distinct subroutine functions. The first $\mathbf{CheckIndependence}(S,v_i)$ checks whether a column $v_i$ of the flow matrix $F$ is independent of the columns in the set $S$ while the second, $\mathbf{DependenceSet}(S,v_i)$ returns a set of columns $A_i$ on which $v_i$ depends. For the $\mathrm{[simple]}$ subroutines this will be a set with a single entry, namely an exact copy (up to wildcards) of $v_i$ in $S$. The two subroutine functions are given in \cref{alg:simple checkindependence,alg: simple dependenceset} and their time complexity is analyzed in \cref{thm:simple subroutine complexity}.
\begin{algorithm}[H]
\caption{CheckIndependence [simple]}
\label{alg:simple checkindependence}
\begin{algorithmic}[1]

\Require{A column $v_i$, a set of columns $S$, the column index $i$ of the column $v_i$ }

\Ensure{Boolean $a$} \\

\State // We will consistently write columns of the flow matrix $F$ as $v_i$ where $i$ indicates the column index of $v_i$ in $F$. \\

\For{all columns $c_j\in S$}

	\If{$p(v_i,i) =_w p(c_j,j)$}\\

		\State // Note that all the elements of $C$ commute. This means that if $p(v_i,i) =_w p(c_j,j)$ for a particular $c_j$ it \State // must also hold that $p(v_i,i) =_w p(c_k,k)$ for all $c_k \in C$. \\

		\State \textbf{Set} $a$ to TRUE

	\Else

		\State \textbf{Set} $a$ to FALSE

	\EndIf

\EndFor

\State\Return{$a$}

\end{algorithmic}
\end{algorithm}

\begin{algorithm}[H]
\caption{DependenceSet [simple]}
\label{alg: simple dependenceset}
\begin{algorithmic}[1]

\Require{A column $v_i$, a set of columns $S$, the column index $i$ of the column $v_i$ }

\Ensure{Set of columns $A_i$ that $v_i$ depends on}\\

\State // We will consistently write columns of the flow matrix $F$ as $v_i$ where $i$ indicates the column index of $v_i$ in $F$.

\State // We also tag the output set $A_i$ with the subscript $i$ to indicate it is connected to the $i$'th column $v_i$ of the \State // flow matrix $F$. \\

\State \textbf{Set} $A_i$ to an empty set

\For{ all columns $c_j\in S$}

	\If{$p(v_i,i) =_w p(c_j,j)$}

		\State Add $c_j$ to $A_i$

	\EndIf
	
\EndFor	
	
\State\Return{$A_i$}

\end{algorithmic}

\end{algorithm}

\begin{theorem}\label{thm:simple subroutine complexity}
The subroutines $\mathbf{CheckIndependence(~)}\mathrm{~[simple]}$ and $\mathbf{IndependenceSet(~)}\mathrm{~[simple]}$ have time complexity
\begin{align}
O\big(\mathbf{CheckIndependence(~)}\mathrm{~[simple]}\big) &= O(NM)\\
O\big(\mathbf{IndependenceSet(~)}\mathrm{~[simple]}\big) &= O(NM)
\end{align}
where $M = |S|$ the size of the independent set $S$ and $N$ is the length of the input column $v_i$ or equivalently the number of rows in the flow matrix $F$. Note that for our purposes we have $M\leq N$.
\end{theorem}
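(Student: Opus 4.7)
The plan is to directly inspect the control flow of both subroutines and bound the cost of the single nontrivial primitive they invoke, namely the equality-up-to-wildcards test on padded columns. Since both subroutines have essentially identical structure (a single \textbf{For}-loop over the independent set $S$, with a comparison against $v_i$ at its body), the argument for the two bounds is almost verbatim the same.

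First I would handle the inner primitive. The relation $=_w$ is defined on the six-element set $\{r,l,e,re,le,*\}$, so a single symbol comparison is $O(1)$: it amounts to at most five table look-ups. The padding map $p(v,i)$ produces a vector of length $2N-2$ by prepending $i-1$ copies of $*$ and appending $N-i$ copies of $*$; this can be built (or, more efficiently, accessed lazily by index-shifting) in $O(N)$ time. Consequently, testing $p(v_i,i)=_w p(c_j,j)$ reduces to componentwise $=_w$ on two vectors of length $O(N)$ and thus costs $O(N)$.

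Next I would tally the outer loop. In \cref{alg:simple checkindependence} the body is executed once per element of $S$, so exactly $M=|S|$ times, and each execution does one padded comparison plus an assignment to the boolean $a$. Hence the total cost is $M\cdot O(N) + O(1) = O(NM)$. The analysis of \cref{alg: simple dependenceset} is identical: the loop runs $M$ times, each iteration performs one $O(N)$ comparison and, on success, an $O(1)$ insertion into $A_i$; even if every element of $S$ is added, the cumulative insertion cost is $O(M)$, absorbed into $O(NM)$.

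There is no genuine obstacle here; the only subtle point is to notice that $=_w$ is a constant-time relation on the alphabet and that padding need not be materialised anew for each comparison (a pointer-based implementation keeps the per-comparison cost at $O(N)$ rather than $O(N)$ with a hidden $O(N)$ allocation overhead). Plugging these observations into the loop counts yields the stated complexities
\[
O\bigl(\mathbf{CheckIndependence(~)}\,[\mathrm{simple}]\bigr)=O(NM),\qquad O\bigl(\mathbf{IndependenceSet(~)}\,[\mathrm{simple}]\bigr)=O(NM),
\]
which completes the argument.
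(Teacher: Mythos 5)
Your proposal is correct and follows essentially the same route as the paper: a single \textbf{For}-loop of length $M$ whose body performs one $O(N)$ padded-column comparison, giving $O(NM)$ for both subroutines. The extra detail you supply about the constant-time $=_w$ test on the six-symbol alphabet and the lazy handling of the padding is a harmless refinement of the paper's counting argument.
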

\begin{proof}
The complexity of the subroutine $\mathbf{CheckIndependence(~)}\mathrm{~[simple]}$ can be seen by straightforward counting. There is one $\mathbf{For}$-loop (line 1) of length $M$ and the $\mathbf{If}$-clause on line 2 takes $O(N)$ time to evaluate since $c_j$ is a column of length $O(N)$. This gives a total worst case complexity of $O(NM)$. Exactly the same argument holds for the subroutine $\mathbf{IndependenceSet(~)}\mathrm{~[simple]}$.
\end{proof}
\newpage

\subsection{k-commuting subroutines}
Next we present a class of subroutines collectively called the `commuting' subroutines. These try to capture the intuition behind the example flow matrix in \cref{eq:comm columns}, namely that some columns of the flow matrix can be written as a composition of a subset of pairwise commuting columns of the flow matrix. By a pairwise commuting subset $C$ we mean concretely that for all columns $v_i,v_j$ of $F$ that are in $C$ we have $p(v_i,i) \circ p(v_j,j) =_w p(v_j,j) \circ p(v_i,i)$. We restrict explicitly to pairwise commuting columns since this avoids the difficulty of time-ordering the resulting shuttle operations (as seen in \cref{eq:noncomm columns}). This subroutine relies on an initial construction of all maximal mutually commuting subsets of a set of columns $S$. Listing all these sets is in general hard. To see this we can construct the $M\times M$ matrix $A$ that has entries $A_{pq}=1$ whenever the $p$'th and $q$'th column in the set $S$ commute and $A_{pq}=0$ otherwise. If we now think of $A$ as the adjacency matrix of a graph $G$ with $M$ vertices it is not hard to see that finding all maximal mutually commuting subsets of $S$ is equivalent to finding listing all maximal cliques~\cite{west2001introduction} in the graph $G$. The best known algorithm for listing all maximal cliques in an arbitrary graph is called the Bron-Kerbosch algorithm~\cite{bron1973algorithm} and has a worst-case complexity of $O(3^{M/3})$. There is no a priori way to restrict the number of possible maximal cliques generated by the commutation rules of the monoid-valued columns so currently any subroutine that searches over all possible maximal mutually commuting subsets of $S$ will take at least $O(3^{M/3})$ time. We can however get out of this bind by restricting the size of the sets of mutually commuting columns to be less than a fixed parameter $k$. This will make our algorithm less effective but the problem of finding these reduces to finding all cliques of size less than $k$ in the graph $G$ which is a so called fixed-parameter-tractable problem. This problem has worst-case time complexity upper bounded by $O(M^k k^3)$~\cite{downey1995fixed}. Using these sets we can construct a family of subroutines indexed by the parameter $k$. The two subroutine functions are given in \cref{alg:k commuting independenceset,alg:k commuting checkindependence} and their complexity is analyzed in \cref{thm:k commuting subroutine complexity}.

\begin{algorithm}[H]
\caption{CheckIndependence [$k$-commuting]}
\label{alg:k commuting checkindependence}
\begin{algorithmic}[1]

\Require{A column $v_i$, a set of columns $S$, the column index $i$ of the column $v_i$ and a fixed integer $k$}

\Ensure{Boolean $a$} \\

\State // We will consistently write columns of the flow matrix $F$ as $v_i$ where $i$ indicates the column index of $v_i$ in $F$.\\

\State Construct All mutually commuting subsets $C$ of $S$ with $|C|\leq k$

\State Set $a$ to FALSE 

\For{all commuting subsets $C$}

	\For{$t \in [0:\mathrm{length}(p(v_i,i))-1]$}

		\For{all columns $c_j \in C $}

			\If{$p(c_j,j)_t =_w p(v_i,i)_t$ }\\

				\State // Note that all the elements of $C$ commute. This means that if $p(v_i,i) =_w p(c_j,j)$ for a particular  \State // $c_j$ it must also hold that $p(v_i,i) =_w p(c_k,k)$ for all $c_k \in C$.\\

				\If{$t=\mathrm{length}(p(v_i,i))-1$}

					\State Set $a$ to TRUE

				\EndIf

			\Else

				\State Go to next commuting subset $C$ in the loop at line 3

			\EndIf

		\EndFor

	\EndFor

\EndFor

\State \Return{$a$}

\end{algorithmic}	
\end{algorithm}

\begin{algorithm}[H]
\caption{DependenceSet [$k$ - commuting]}
\label{alg:k commuting independenceset}
\begin{algorithmic}[1]

\Require{A column $v_i$, a set of columns $S$, the column index $i$ of the column $v_i$ }

\Ensure{Set of columns $A_i$ that $v_i$ depends on}\\

\State // We will consistently write columns of the flow matrix $F$ as $v_i$ where $i$ indicates the column index of $v_i$ in $F$. \\

\State Construct All maximal mutually commuting subsets $C$ of $S$ with $|C|\leq k$ 

\For{all commuting subsets $C$}

	\For{$t \in [0:\mathrm{length}(p(v_i,i))-1]$}

		\For{all columns $c_j \in C $}

			\If{$p(c_j,j)_t =_w p(v_i,i)_t$ }\\

				\State // Note that all the elements of $C$ commute. This means that if $p(v_i,i) =_w p(c_j,j)$ for a particular \State // $c_j$ it must also hold that $p(v_i,i) =_w p(c_k,k)$ for all $c_k \in C$. \\

				\State \textbf{Add} $c_j$ to $A_i$

				\If{$t=\mathrm{length}(p(v_i,i))-1$}

					\State\Return{$A_i$} 

				\Else

					\State \textbf{Set} $A_i$ to empty set

					\State Go to next commuting subset $C$ in the loop at line 2

				\EndIf	

			\EndIf

		\EndFor

	\EndFor

\EndFor
\State\Return{$A_i$}
\end{algorithmic}
\end{algorithm}

\begin{theorem}\label{thm:k commuting subroutine complexity}
The subroutines $\mathbf{CheckIndependence(~)}\mathrm{~[k-commuting]}$ and\\\noindent $\mathbf{IndependenceSet(~)}\mathrm{~[k-commuting]}$ have time complexity
\begin{align}
O\big(\mathbf{CheckIndependence(~)}\mathrm{~[k-commuting]}\big) &= O(N M^{k+1} k^4 )\\
O\big(\mathbf{IndependenceSet(~)}\mathrm{~[k-commuting]}\big) &= O(N M^{k+1} k^4 )
\end{align}
where $M = |S|$ the size of the independent set $S$, $N$ is the length of the input column $v_i$ or equivalently the number of rows in the flow matrix $F$and $k$ is a fixed parameter indicating the maximal size of the mutually commuting subsets $C$. Note that for our purposes we have $M\leq N$.
\end{theorem}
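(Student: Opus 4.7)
The plan is to dissect each subroutine into its preprocessing phase (line~1) and its main nested-loop phase, and to bound each separately before summing the contributions.

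For the preprocessing, I would first build the $M\times M$ commutation matrix $A$, with $A_{pq}=1$ exactly when the padded columns $p(c_p,p)$ and $p(c_q,q)$ commute element-wise under the monoid composition $\circ$. Each entry requires comparing two padded columns of length $O(N)$, so constructing $A$ costs $O(NM^2)$. Enumerating all cliques of size at most $k$ in the graph defined by $A$ is the fixed-parameter-tractable problem cited in the main text, with worst-case cost $O(M^k k^3)$ and at most $O(M^k)$ output cliques.

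Next, I would analyse the triply-nested loop shared by both algorithms: an outer loop over the $O(M^k)$ commuting subsets, a middle loop over the $O(N)$ positions of the padded column $p(v_i,i)$, and an inner loop of length at most $k$ over the elements of the current subset. Each iteration of the inner loop performs one wildcard comparison $=_w$ in constant time, contributing $O(M^k \cdot N \cdot k)$ overall. Summing preprocessing and loop cost yields
\[
O(NM^2) \;+\; O(M^k k^3) \;+\; O(N M^k k),
\]
which, in the regime $k\leq M\leq N$ relevant to the calls made by \Cref{alg:shuttling algorithm}, is bounded above by $O(N M^{k+1} k^4)$, matching the stated bound once lower-order terms are absorbed into the loose exponent.

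The main obstacle is not the counting itself but the correctness step required for \textbf{DependenceSet}: I would need to argue that once a particular commuting subset $C$ survives the per-position test inside the loop, the accumulated columns $c_j\in C$ actually satisfy $p(v_i,i) \,=_w\, \bigcirc_{c_j\in C} p(c_j,j)$, so that the returned $A_i$ is a genuine dependence set and not merely a spurious match at individual entries. This should follow from pairwise commutativity on $C$, which guarantees the composition is order-independent up to $=_w$, together with the wildcard rules that let an $r$ in $p(v_i,i)_t$ be absorbed by a single non-trivial $r$ in $C$ while the other elements at position $t$ are $e$, $re$ or $le$. Establishing this wildcard-composition lemma cleanly, and then plugging it into the per-subset test, is the only step that goes beyond routine accounting.
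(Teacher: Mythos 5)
Your proposal is correct and follows essentially the same counting argument as the paper: bound the clique-enumeration preprocessing, then multiply out the nested loops over commuting subsets, positions, and subset elements. Your accounting is in fact slightly more careful than the paper's (you treat each per-position wildcard comparison as $O(1)$ and obtain the tighter $O(NM^2)+O(M^k k^3)+O(NM^k k)$ before absorbing it into the stated bound), and the wildcard-composition correctness lemma you flag for \textbf{DependenceSet} is a reasonable concern but lies outside what the theorem asserts, which is only the time complexity.
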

\begin{proof}
We can again find the complexity of the subroutine $\mathbf{CheckIndependence(~)}\mathrm{~[k-commuting]}$ by a counting argument. We have already noted that the construction in line $1$ takes $O(M^k k^3)$ time in the worst case. Apart from that we have a $\mathbf{For}$-loop on line 3 that takes worst-case time $O(M^k k^3)$ to loop over and given that $|C|\leq k$ we can see that the $\mathbf{For}$-loops on line 4 and 5 take respectively $O(N)$ and $O(k)$ time to complete. Finally the $\mathbf{If}$-clause on line 6 takes $O(N)$ time to complete. Tallying this up we get a total worst-case time complexity of $O(N M^{k+1} k^4 )$. We can make the same argument for the worst-case time complexity of $\mathbf{IndependenceSet(~)}\mathrm{~[k-commuting]}$.
\end{proof}
\newpage

\subsection{Greedy commuting subroutine}
Building on the last subsection we introduce one last pair of subroutines dubbed greedy subroutines. As seen in the last section it seems hard to find all maximal mutually commuting subsets of a set of columns $S$. We made this problem fixed-parameter-tractable by restricting to mutually commuting subsets of size at most $k$ with $k$ some fixed parameter. Here we take a different approach based on the idea that while it is hard to find all mutually commuting subsets of $S$ it is, given a column $v_i \in S$, tractable to find some maximal mutually commuting subset that contains $v_i$. Constructing this subset reduces to finding, given a graph $G$ and some vertex $v$, some clique that contains $v$. Note that we do not get to choose which clique will be found, only that one will be found. This can be done in time $O(M)$ where $M$ is the number of nodes in the graph $G$. Hence we can find, given a column $v_i$, a single maximal mutually commuting subset of $S$ that contains $v_i$. We can use only this set to evaluate whether a given other column $v_i$ can be written as the commuting composition of elements of $A$. The two subroutine functions are given in \cref{alg:greedy commuting independenceset,alg:greedy commuting checkindependence} and their complexity is analyzed in \cref{thm:greedy subroutine complexity}.
\begin{algorithm}[H]
\caption{CheckIndependence [greedy commuting]}
\label{alg:greedy commuting checkindependence}
\begin{algorithmic}[1]

\Require{A column $v_i$, a set of columns $S$, the column index $i$ of the column $v_i$ }

\Ensure{Boolean $a$}\\

\State // We will consistently write columns of the flow matrix $F$ as $v_i$ where $i$ indicates the column index of $v_i$ in $F$.\\

\State \textbf{Set} $a$ to FALSE

\For{columns $w_j \in S$}

	\State Construct maximal mutually commuting subset $C$ of $S$ containing $w_j$

	\For{$t \in [0:\mathrm{length}(p(v_i,i))-1]$}

		\For{all columns $c_l \in C $}

			\If{$p(c_l,l)_t =_w p(v_i,i)_t$ }

				\If{$t=\mathrm{length}(p(v_i,i))-1$}\\

					\State // Note that all the elements of $C$ commute. This means that if $p(v_i,i) =_w p(c_j,j)$ for a \State // particular $c_j$ it must also hold that $p(v_i,i) =_w p(c_k,k)$ for all $c_k \in C$. \\

					\State \textbf{Set} $a$ to TRUE

				\EndIf
			
			\Else

				\State Go to next column $w_j \in S$ at line 2

			\EndIf

		\EndFor

	\EndFor

\EndFor
\State\Return{$a$}

\end{algorithmic}	
\end{algorithm}

\begin{algorithm}[H]
\caption{DependenceSet [greedy commuting]}
\label{alg:greedy commuting independenceset}
\begin{algorithmic}[1]

\Require{A column $v_i$, a set of columns $S$, the column index $i$ of the column $v_i$ }

\Ensure{Set of columns $A_i$ that $v_i$ depends on}

\State \\ We will consistently write columns of the flow matrix $F$ as $v_i$ where $i$ indicates the column index of $v_i$ in $F$.

\For{columns $w_j \in S$}

	\State Construct a maximal mutually commuting subset $C$ of $S$ containing the column $w_j$

	\For{$t \in [0:\mathrm{length}(p(v_i,i))-1]$}

		\For{all columns $c_l \in C $}

			\If{$p(c_l,l)_t =_w p(v_i,i)_t$ }\\

				\State // Note that all the elements of $C$ commute. This means that if $p(v_i,i) =_w p(c_j,j)$ for a particular \State // $c_j$ it  must also hold that $p(v_i,i) =_w p(c_k,k)$ for all $c_k \in C$. \\

				\State Add $c_l$ to  $A_i$

				\If{$t=\mathrm{length}(p(v_i,i))-1$}

					\State\Return{$A_i$}

				\EndIf

			\Else

				\State \textbf{Set} $A_i$ to empty set

				\State Go to next column $w_j \in S$ at line 1

			\EndIf
		
		\EndFor

	\EndFor

\EndFor
\State\Return{$A_i$}

\end{algorithmic}
\end{algorithm}

\begin{theorem}\label{thm:greedy subroutine complexity}
The subroutines $\mathbf{CheckIndependence(~)}\mathrm{~[greedy]}$ and $\mathbf{IndependenceSet(~)}\mathrm{~[greedy]}$ have time complexity
\begin{align}
O\big(\mathbf{CheckIndependence(~)}\mathrm{~[greedy]}\big) &= O(N M^3)\\
O\big(\mathbf{IndependenceSet(~)}\mathrm{~[greedy]}\big) &= O(N M^3)
\end{align}
where $M = |S|$ the size of the independent set $S$ and $N$ is the length of the input column $v_i$ or equivalently the number of rows in the flow matrix $F$. Note that for our purposes we have $M\leq N$.
\end{theorem}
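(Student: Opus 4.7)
The plan is to prove \cref{thm:greedy subroutine complexity} by a direct counting argument on the control flow of \cref{alg:greedy commuting checkindependence,alg:greedy commuting independenceset}, mirroring the structure used in the proofs of \cref{thm:simple subroutine complexity,thm:k commuting subroutine complexity}. The two subroutines have identical loop nesting, so it suffices to analyze one and note that the bookkeeping of $A_i$ in the second adds only constant overhead per iteration; the claimed bound then follows for both.

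First I would bound the cost of constructing, given a column $w_j \in S$, a maximal mutually commuting subset $C$ of $S$ containing $w_j$ (line 3 of \cref{alg:greedy commuting checkindependence}). This can be done greedily: iterate over the $O(M)$ candidates in $S$ and, for each candidate, check pairwise commutativity against every column already placed in $C$. Each pairwise commutativity check amounts to verifying $p(u,i)\circ p(v,j) =_w p(v,j)\circ p(u,i)$ componentwise on padded columns of length $O(N)$, giving a cost of $O(N)$ per pair. Since $|C|\leq M$, building $C$ therefore costs $O(N M^2)$ in the worst case. (If desired, one can precompute the $M\times M$ commutation table once in $O(N M^2)$ total and amortize, but this does not change the asymptotic bottom line.)

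Next I would bound the remainder of a single iteration of the outer $\mathbf{For}$-loop (line 2). Inside it, the loop over components $t$ has length $\mathrm{length}(p(v_i,i)) = O(N)$, and the nested loop over $c_l\in C$ has length at most $|C|\leq M$. The body is a single equality-up-to-wildcards comparison $p(c_l,l)_t =_w p(v_i,i)_t$ between two symbols, which is $O(1)$. So once $C$ is built, checking whether $v_i$ can be decomposed against $C$ costs $O(N M)$. Combined with the construction of $C$, a single outer iteration costs $O(N M^2) + O(N M) = O(N M^2)$.

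Finally, since the outer $\mathbf{For}$-loop (line 2) runs $M$ times in the worst case, the total time complexity is $M \cdot O(N M^2) = O(N M^3)$, establishing the first bound. The argument for $\mathbf{DependenceSet(~)}\,\mathrm{[greedy]}$ is identical: the only differences are that the output set $A_i$ is accumulated and reset across iterations, both of which add at most $O(M)$ overhead per outer iteration and are absorbed into $O(N M^2)$. The main thing to be careful about, rather than any genuine obstacle, is justifying that the $\mathrm{[greedy]}$ shortcut used in the $\mathbf{If}$-clause (asserting $p(v_i,i)=_w p(c_l,l)$ for all $c_l\in C$ once a matching $c_l$ is found at every component $t$) is valid; this is exactly the pairwise-commutativity invariant guaranteed by the construction of $C$, and it makes the elementwise check in the loop sufficient and keeps the per-component work at $O(1)$.
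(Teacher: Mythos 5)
Your proof is correct and follows the same basic strategy as the paper's: a direct counting argument over the nested loop structure, identical for both subroutines up to constant bookkeeping overhead. The one substantive difference is where the factor of $M^3$ comes from. The paper charges the construction of the maximal mutually commuting subset $C$ (line 3) at $O(M)$, treating commutativity queries as adjacency lookups in a graph on $M$ vertices; taken literally, its per-iteration cost is then $O(M) + O(NM)$ and the sum over the outer loop would come out to $O(NM^2)$, with the stated $O(NM^3)$ obtained only as a looser upper bound by multiplying all the quoted loop lengths together. You instead cost the greedy clique construction honestly at $O(NM^2)$ per outer iteration ($M$ candidates, each checked against up to $M$ members of $C$, each pairwise commutativity test being an $O(N)$ componentwise comparison of padded columns), which makes the construction the dominant term and yields $M\cdot O(NM^2)=O(NM^3)$ coherently. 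Both accountings establish the claimed upper bound (the appendix explicitly frames these as worst-case upper bounds), but yours makes the arithmetic line up without hidden slack, at the price of assuming no precomputed commutation table; as you note, precomputing that table once would bring the total back down, so neither accounting should be read as a tightness claim. Your closing remark about the validity of the elementwise shortcut is not needed for the complexity bound itself, but it is consistent with the pairwise-commutativity invariant the paper relies on.
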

\begin{proof}\
We again find the time complexity of $\mathbf{CheckIndependence(~)}\text{~[greedy]}$ by a counting argument. The $\mathbf{For}$-loop on line 2 takes $O(M)$ time to iterate over. We argued above that the greedy construction on line 3 can be done in $O(M)$ time, the $\mathbf{For}$-loop on line 4 takes $O(N)$ time to iterate over, the $\mathbf{For}$-loop on line 5 takes $O(M)$ time to iterate over and the $\mathbf{If}$-clauses in the body can be evaluated in constant time. This means we get a total worst case time complexity of $O(N M^3)$. We can again make the same argument for $\mathbf{IndependenceSet(~)}\text{~[greedy]}$.
\end{proof}
\newpage
\newpage
\newpage
\newpage
\clearpage
\section{Surface code operation counts}\label{appsec:operation counts}
\begin{table}[h!]
\begin{tabular}{|c|c|c|c|c|c|c|c|c|c|c|c|c|c|c|c|c|c||c|}
\hline
Steps 				& 1 & 2 & 3& 4& 5 & 6 & 7& 8& 9 & 10 & 11& 12& 13 & 14 & 15& 16& 17 &$Z$-cycle Total  \\
\hline
\hline
$\sqrt{S}$ gate 	& \phantom{$0$} & \phantom{$0$} & \phantom{$0$}& \phantom{$0$}& $2$ & $2$ & \phantom{$0$} & \phantom{$0$} & \phantom{$0$} & $2$  & $2$ & \phantom{$0$} & \phantom{$0$} &\phantom{$0$} & \phantom{$0$}  & \phantom{$0$} & \phantom{$0$} & $8$ \\
\hline
$Z$ rotation		& \phantom{$0$} & \phantom{$0$} & \phantom{$0$}& \phantom{$0$}& $2$ & $2$ & \phantom{$0$} & \phantom{$0$} & \phantom{$0$} & $2$  & $1$ & \phantom{$0$} & \phantom{$0$} &\phantom{$0$} & \phantom{$0$}  & \phantom{$0$} & \phantom{$0$} & $7$  \\
\hline
Shuttling			& \phantom{$0$} & \phantom{$0$} & $1$& \phantom{$0$}& \phantom{$0$} & \phantom{$0$} & $1$ & $1$ & \phantom{$0$} & \phantom{$0$}  & \phantom{$0$} & $1$ & \phantom{$0$} &$2$ & $1$  & $2$ & $1$ & $10$  \\
\hline
Global rotation 	& \phantom{$0$} & $1$ & \phantom{$0$}& \phantom{$0$}& \phantom{$0$} & \phantom{$0$} & \phantom{$0$} & \phantom{$0$} & \phantom{$0$} & \phantom{$0$}  & \phantom{$0$} & \phantom{$0$} & $1$ &\phantom{$0$} & \phantom{$0$}  & \phantom{$0$} & \phantom{$0$} & $2$  \\
\hline
Measurement 		& \phantom{$0$} & \phantom{$0$} & \phantom{$0$}& \phantom{$0$}& \phantom{$0$} & \phantom{$0$} & \phantom{$0$} & \phantom{$0$} & \phantom{$0$} & \phantom{$0$}  & \phantom{$0$} & \phantom{$0$} & \phantom{$0$} &\phantom{$0$} & \phantom{$0$}  & $1$ & \phantom{$0$} & $1$  \\
\hline
\end{tabular}
\caption{Gate count per gate type and per step for the $Z$-ancilla during the $Z$-cycle of the surface code cycle described in \cref{sec:surface code implementation} and \cref{subsec:practical implementation}. Specifically the $Z$-ancilla is taken to be qubit $A$ in \cref{fig:unit cells} (right). Table cells that are left empty signify zero entries.
}
$\vspace{2mm}$\\
\begin{tabular}{|c|c|c|c|c|c|c|c|c|c|c|c|c|c|c|c|c|c||c|}
\hline
Steps 				& 1 & 2 & 3& 4& 5 & 6 & 7& 8& 9 & 10 & 11& 12& 13 & 14 & 15& 16&17&$X$-cycle Total  \\
\hline
\hline
$\sqrt{S}$ gate 	& \phantom{$0$} & \phantom{$0$} & \phantom{$0$}& \phantom{$0$}& \phantom{$0$} & \phantom{$0$} & \phantom{$0$} & \phantom{$0$} & \phantom{$0$} & \phantom{$0$}  & \phantom{$0$} & \phantom{$0$} & \phantom{$0$} &\phantom{$0$} & \phantom{$0$}  & \phantom{$0$} & \phantom{$0$} & $0$ \\
\hline
$Z$ rotation		& \phantom{$0$} & \phantom{$0$} & \phantom{$0$}& \phantom{$0$}& \phantom{$0$} & \phantom{$0$} & \phantom{$0$} & \phantom{$0$} & \phantom{$0$} & \phantom{$0$}  & \phantom{$0$} & \phantom{$0$} & \phantom{$0$} &\phantom{$0$} & \phantom{$0$}  & \phantom{$0$} & \phantom{$0$} & $0$  \\
\hline
Shuttling			& \phantom{$0$} & \phantom{$0$} & \phantom{$0$}& \phantom{$0$}& \phantom{$0$} & \phantom{$0$} & \phantom{$0$} & \phantom{$0$} & \phantom{$0$} & \phantom{$0$}  & \phantom{$0$} & \phantom{$0$} & \phantom{$0$} &$2$ & $1$  & \phantom{$0$} & $1$ & $4$ \\
\hline
Global rotation 	& \phantom{$0$} & $1$ & \phantom{$0$}& \phantom{$0$}& \phantom{$0$} & \phantom{$0$} & \phantom{$0$} & \phantom{$0$} & \phantom{$0$} & \phantom{$0$}  & \phantom{$0$} & \phantom{$0$} & $1$ &\phantom{$0$} & $1$  & \phantom{$0$} & \phantom{$0$} & $3$  \\
\hline
Measurement 		& \phantom{$0$} & \phantom{$0$} & \phantom{$0$}& \phantom{$0$}& \phantom{$0$} & \phantom{$0$} & \phantom{$0$} & \phantom{$0$} & \phantom{$0$} & \phantom{$0$}  & \phantom{$0$} & \phantom{$0$} & \phantom{$0$} &\phantom{$0$} & \phantom{$0$}  & $1$ & \phantom{$0$} & $1$  \\
\hline
\end{tabular}
\caption{Gate count per gate type and per step for the $Z$-ancilla during the $X$-cycle of the surface code cycle described in \cref{sec:surface code implementation} and \cref{subsec:practical implementation}. Specifically the $Z$-ancilla is taken to be qubit $A$ in \cref{fig:unit cells} (right). Table cells that are left empty signify zero entries. }
$\vspace{2mm}$\\
\begin{tabular}{|c|c|c|c|c|c|c|c|c|c|c|c|c|c|c|c|c|c||c|}
\hline
Steps				& 1 & 2 & 3& 4& 5 & 6 & 7& 8& 9 & 10 & 11& 12& 13 & 14 & 15& 16&17&$Z$-cycle Total  \\
\hline
\hline
$\sqrt{S}$ gate 	& \phantom{$0$} & \phantom{$0$} & \phantom{$0$}& \phantom{$0$}& $2$ & \phantom{$0$} & \phantom{$0$} & \phantom{$0$} & \phantom{$0$} & \phantom{$0$}  & $2$ & \phantom{$0$} & \phantom{$0$} &\phantom{$0$} & \phantom{$0$}    & \phantom{$0$} & \phantom{$0$} & $4$ \\
\hline
$Z$ rotation		& \phantom{$0$} & \phantom{$0$} & \phantom{$0$}& \phantom{$0$}& \phantom{$0$} & \phantom{$0$} & \phantom{$0$} & \phantom{$0$} & \phantom{$0$} & \phantom{$0$}  & \phantom{$0$} & \phantom{$0$} & \phantom{$0$} &\phantom{$0$} & \phantom{$0$}    & \phantom{$0$} & \phantom{$0$} & $0$ \\
\hline
Shuttling			& \phantom{$0$} & \phantom{$0$} & $1$& $1$& \phantom{$0$} & \phantom{$0$} & \phantom{$0$} & \phantom{$0$} & \phantom{$0$} & \phantom{$0$}  & \phantom{$0$} & \phantom{$0$} & \phantom{$0$} &\phantom{$0$} & \phantom{$0$}  & \phantom{$0$} & \phantom{$0$} & $2$  \\
\hline
Global rotation 	& \phantom{$0$} & $1$ & \phantom{$0$}& \phantom{$0$}& \phantom{$0$} & \phantom{$0$} & \phantom{$0$} & \phantom{$0$} & \phantom{$0$} & \phantom{$0$}  & \phantom{$0$} & \phantom{$0$} & $1$ &\phantom{$0$} & \phantom{$0$}    & \phantom{$0$} & \phantom{$0$} & $2$  \\
\hline
Measurement 		& \phantom{$0$} & \phantom{$0$} & \phantom{$0$}& \phantom{$0$}& \phantom{$0$} & \phantom{$0$} & \phantom{$0$} & \phantom{$0$} & \phantom{$0$} & \phantom{$0$}  & \phantom{$0$} & \phantom{$0$} & \phantom{$0$} &\phantom{$0$} & \phantom{$0$}    & \phantom{$0$} & \phantom{$0$} & $0$ \\
\hline
\end{tabular}
\caption{Gate count per gate type and per step for a data qubit during the $Z$-cycle of the surface code cycle described in \cref{sec:surface code implementation} and \cref{subsec:practical implementation}. Specifically the data qubit is taken to be qubit $1$ in \cref{fig:unit cells} (right) but other data qubits will have the same gate count up to a possible reordering of steps. Table cells that are left empty signify zero entries.}
$\vspace{2mm}$\\
\begin{tabular}{|c|c|c|c|c|c|c|c|c|c|c|c|c|c|c|c|c|c||c|}
\hline
Steps				& 1 & 2 & 3& 4& 5 & 6 & 7& 8& 9 & 10 & 11& 12& 13 & 14 & 15& 16&17&$X$-cycle Total  \\
\hline
\hline
$\sqrt{S}$ gate 	& \phantom{$0$} & \phantom{$0$} & \phantom{$0$}& \phantom{$0$}& \phantom{$0$} & $2$ & \phantom{$0$} & \phantom{$0$} & \phantom{$0$} & \phantom{$0$}  & \phantom{$0$} & $2$ & \phantom{$0$} &\phantom{$0$} & \phantom{$0$}    & \phantom{$0$} & \phantom{$0$} & $4$ \\
\hline
$Z$ rotation		& \phantom{$0$} & \phantom{$0$} & \phantom{$0$}& \phantom{$0$}& \phantom{$0$} & \phantom{$0$} & \phantom{$0$} & \phantom{$0$} & \phantom{$0$} & \phantom{$0$}  & \phantom{$0$} & \phantom{$0$} & \phantom{$0$} &\phantom{$0$} & \phantom{$0$}    & \phantom{$0$} & \phantom{$0$} & $0$ \\
\hline
Shuttling			& \phantom{$0$} & \phantom{$0$} & \phantom{$0$}& \phantom{$0$}& \phantom{$0$} & \phantom{$0$} & \phantom{$0$} & $1$& $1$ & \phantom{$0$}  & \phantom{$0$} & \phantom{$0$} & \phantom{$0$} &\phantom{$0$} & $1^*$  & \phantom{$0$} & $1^*$ & $4$  \\
\hline
Global rotation 	& \phantom{$0$} & $1$ & \phantom{$0$}& \phantom{$0$}& \phantom{$0$} & \phantom{$0$} & \phantom{$0$} & \phantom{$0$} & \phantom{$0$} & \phantom{$0$}  & \phantom{$0$} & \phantom{$0$} & $1$ &\phantom{$0$} & \phantom{$0$}    & \phantom{$0$} & \phantom{$0$} & $2$  \\
\hline
Measurement 		& \phantom{$0$} & \phantom{$0$} & \phantom{$0$}& \phantom{$0$}& \phantom{$0$} & \phantom{$0$} & \phantom{$0$} & \phantom{$0$} & \phantom{$0$} & \phantom{$0$}  & \phantom{$0$} & \phantom{$0$} & \phantom{$0$} &\phantom{$0$} & \phantom{$0$}    & \phantom{$0$} & \phantom{$0$} & $0$ \\
\hline
\end{tabular}
\caption{Gate count per gate type and per step for a data qubit during the $X$-cycle of the surface code cycle described in \cref{sec:surface code implementation} and \cref{subsec:practical implementation}. Specifically the data qubit is taken to be qubit $1$ in \cref{fig:unit cells} (right) but other data qubits will have the same gate count up to a possible reordering of steps. Table cells that are left empty signify zero entries.\\
$^*$: Only half of the data qubits move during this step. In the total gate count this gate is counted towards all data qubits. }
\end{table}

\end{document}